\documentclass[11pt]{article}

\usepackage{mathtools}
\usepackage[margin=1in]{geometry}
\usepackage{amsmath,amssymb,amsfonts,mathrsfs}
\usepackage{bm}
\usepackage{algpseudocode}
\usepackage{enumitem}
\usepackage{amsthm}
\usepackage{subfigure}
\usepackage{float}
\usepackage{stmaryrd}
\usepackage{fancybox}
\usepackage{bigstrut,array,multirow}
\usepackage{here}
\usepackage{xcolor}
\usepackage{graphicx}
\usepackage{tabularx}
\usepackage{tikz}
\usetikzlibrary{arrows.meta}
\usepackage[normalem]{ulem}
\usepackage[backend=biber,style=alphabetic,sorting=anyt,maxcitenames=3,mincitenames=1,maxbibnames=99]{biblatex}
\usepackage{hyperref}
\hypersetup{
    colorlinks=true,
    linkcolor=blue,
    citecolor=blue,
    urlcolor=blue
}
\usepackage{cleveref}
\usepackage{thmtools}
\usepackage{thm-restate}

\graphicspath{{figures/}}

\newtheorem{theorem}{Theorem}
\newtheorem{corollary}[theorem]{Corollary}

\newtheorem{claim}[theorem]{Claim}
\newtheorem{observation}[theorem]{Observation}
\newtheorem{lemma}[theorem]{Lemma}
\newtheorem{definition}[theorem]{Definition}

\newcommand{\dist}{\textnormal{dist}}

\newcommand{\pset}{{\mathcal{P}}}

\newcommand{\mset}{{\mathcal M}}

\newcommand{\cro}{{\mathsf{cr}}}
\newcommand{\pot}{{\mathsf{pot}}}
\newcommand{\eps}{{\varepsilon}}

\newcommand{\set}[1]{\left\{ #1 \right\}}

\newcommand{\myparskip}{3pt}
\parskip \myparskip
\begin{document}

\begin{titlepage}
\title{Paths and Intersections: Repelling Pairs}
\author{Yu Chen\thanks{National University of Singapore, Singapore. Email: {\tt yu.chen@nus.edu.sg}.}
\and Pavlo Pylyavskyy\thanks{University of Minnesota Twin Cities, MN, USA. Email: {\tt ppylyavs@umn.edu}. This work was supported by a grant from Simons Foundation International SFI-MPS-SFM-00011393.}
\and Zihan Tan\thanks{University of Minnesota Twin Cities, MN, USA. Email: {\tt ztan@umn.edu}.}}

\maketitle
\thispagestyle{empty}
\begin{abstract}
We study two inverse problems for shortest-path metrics of Okamura-Seymour instances: recognizing metrics realizable by outerplanar graphs, and computing
minimum-edge Okamura-Seymour realizations. We introduce the notion of \emph{repelling
pairs}, a metric certificate that the shortest paths corresponding to two
terminal pairs must be vertex-disjoint in every realization. Our central
structural result is that, for an Okamura-Seymour metric with a prescribed cyclic order, a terminal path structure in an Okamura-Seymour instance can be
realized by nonnegative edge lengths if and only if the paths assigned to
every repelling pair are vertex-disjoint. 

Building on the notion of repelling pairs, we give algorithmic answers to the inverse problems. First, we design an algorithm that, given a metric, decides in polynomial time whether or it admits an outerplanar
realization and constructs one when one exists. 
Second, given an Okamura-Seymour metric,
we efficiently compute a canonical medial template whose crossing number equals the minimum number of edges in any Okamura-Seymour realization. The minimum-edge graph structures are exactly the primal graphs of arrangements of this template, and each can be assigned realizing edge lengths in polynomial time.

\end{abstract}

\end{titlepage}

\tableofcontents
\newpage

\section{Introduction}\label{sec:introduction}

A fundamental problem in metric graph theory is to reconstruct a graph
from the observed shortest-path distances on a designated set of vertices called \emph{terminals}.  Formally, given a
metric $D$ on a terminal set $T$ and a family $\mathcal G$ of graphs, the
\emph{distance realization problem} asks whether there is a nonnegatively
edge-weighted graph $G\in\mathcal G$, possibly containing additional Steiner
vertices, whose shortest-path metric on $T$ is exactly $D$.  Distance
realization goes back to Hakimi and Yau~\cite{HakimiYau65}, and it has applications in 
phylogenetics, chemistry, hierarchical classification, network
tomography~\cite{dress2012basic,janezic2015graph,gordon1987review,chung2001distance}, and so on.
Classical graph families studied in the distance realization problem include trees, ultrametrics, and cactus graphs~\cite{pereira1969note,buneman1974note,dress1984trees,gordon1987review,hayamizu2020recognizing}.

In this paper, we study shortest-path metrics of \emph{Okamura-Seymour instances}.
An Okamura-Seymour instance (or \emph{an OS instance}) is a planar graph embedded
in a disk with all terminals on the boundary.  This graph class originates in
the study of multicommodity flows in planar graphs~\cite{okamura1981multicommodity}.
If the cyclic order of the terminals is prescribed, distance realizability by an OS
instance is characterizable by an elegant four-point condition: for all quadruples
$a,b,c,d$ appearing on the boundary in this cyclic order,
\begin{equation}\label{eq:kalmanson}
  D(a,c)+D(b,d)
  \ge
  \max\bigl\{
    D(a,b)+D(c,d),
    D(a,d)+D(b,c)
  \bigr\}.
\end{equation}
Conversely, every metric satisfying~\eqref{eq:kalmanson} with respect to the prescribed
cyclic order has an OS realization
\cite{hurkens1988tidy,ChangO20,schrijver2003combinatorial}.  
We also call it a \emph{Kalmanson metric}; the terminology comes from
Kalmanson's work on edge-convex tours~\cite{kalmanson1975}, and these metrics
also arise naturally in split-decomposition and circular-network theory
\cite{bandelt1992,forcey2023}.

The characterization above makes OS metrics a natural starting point for
further questions in distance realization. A natural next graph
family to study is outerplanar graphs: they include trees
and cactus graphs but are a subclass of OS instances.
A second natural direction is to remain within the class of OS instances
and ask for the simplest realization of a given metric. Specifically, we ask for a realization
with the minimum possible number of edges, and more generally which features
of an edge-minimum realization are determined by the terminal distances. 

Although outerplanar recognition and minimum OS realization ask different
questions, they are connected by a common structural viewpoint. Both our
proofs and our algorithms use a recent approach to analyzing graph structure
by viewing a graph through a family of paths and the pattern of their
intersections~\cite{chen2025path,li2025paths}. In both settings, the given metric constrains which shortest paths may intersect, and the same elementary obstruction
involving two terminal pairs becomes the key organizing principle.

\subsection{Repelling pairs}\label{subsec:intro-repelling}

Suppose that an $s$-$t$ path and a $u$-$v$ path intersect.  Switching the two
paths at an intersection produces paths connecting their four endpoints
according to the two alternative pairings.  Consequently, if the original
paths are shortest, then the following inequalities must hold.
\[
  \dist(s,t)+\dist(u,v)\ge \dist(s,u)+\dist(t,v),
\and\text{ }
  \dist(s,t)+\dist(u,v)\ge \dist(s,v)+\dist(t,u).
\]
This motivates our central definition. Given a metric $D$, two terminal pairs $(s,t)$ and
$(u,v)$ \emph{repel} if
\[
  D(s,t)+D(u,v)
  <
  \max\bigl\{
    D(s,u)+D(t,v),
    D(s,v)+D(t,u)
  \bigr\}.
\]
Thus, repelling pairs provide metric certificates of forced path disjointness:
shortest paths connecting repelling pairs cannot intersect in any graph realizing
$D$\footnote{Here we allow repeated terminal pairs such as $(t,t)$; the corresponding
path is the trivial path at $t$.}.

We define a \emph{shortest path structure} in an OS instance to be a set $\pset$ that contains, for every pair $s,t$ of terminals, a path $P_{s,t}$, such that the
intersection of any two paths in $\pset$ is either empty or a common subpath of
both.  We call the shortest path structure \emph{good with respect to $D$}, or simply \emph{$D$-good}, if
\[
  P_{s,t}\cap P_{u,v}=\varnothing
  \qquad
  \text{whenever $(s,t)$ and $(u,v)$ repel.}
\]
Our main structural theorem says that these obvious necessary conditions are
also sufficient.

\begin{theorem}
\label{lem: 2-path condition}
For a Kalmanson metric $D$ on $T$, an Okamura-Seymour instance $(G,T)$ and a shortest path structure $\pset=\set{P_{t,t'}}_{t,t'\in T}$, the following are equivalent:
\begin{itemize}
    \item There exist nonnegative edge lengths $\set{\ell_e}_{e\in E(G)}$, such that
    for every pair $t,t'$ of terminals,
    \label{prop_2}
    \begin{itemize}
        \item $P_{t,t'}$ is indeed the shortest $t$-$t'$ path in $G$; and
        \item the length of $P_{t,t'}$ equals $D(t,t')$.
    \end{itemize}
    \item For every repelling pair $(t_1,t_2)$ and $(t'_1,t'_2)$, the paths $P_{t_1,t_2}$ and $P_{t'_1,t'_2}$ are disjoint.
\end{itemize}
\end{theorem}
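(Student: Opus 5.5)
Necessity is the easy direction. Suppose lengths $\ell$ exist making each $P_{t,t'}$ a shortest path of length $D(t,t')$, and suppose $P_{s,t}$ and $P_{u,v}$ share a vertex $x$. Splitting both paths at $x$ gives
\[
D(s,t)+D(u,v)=\bigl(\dist(s,x)+\dist(x,t)\bigr)+\bigl(\dist(u,x)+\dist(x,v)\bigr).
\]
By the triangle inequality, the right-hand side is at least $D(s,u)+D(t,v)$ and at least $D(s,v)+D(t,u)$. So the two pairs do not repel. This uses only that the lengths realize $D$ on $T$.

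For sufficiency, I would write down a linear system in the unknowns $\ell_e\ge 0$ and solve it by an explicit construction. The conditions are:
\begin{itemize}
\item for every pair $t,t'$, we need $\ell(P_{t,t'})=D(t,t')$;
\item for every pair $t,t'$ and every other $t$-$t'$ path $Q$, we need $\ell(Q)\ge D(t,t')$.
\end{itemize}

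The plan is to use the disk embedding together with the fact that paths in $\pset$ pairwise meet in a common subpath, or not at all. These two facts make the paths behave like pseudolines in the disk, so each terminal pair separates the boundary into two arcs.

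For each edge $e$, let $\mathcal{S}(e)$ be the set of unordered pairs $\{t,t'\}$ with $e\in P_{t,t'}$. I would assign weights through a split-decomposition-style formula. Write $T$ in cyclic order as $t_1,\dots,t_n$. Kalmanson metrics have a circular split decomposition
\[
D=\sum_{i<j} \alpha_{ij}\,\delta_{\{t_i,\dots,t_{j-1}\}},
\]
where $\delta_S$ is the cut metric of a circular split $S$ and $\alpha_{ij}\ge 0$ is a four-point expression such as
\[
\alpha_{ij}=\tfrac12\bigl(D(t_i,t_{j-1})+D(t_{i-1},t_j)-D(t_i,t_j)-D(t_{i-1},t_{j-1})\bigr),
\]
which is nonnegative by \eqref{eq:kalmanson}.

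Each split with $\alpha_{ij}>0$ should be realized as a "cut curve" in the disk. The curve runs from the boundary gap between $t_{i-1}$ and $t_i$ to the boundary gap between $t_{j-1}$ and $t_j$. It should cross every path $P_{s,s'}$ with $s,s'$ on opposite sides exactly once, and cross no other path of $\pset$.

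Then I would set $\ell_e=\sum \alpha_{ij}$ over the curves that cross $e$. Under this choice each $P_{t,t'}$ automatically has length $\sum_{\text{splits separating } t,t'}\alpha=D(t,t')$. Moreover, any $t$-$t'$ path $Q$ must cross every separating curve, since each curve cuts the disk. Hence $\ell(Q)\ge D(t,t')$, which gives shortestness for free.

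The main obstacle is constructing these curves, i.e.\ showing that for each positive split the set of edges "between" the two gaps forms a valid cut in the dual of $G$. Concretely, the curve must avoid every path connecting two terminals on the same side. Positivity $\alpha_{ij}>0$ is exactly the statement that the pairs $(t_{i-1},t_i)$ and $(t_{j-1},t_j)$ repel. The first thing I would try is to show, by a case analysis using \eqref{eq:kalmanson}, that positivity of $\alpha_{ij}$ forces repulsion between every pair $(a,b)$ with $a,b\in\{t_i,\dots,t_{j-1}\}$ and every pair $(c,d)$ with $c,d$ on the complementary arc. By the hypothesis, all such "same-side" paths on one side are then disjoint from those on the other side. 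Their union on each side is a connected region touching its own boundary arc, so a separating curve through the faces between the two regions exists.

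The delicate point is verifying the repulsion claim, or handling degenerate pairs such as $(t,t)$ and paths with zero-length overlaps. If direct repulsion fails for some quadruples, I would instead induct on $n$, peeling off one positive split at a time and updating $D$ while keeping it Kalmanson.
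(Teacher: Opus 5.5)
Your necessity argument is fine, and the circular-split route differs genuinely from the paper's Farkas / flow-switching / chord-uncrossing proof. As written, though, it has a gap at the one step that carries the content.

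Your lengths give $\ell(P_{s,s'})=D(s,s')$ only if, for every split $S$ with $\alpha_S>0$, each path $P_{s,s'}$ with $s\in S$, $s'\notin S$ is cut by the curve for $S$ at exactly one edge. A path cut at $k$ edges is charged $k\alpha_S$. You state this requirement, but your argument only produces \emph{some} curve through the faces between $U_S=\bigcup_{a,b\in S}P_{a,b}$ and $U_{\bar S}=\bigcup_{c,d\notin S}P_{c,d}$. That is not enough. A crossing path may leave $U_S$, run toward $U_{\bar S}$, turn back without touching either union, and then continue. A separating curve threaded between these turns meets it three times, so $\ell(P_{s,s'})\ge D(s,s')+2\alpha_S$. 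The lower bound $\ell(Q)\ge D(t,t')$ holds for any family of separating cuts, so this single-crossing property is the whole difficulty, and the proposal does not address it.

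The missing idea is to put the cut tight against the single extreme path $P_S:=P_{t_i,t_{j-1}}$. Let $R_S$ be the closed region bounded by $P_S$ and the boundary arc from $t_i$ to $t_{j-1}$. Let $V_S=V(G)\cap R_S$, and let $E_S$ be the set of edges with exactly one end in $V_S$. By planarity, the $V_S$-end of every edge of $E_S$ lies on $P_S$. So if a path of $\pset$ had a vertex outside $R_S$ between two vertices inside, its intersection with $P_S$ would not be a subpath. Hence every path of $\pset$ meets $V_S$ in a contiguous block, which gives three consequences:
\begin{itemize}
\item paths with both ends in $S$ lie inside $V_S$;
\item paths with both ends outside $S$ avoid $V_S$, because they are disjoint from $P_S$ (this needs only that $(c,d)$ repels $(t_i,t_{j-1})$, and the case $c=d$ shows no terminal outside $S$ lies in $R_S$);
\item every crossing path meets $V_S$ in a proper prefix, so it uses exactly one edge of $E_S$.
\end{itemize}
With $\ell_e=\sum_{S:\,e\in E_S}\alpha_S$, both of your verifications then go through, giving explicit lengths without solving an LP.

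Two smaller corrections. First, your formula for $\alpha_{ij}$ has the wrong sign: as written it is $\le 0$ by the Kalmanson inequality. The coefficient is $\tfrac12\bigl(D(t_{i-1},t_{j-1})+D(t_i,t_j)-D(t_{i-1},t_j)-D(t_i,t_{j-1})\bigr)$. Second, $\alpha_{ij}>0$ means that $(t_i,t_{j-1})$ repels $(t_j,t_{i-1})$, not that $(t_{i-1},t_i)$ repels $(t_{j-1},t_j)$.

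Your repulsion claim for all inner versus outer pairs needs no case analysis or induction. Take $a,b\in S$ and $c,d\notin S$ in cyclic order $a,b,c,d$, and expand $D$ in circular splits. This gives $D(a,c)+D(b,d)-D(a,b)-D(c,d)\ge 2\alpha_S>0$, because every circular cut metric satisfies the Kalmanson inequality and $S$ itself contributes $2\alpha_S$.
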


That is, an OS graph realizes $D$ if and only if it contains a $D$-good
shortest path structure.  Theorem~\ref{lem: 2-path condition} cleanly separates the
realization problem into a combinatorial layer and a numerical layer.  To
construct a realization, it is now enough to construct a graph together with a
shortest path structure satisfying the required intersection and disjointness
relations.

\subsection{Recognizing outerplanar metrics}\label{subsec:intro-outerplanar}

Tree metrics and Kalmanson metrics admit elegant four-point characterizations. It is therefore natural to ask whether outerplanar realizability also admits an $O(1)$-point characterization. Our first result rules this out in the strongest possible form.

\begin{theorem}\label{thm: no O(1) point}
For every integer $k\ge 20$, there exists a metric $D$ on $k$ terminals such that $D$ is not outerplanar, while the restriction of $D$ to every proper subset of terminals is outerplanar.
\end{theorem}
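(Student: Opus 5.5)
We will construct, for each $k\ge 20$, an explicit metric $D$ on $k$ terminals that is a ``rigid cycle with one defect''. The plan is to take terminals $t_0,\dots,t_{k-1}$ on a circle and define $D$ as the shortest-path metric of a carefully weighted outerplanar-looking gadget that, however, forces a global inconsistency only visible when all $k$ terminals are present. Concretely, consecutive terminals $t_i,t_{i+1}$ are joined in $D$ so that the pairs $(t_i,t_{i+1})$ and $(t_{i+2},t_{i+3})$ (and similar local quadruples) repel. By Theorem~\ref{lem: 2-path condition} and the fact that an outerplanar realization is an OS instance in which every vertex lies on the outer face, repelling pairs force vertex-disjoint shortest paths. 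We will tune the local distances so that any outerplanar realization must contain, for each $i$, a Steiner ``hub'' $x_i$ adjacent (via shortest paths) to $t_{i-1},t_i,t_{i+1}$. It must also contain chords whose cyclic orientation is forced by local four-point data; Kalmanson's condition~\eqref{eq:kalmanson} fixes the cyclic order.

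The key steps are as follows. First, define $D$ via a cyclic sequence of local gadgets with a ``twist'' parameter. Every window of consecutive terminals admits an outerplanar realization, namely a path-like chain of triangles/4-cycles, but the twist accumulated around the full cycle is nonzero. This is the metric analogue of a Möbius band: locally orientable, globally not. Second, prove non-outerplanarity for the full set. Here we use repelling pairs to show that in any realization the shortest paths $P_{t_i,t_{i+2}}$ must avoid $P_{t_{i+1},t_{i+3}}$-type paths appropriately. This forces each local gadget to be drawn as a specific structure (a chord on one side). Propagating around the cycle, we must get back the same side at $t_0$, which contradicts the twist. Third, for every proper subset $T\setminus\{t_j\}$, cut the cycle at $t_j$. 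We exhibit an explicit outerplanar graph: a chain of gadgets laid along a path, with the twist absorbed at the cut, and verify its distances agree with $D$ via a direct computation, or alternatively via Theorem~\ref{lem: 2-path condition} by exhibiting a $D$-good shortest path structure. Subsets missing more terminals follow by deleting further terminals (restriction of an outerplanar realization stays outerplanar, since terminals can become Steiner vertices). The bound $k\ge 20$ comes from needing enough room, roughly a constant number of terminals per gadget times enough gadgets, so that local windows never see the full twist. Smaller $k$ in the family is handled by padding with extra terminals inside a gadget, e.g. subdividing a long boundary stretch.

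The main obstacle is the second step: rigidity. We must show that outerplanar realizations of the local data are essentially unique up to the binary choice, and that the choices of adjacent gadgets are coupled. This requires a careful case analysis using repelling-pair disjointness and the outer-face condition, namely that no Steiner vertex is hidden inside. I would first try to reduce the analysis to a statement about which terminal-pair shortest paths must share a common subpath (tight four-point equalities force sharing, strict ones force disjointness). This yields a combinatorial ``path intersection pattern'' that determines the outerplanar structure. The contradiction then becomes a parity argument around the cycle.
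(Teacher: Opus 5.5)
Your proposal has a genuine gap: it never defines the metric. The whole statement hinges on exhibiting a concrete $D$, and ``a cyclic sequence of local gadgets with a twist parameter'' is not a construction. As a result, none of your three steps can be checked: not the Kalmanson inequalities, not the explicit realizations of the restrictions, and not the non-realizability.

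Your second step, which you flag as the main obstacle, is not merely unproved; it has no identified mechanism. Once the cyclic order of the terminals is fixed, a path between two boundary terminals of a disk has no ``side'' to choose. The only ordering ambiguity, additive blocks, is shown in Section~\ref{sec: ordering} to be harmless. So it is unclear what local binary choice could propagate around the cycle and produce a M\"obius-type contradiction. Outerplanar realizations also have a lot of freedom (Steiner vertices, zero-length edges, alternative routes). The claim that local realizations are ``essentially unique up to a binary choice'' would therefore need a real rigidity theorem, which you do not supply. Your remark that restrictions to smaller subsets follow from the $(k-1)$-terminal case, by demoting terminals to Steiner vertices, is correct.

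The missing idea is that no twist or rigidity argument is needed, because the obstruction is a forced $K_4$ minor. The paper takes the cycle metric $D(i,i+1)=1$ (indices mod $k$) and $D(i,j)=2$ otherwise.

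\emph{Restrictions.} After deleting any terminal, the survivors lie on a path. A fan realizes the restriction outerplanarly: unit edges join cyclically consecutive survivors, and a single Steiner hub is joined to every survivor by a unit edge.

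\emph{Full set.} The exchange argument of Lemma~\ref{lem: intersecting shortest paths} shows that in \emph{any} realizing graph:
\begin{itemize}
\item the paths $P_{i,i+1}$ are internally disjoint, since an overlap would give a path shorter than $2$ between two non-consecutive terminals, so they close up into a cycle $C$ through all terminals;
\item $P_{i,i+1}$ is internally disjoint from $P_{j,j'}$ whenever $j,j'$ are at distance at least $2$ from $i$.
\end{itemize}
Now choose $x,y,z,w$ pairwise at cyclic distance at least $5$; this is exactly where $k\ge 20$ enters. The paths $P_{x,z}$ and $P_{y,w}$ are either disjoint or meet only at their common midpoint. In either case, $C$ together with these two paths yields a $K_4$ minor. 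Hence $D$ has no outerplanar realization, indeed no $K_4$-minor-free one. This argument does not depend on the terminal order, and it uses neither Theorem~\ref{lem: 2-path condition} nor any analysis of local embeddings.
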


Thus, no bounded-size collection of induced submetrics can characterize outerplanar realizability. This contrasts with the classical four-point characterizations for trees and fixed-order OS metrics. Despite this negative result, the recognition problem remains tractable.

\begin{theorem}\label{thm: main}
There is an algorithm that, given a metric $D$ on $k$ terminals, decides whether there is an outerplanar graph realizing $D$, and constructs one if the answer is yes. Its running time is $O(k^5)$.
\end{theorem}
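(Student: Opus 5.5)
The plan is to reduce recognition to a small number of candidate cyclic orders and then, for each order, to a combinatorial search for an outerplanar graph carrying a $D$-good shortest path structure. Theorem~\ref{lem: 2-path condition} then turns that structure into edge lengths.

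\textbf{Step 1: the cyclic order.} An outerplanar realization is in particular an OS instance; we may assume all terminals lie on the outer face. Hence $D$ must be Kalmanson with respect to the induced cyclic order. The first step is to find all cyclic orders for which $D$ satisfies~\eqref{eq:kalmanson}. I would use the known structure of Kalmanson orders: they are essentially determined by the split decomposition. Concretely, I would build the order incrementally, inserting terminals one by one and keeping the positions consistent with all quadruple conditions. This should take $O(k^4)$ or so, since each of the $O(k^4)$ quadruples is checked. If the compatible orders form a family with only ``flippable'' blocks, I would argue that outerplanar realizability does not depend on the choice among them. Otherwise I would carry the block structure along. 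If no Kalmanson order exists, we output ``no''.

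\textbf{Step 2: deciding realizability for a fixed order.} For a fixed order $t_1,\dots,t_k$, Theorem~\ref{lem: 2-path condition} says we need an outerplanar graph $G$ with all $t_i$ on the outer cycle and a shortest path structure in which repelling pairs get disjoint paths. I would first compute the repelling relation on all pairs of terminal pairs, which takes $O(k^4)$ time. In an outerplanar graph, each path $P_{t_i,t_j}$ runs through the block tree. Two paths whose endpoint pairs cross in the cyclic order must share a cut vertex or a chord. I would design a canonical candidate graph $G^*$: a ``cactus-like'' graph in which each terminal interval is joined through Steiner vertices exactly when no repelling constraint forbids sharing. I would then prove that if any outerplanar realization exists, then $G^*$ (or a contraction of it) admits a $D$-good structure. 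The argument is an uncrossing/merging: take any valid realization and identify vertices, checking that no repelling pair is forced to meet. Checking $D$-goodness of the candidate costs $O(k^2)$ paths times $O(k^2)$ partners, with $O(k)$ path length, giving $O(k^5)$ overall.

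\textbf{Step 3: lengths.} Once the combinatorial structure is found, I would solve for nonnegative edge lengths. Theorem~\ref{lem: 2-path condition} guarantees existence, so a constructive version gives them explicitly, for instance by assigning lengths via a potential/LP that is triangular along the block tree.

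\textbf{Main obstacle.} The main obstacle is Step 2's canonicity claim, namely that a single candidate graph per order suffices. Theorem~\ref{thm: no O(1) point} shows the obstruction is global, so local rules cannot work. I would attack this by induction on the block-cut tree: peel off an ear of the outer cycle between two consecutive cut points, and show that the repelling pairs determine whether that ear is a chord, a shared subpath, or a cycle.
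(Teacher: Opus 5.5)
There is a genuine gap, and it is the step you flag as the main obstacle: Step 2 is not yet a decision procedure. Your algorithm tests one candidate $G^*$ per cyclic order. It can correctly answer ``no'' only if you prove a universality statement: whenever some outerplanar realization with that order exists, the specific graph $G^*$, computed from $D$ alone, also carries a $D$-good structure. You do not define $G^*$ precisely (``joined through Steiner vertices exactly when no repelling constraint forbids sharing'' does not determine which connections are present). You also give no argument for universality, and identifying vertices of an arbitrary realization can easily make repelling paths meet.

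The missing idea is to normalize an arbitrary realization rather than guess one. Any outerplanar realization can be massaged so that:
\begin{itemize}
\item the outer boundary is a simple cycle;
\item terminals have degree $2$ and Steiner vertices have degree $3$;
\item the $t_r$-$t_{r+1}$ shortest path is the boundary segment $S_r$.
\end{itemize}
What remains to choose is a set of channels between segments. After contracting each $S_r$, these channels form a triangulation of a $k$-gon, not a cactus-like object. A terminal path is then encoded by the sequence of segments it visits, and it may visit $S_r$ only if its pair does not repel $(t_r,t_{r+1})$.

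There are exponentially many triangulations, and which one is admissible depends on these global constraints, so a search is needed. The paper runs an interval DP $\Pi[i,j]$ that enforces the channel $S_iS_j$ and guesses the apex $\ell$ of the triangle on it. It requires $\Pi[i,\ell]$ and $\Pi[\ell,j]$ to be feasible. It also requires every pair $(t_a,t_b)$ separated by $\ell$ either not to repel $(t_\ell,t_{\ell+1})$, or to repel neither $(t_i,t_{i+1})$ nor $(t_j,t_{j+1})$. This DP supplies both the certificate of non-realizability and the $O(k^5)$ bound; your proposal has no counterpart.

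Several supporting pieces are also missing.
\begin{itemize}
\item \textbf{From segment sequences to paths.} Segment sequences do not by themselves give a shortest path structure. You must ensure that any two chosen paths meet in a common subpath and that repelling pairs are vertex-disjoint. The paper replaces each channel with three parallel edges and routes each path on the copy determined by which sides of the channel its endpoints lie on. It then verifies the required properties with Observations~\ref{obs:add}--\ref{obs:sub}.
\item \textbf{Choice of order (Step 1).} Enumerating all Kalmanson orders cannot be done explicitly in polynomial time in general: if $D(i,j)=w_i+w_j$, every cyclic order is Kalmanson. The claim that the choice of order does not matter is the real content, and you only say you would argue it. The paper writes the cross distances between the two interchangeable blocks as $\alpha_s+\beta_t$, with $\alpha_s+\alpha_{s'}\ge D(s,s')$ and $\beta_t+\beta_{t'}\ge D(t,t')$. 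It adds an auxiliary terminal $o$, recurses on both sides, and shows by a contraction argument that both submetrics remain outerplanar when $D$ is.
\item \textbf{Computing edge lengths.} Theorem~\ref{lem: 2-path condition} is proved via Farkas' lemma and is only existential, so ``a constructive version'' must actually be supplied. Your ``triangular potential along the block tree'' is unspecified, and in the canonical form the outer boundary is a single cycle, so the block tree is trivial. The paper solves the feasibility LP by the ellipsoid method with an all-pairs-shortest-path separation oracle on the $O(k)$-edge graph, which fits within $O(k^5)$.
\end{itemize}
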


Efficient recognition algorithms were previously known for trees, cactus graphs, and Kalmanson metrics. Theorem~\ref{thm: main} extends this to outerplanar graphs, a class that contains trees and cactus graphs and is contained in the OS class. Section~\ref{sec:outerplanar-proof} gives the proof.

\subsection{Minimum OS realizations}\label{subsec:intro-minimum}

The natural next question after distance realization is minimum realization~\cite{feder2003representing}: among all graphs in a prescribed family realizing a given metric, find those with the fewest edges. Whereas the distance-realization problem concerns the structural property of a metric, the minimum-realization problem concerns its structural complexity and seeks a succinct graph representation. This objective differs from the classical optimal-realization problem, which minimizes total edge length~\cite{Althoefer88,Winkler88}.

For general graph metrics, Feder et al.~\cite{feder2003representing} gave approximation algorithms for several restricted settings and strong hardness results for the general unit-weight problem. We give an exact polynomial-time solution for OS metrics with a prescribed cyclic order.

\setcounter{theorem}{3}
\begin{theorem}\label{thm: minimum main}
Given any OS metric $D$, the underlying graphs of all its minimum OS realizations can be efficiently recovered. Moreover, for each such graph structure, one can efficiently compute nonnegative edge lengths realizing $D$.
\end{theorem}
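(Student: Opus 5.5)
The plan is to reduce minimum OS realization to the combinatorial layer given by Theorem~\ref{lem: 2-path condition}, and then to optimize that layer through a medial-graph (curve-arrangement) picture.

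\textbf{Step 1: medial arrangement.} Take an OS instance $(G,T)$ with terminals $t_1,\dots,t_k$ in the prescribed cyclic order on the disk boundary. Assume $G$ is edge-minimal, so every edge lies on some $P_{t,t'}$ (any other edge can be deleted). Form the medial graph: each edge of $G$ becomes a crossing point. The disk then carries $k$ strands. Each strand runs from a boundary point between two consecutive terminals to another such point. Hence $|E(G)|$ equals the number of crossings of this arrangement. Conversely, any arrangement of $k$ strands with fixed endpoints in the disk has a primal graph that is an OS instance with $T$ on the boundary.

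\textbf{Step 2: the canonical template.} The central claim is that a metric $D$ determines a \emph{template}: a matching of the $2k$ strand endpoints, together with the information of which strand pairs must cross and how often. I would read this off from the Kalmanson inequalities. For a boundary gap $g_i$ between $t_i$ and $t_{i+1}$, the strand leaving $g_i$ should end at the gap $g_j$ such that the split $\{t_{i+1},\dots,t_j\}$ versus the rest is the split the metric "separates." Concretely, $j$ is chosen via the extremal index where $D(t_i,t_{j+1})+D(t_{i+1},t_j)-D(t_i,t_j)-D(t_{i+1},t_{j+1})$ is positive.

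By Theorem~\ref{lem: 2-path condition}, a realization exists iff the graph has a $D$-good path structure. I would argue that this condition says exactly two things: (a) paths of repelling pairs are separated by some strand; (b) paths of non-repelling pairs are never forced apart. Property (a) pins down which pairs of strands must cross, namely those whose separated terminal sets interleave in a way witnessed by a strict Kalmanson inequality. Property (b) shows that minimal arrangements contain no further crossings, so there are no bigons or self-loops, following the Steinitz/reduction theory of minimal arrangements.

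It follows that the minimum number of edges is the crossing number of the template, i.e.\ the number of strand pairs whose endpoints interleave. I expect this step, proving both that the template is forced and that any realization must contain at least that many crossings, to be the main obstacle. For the lower bound, I would show that each interleaving strand pair yields a repelling pair of terminal pairs. Theorem~\ref{lem: 2-path condition} then forces a separating edge, and I would charge distinct edges via the medial crossings.

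\textbf{Step 3: algorithm and edge lengths.} To recover the graph structures, I would compute the template in $O(k^2)$ evaluations of $D$. Then I would enumerate or describe all minimal (reduced) arrangements with these endpoint pairs, which are related by triangle moves, and output their primal graphs.

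For a given structure, I would build the path structure by routing each $P_{t,t'}$ so that it crosses exactly the strands separating $t$ from $t'$, which is a geodesic in the medial sense. Goodness then follows from Step 2. Finally, I would solve for lengths either by the linear program implied by Theorem~\ref{lem: 2-path condition}, or explicitly: assign each edge, dual to the crossing of strands $\alpha$ and $\beta$, a weight coming from the split weights of the corresponding splits. Polynomial time follows from solving an LP of polynomial size.
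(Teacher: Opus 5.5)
Your architecture is right: Theorem~\ref{lem: 2-path condition} handles the numerical layer, medial crossings count edges, and a canonical matching of the $2k$ strand endpoints is the target. But Step~2 carries the whole content of the theorem and is only a heuristic, and the concrete rule you give for the template is wrong. The quantity $D(t_i,t_{j+1})+D(t_{i+1},t_j)-D(t_i,t_j)-D(t_{i+1},t_{j+1})$ is never positive, by the Kalmanson inequality on $t_i,t_{i+1},t_j,t_{j+1}$. With the sign flipped it is twice a circular split weight, but split weights do not locate strands. Each gap carries two strand endpoints, there are only $k$ strands against up to $\binom{k}{2}$ splits in the support, and no extremal-index rule works. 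For instance, take $k=4$ with all circular split weights positive. Then $\Phi(D)$ is the all-crossing matching (six edges), and both strands leaving the gap between $t_1$ and $t_2$ end in the gap between $t_3$ and $t_4$. Yet $\{t_2\}$, $\{t_2,t_3\}$ and $\{t_2,t_3,t_4\}$ all carry positive weight, so neither extremal choice gives that gap.

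The missing idea is the cut requirement that comes from repelling families. Let $a_{x,y}$ be the maximum size of a repelling set of terminal pairs crossing the $x$-$y$ cut. Then every $x$-$y$ chain has at least $a_{x,y}$ vertices, which forces $b_{x,y}=2a_{x,y}-\mathbf 1[x\in T]-\mathbf 1[y\in T]$ medial strands across that cut. The exchange argument of Lemma~\ref{lem:a-monge} and the telescoping count of Lemma~\ref{lem:unit-diff} then show that $b$ is exactly the cut function of a unique matching $\Phi(D)$.

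Neither bound is established either. For the lower bound, you charge each interleaving strand pair of the template to a distinct edge of an arbitrary realization $G$. This has no basis: the strands of $G$ form a different matching in general, and one edge can be ``forced'' by many pairs. The paper argues differently. After lens removal, $|E(G)|=\cro(\Phi)$ for $G$'s own template $\Phi$. The easy direction of the chain criterion then gives $\dist_\Phi\ge b=\dist_{\Phi(D)}$ on every cut. Finally, the uncrossing lemma (Lemma~\ref{thm:main-uncross}) turns this cut dominance into $\Phi\to\Phi(D)$, so $\cro(\Phi)\ge\cro(\Phi(D))$.

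For the upper bound, routing each pair along a medial geodesic does not by itself make repelling pairs vertex-disjoint, nor make paths meet in common subpaths. The paper needs the sufficiency direction of Theorem~\ref{thm:repelling-paths}, with its recursive construction of $\Pi[i,j]$ and tight chain certificates. It applies this to an arrangement of $\Phi(D)$, whose chains satisfy $|A|\ge a_{x,y}$ precisely because $\dist_{\Phi(D)}=b$. Only then can Theorem~\ref{lem: 2-path condition} supply lengths.

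Two minor points. The LP behind Theorem~\ref{lem: 2-path condition} has one constraint per simple path, so you need a shortest-path separation oracle or a compact potential formulation. Your explicit split-weight formula for edge lengths is unsupported.
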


More specifically, we compute in polynomial time a template for the input OS metric $D$, together with the minimum edge count. This template compactly represents all minimum graph structures; any one such graph and realizing edge lengths can be computed in polynomial time. The graph structures are determined by the metric at the level of this template, although realizing edge lengths need not be unique. Section~\ref{sec:minimum-proof} gives the proof and the detailed construction.

\subsection{Related work, and relations to electrical networks}\label{subsec:related}

Our minimum-realization result has a close formal parallel with the inverse problem for circular planar electrical networks. In that setting, one starts with a finite graph embedded in a disk, assigns a positive conductance $c(e)$ to each edge, and records the linear map $\Lambda v=i$ from boundary voltages to boundary currents, where $\Lambda$ is the response matrix. Classical work shows that the response matrix determines the reduced medial structure of a critical network, determines its conductances, and relates critical representatives with the same response matrix by local $Y$-$\Delta$ transformations~\cite{de1996,curtis1998,kenyon2008,kenyon2011}. Table~\ref{table} compares this theory with the distance-realization results developed here.

Distance realization has been studied extensively since the work of Hakimi and Yau~\cite{HakimiYau65}. Early and classical work includes tree realization and recognition~\cite{patrinos1972distance,pereira1969note,simoes1982submatrices,simoes1987graph,bandelt1990recognition,varone1998trees,buneman1974note,dress1984trees}, and more recent work considers, among other variants, cactus metrics and distance-set realization~\cite{hayamizu2020recognizing,bar2022graph,bar2023composed}. Goldman~\cite{Goldman66} described the irreducible realization with as few positive-length arcs as possible when the prescribed points must constitute the full vertex set. This fixed-vertex model is related in spirit to minimum realization, but differs from our setting because we allow Steiner vertices.

\begin{table}[h]
\centering
\renewcommand{\arraystretch}{1.25}
\begin{tabular}{p{0.16\textwidth}p{0.4\textwidth}p{0.38\textwidth}}
\hline
\textbf{OS instances} & \textbf{Electrical networks} & \textbf{Distance realization (this paper)} \\
\hline
Boundary data & Response matrix $\Lambda$, mapping boundary voltages to boundary currents. & Terminal metric $D$, recording shortest-path distances between boundary terminals. \\
Realizability & Characterized by nonnegativity of circular minors of $\Lambda$~\cite{curtis1998}. & Characterized by the Monge property/Kalmanson condition~\cite{hurkens1988tidy,ChangO20}. \\
Reduced objects & Critical networks. & Minimum realizations, i.e., realizations with the fewest edges. \\
Medial object & The medial template of a critical network is determined by $\Lambda$~\cite{de1996,curtis1998}. & The medial template $\Phi(D)$ of a minimum realization is determined by $D$. \\
Local moves & The critical networks of a response matrix are related by $Y$--$\Delta$ transformations~\cite{de1996,curtis1998}. & Underlying embedded graphs of minimum realizations of an OS metric are related by $Y$--$\Delta$ transformations. \\
Numerical & For critical networks, the conductances are recoverable from $\Lambda$; in standard representatives they are given by explicit rational formulas~\cite{kenyon2008,kenyon2011}. & The metric determines the minimum medial template and the graph structures; realizing edge lengths can be computed efficiently, but in general are not unique; see Appendix~\ref{apd: example}. \\
\hline
\end{tabular}
\caption{Comparison between conductance and distance realizations for Okamura--Seymour instances.}\label{table}
\end{table}

Another major line of work concerns optimal realizations, which minimize total edge length rather than the number of edges. Computing an optimal realization of an integral metric is NP-hard~\cite{Althoefer88,Winkler88}. The minimum-edge objective for general graph metrics was studied by Feder et al.~\cite{feder2003representing}. For fixed-order undirected OS metrics, realizability is characterized by the Kalmanson inequalities~\cite{hurkens1988tidy,ChangO20}. Chen and Tan studied the directed OS realization problem and proved the corresponding directed Monge characterization~\cite{chen2025path}. More broadly, our path-based viewpoint belongs to a recent line of research that analyzes graph structure through prescribed shortest paths and their intersections~\cite{bodwin2019structure,akmal2022local,cizma2022geodesic,cizma2023irreducible,chen2025path,li2025paths}.

\subsection{Organization}

Section~\ref{sec:prelim} gives the preliminaries and introduces repelling pairs, shortest path structures, boundary cuts, and chains. Section~\ref{sec: repelling paths} proves Theorem~\ref{lem: 2-path condition}. Section~\ref{sec:outerplanar-proof} begins with a high-level overview and then proves the outerplanar recognition result. Section~\ref{sec:minimum-proof} likewise begins with a high-level overview and then proves the minimum-realization result. The counterexample to bounded-point characterization and several technical proofs are deferred to the appendices.

\section{Preliminaries, and Repelling Pairs}\label{sec:prelim}

\paragraph{Graphs and realizations.}
All graphs are finite and may have nonnegative edge lengths. For a weighted graph $G$, write $\dist_G(u,v)$ for the shortest-path distance between $u$ and $v$. Let $T$ be a designated set of vertices called \emph{terminals}. We say that $G$ \emph{realizes} a metric $D$ on $T$ if $\dist_G(s,t)=D(s,t)$ for all $s,t\in T$. Steiner vertices in $V(G)\setminus T$ are allowed.

\paragraph{Okamura-Seymour and outerplanar instances.}
An Okamura-Seymour instance~\cite{okamura1981multicommodity}, or OS instance, is a pair $(G,T)$ in which $G$ is embedded in a closed disk and all terminals lie on its boundary. Other graph vertices may also lie on the boundary. An outerplanar realization is an OS realization in which every graph vertex lies on the boundary. In the minimum-realization part, we may perturb nonterminal boundary vertices slightly into the interior, and there we assume without loss of generality that the only boundary vertices are terminals.

For a prescribed cyclic order of $T$, a metric $D$ is an OS metric, or Kalmanson metric, if whenever $t_1,t_2,t_3,t_4$ occur in this order,
\[
\max\{D(t_1,t_2)+D(t_3,t_4),D(t_1,t_4)+D(t_2,t_3)\}
\le D(t_1,t_3)+D(t_2,t_4).
\]
These inequalities characterize metrics realizable by OS instances with the given order~\cite{hurkens1988tidy,ChangO20,schrijver2003combinatorial}.

\paragraph{Repelling pairs.}
For terminal pairs $(t_1,t_2)$ and $(t'_1,t'_2)$, not necessarily with four distinct endpoints, we say that the two pairs \emph{repel} if
\[
D(t_1,t_2)+D(t'_1,t'_2)<
\max\{D(t_1,t'_1)+D(t_2,t'_2),D(t_1,t'_2)+D(t'_1,t_2)\}.
\]
A family of terminal pairs is \emph{repelling} if its members repel pairwise. We allow the repeated pair $(t,t)$, represented by the trivial path at $t$.

\begin{lemma}\label{lem: intersecting shortest paths}
Let $P$ be an $a$-$b$ shortest path and $Q$ a $c$-$d$ shortest path in a weighted graph. If $P$ and $Q$ intersect, then
\[
\dist_G(a,b)+\dist_G(c,d)\ge \dist_G(a,c)+\dist_G(b,d)
\]
and
\[
\dist_G(a,b)+\dist_G(c,d)\ge \dist_G(a,d)+\dist_G(b,c).
\]
\end{lemma}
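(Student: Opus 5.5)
The plan is to take a common vertex $x \in P \cap Q$ and apply the triangle inequality through $x$. Since $P$ is a shortest $a$-$b$ path that contains $x$, its two subpaths from $a$ to $x$ and from $x$ to $b$ are themselves shortest paths. Hence
\[
\dist_G(a,b)=\dist_G(a,x)+\dist_G(x,b).
\]
The same argument applied to $Q$ gives
\[
\dist_G(c,d)=\dist_G(c,x)+\dist_G(x,d).
\]

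Adding these two identities and regrouping, we obtain
\[
\dist_G(a,b)+\dist_G(c,d)=\bigl(\dist_G(a,x)+\dist_G(x,c)\bigr)+\bigl(\dist_G(b,x)+\dist_G(x,d)\bigr).
\]
By the triangle inequality (the graph is undirected), the first bracket is at least $\dist_G(a,c)$ and the second is at least $\dist_G(b,d)$. This proves the first inequality.

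For the second inequality, we regroup the same sum as $\bigl(\dist_G(a,x)+\dist_G(x,d)\bigr)+\bigl(\dist_G(b,x)+\dist_G(x,c)\bigr)$. The triangle inequality then bounds the two brackets below by $\dist_G(a,d)$ and $\dist_G(b,c)$ respectively.

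Degenerate cases need a brief check but cause no difficulty. If $a=b$, then $P$ is the trivial path, so $x=a$ and $\dist_G(a,x)=0$, and the identities above still hold. Nonnegativity of the edge lengths guarantees that $\dist_G$ is a genuine (pseudo)metric, so the triangle inequality is valid. The only step that needs any care is the claim that a subpath of a shortest path is shortest, which is standard.
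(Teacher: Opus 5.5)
Your proof is correct and is essentially the paper's argument: fix a common vertex $x$, split both paths there, and recombine the four pieces in the two alternative pairings. The only difference is cosmetic: you phrase it through the identities $\dist_G(a,b)=\dist_G(a,x)+\dist_G(x,b)$ and the triangle inequality, whereas the paper directly observes that the recombined subpaths give an $a$-$c$ route and a $b$-$d$ route of total length $\ell(P)+\ell(Q)$.
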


\begin{proof}
Let $x$ be any common vertex of $P$ and $Q$. Pairing the $a$-$x$ and $x$-$c$ subpaths with the complementary $b$-$x$ and $x$-$d$ subpaths gives
\[
\dist_G(a,c)+\dist_G(b,d)\le \ell(P)+\ell(Q).
\]
Pairing the four subpaths in the other way similarly gives $\dist_G(a,d)+\dist_G(b,c)\le \ell(P)+\ell(Q)$. Since $\ell(P)=\dist_G(a,b)$ and $\ell(Q)=\dist_G(c,d)$, both inequalities follow.
\end{proof}

It follows immediately that paths realizing repelling terminal pairs must be vertex-disjoint. The following observation on repelling pairs will be used repeatedly.

\begin{observation}\label{lem: W-repel}
Let terminals $t_1,t_2,t_3,t_4,t_5$ appear on the boundary in this order. If $(t_1,t_5)$ does not repel $(t_2,t_4)$, and $(t_2,t_5)$ does not repel $(t_3,t_4)$, then $(t_1,t_5)$ does not repel $(t_3,t_4)$.
\end{observation}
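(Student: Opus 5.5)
The plan is to unfold the two non-repelling hypotheses into explicit inequalities, add suitable ones so that the middle terms cancel, and then use a single Kalmanson inequality to handle the remaining term. Throughout, write $D_{ij}$ for $D(t_i,t_j)$. The goal is to show
\[
D_{15}+D_{34}\ \ge\ \max\{D_{13}+D_{45},\ D_{14}+D_{35}\}.
\]

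\textbf{The $D_{14}+D_{35}$ term.} From the hypothesis that $(t_1,t_5)$ does not repel $(t_2,t_4)$, I will use only the inequality
\[
D_{15}+D_{24}\ge D_{14}+D_{25}.
\]
From the hypothesis that $(t_2,t_5)$ does not repel $(t_3,t_4)$, I will use only
\[
D_{25}+D_{34}\ge D_{24}+D_{35}.
\]
Adding these two inequalities cancels $D_{24}+D_{25}$ from both sides. This leaves
\[
D_{15}+D_{34}\ge D_{14}+D_{35},
\]
which is one half of the goal.

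\textbf{The $D_{13}+D_{45}$ term.} The four terminals $t_1,t_3,t_4,t_5$ appear in this cyclic order. The Kalmanson condition for them, comparing the non-crossing pairing $\{t_1t_5,\,t_3t_4\}$ with the crossing pairing $\{t_1t_4,\,t_3t_5\}$, gives
\[
D_{14}+D_{35}\ge D_{13}+D_{45}.
\]
Chaining this with the previous step yields
\[
D_{15}+D_{34}\ge D_{14}+D_{35}\ge D_{13}+D_{45}.
\]
So $(t_1,t_5)$ does not repel $(t_3,t_4)$.

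\textbf{Main obstacle.} The only delicate point is choosing which of the two inequalities hidden in each "does not repel" hypothesis to use, so that the cross terms cancel. Once the right pair is chosen, everything is a one-line computation.

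\textbf{Degenerate cases.} The observation allows repeated terminals, for example $t_3=t_4$. In that case I will check that the Kalmanson inequality used above still holds, which follows from the triangle inequality. The addition argument in the first step is unaffected by repeated terminals.
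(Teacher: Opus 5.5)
Your proof is correct and is essentially the paper's argument: the paper also adds the relations $D(t_1,t_4)+D(t_2,t_5)-D(t_1,t_5)-D(t_2,t_4)$ and $D(t_2,t_4)+D(t_3,t_5)-D(t_2,t_5)-D(t_3,t_4)$ so that the $t_2$-terms cancel. The only difference is that the paper writes these as equalities ($0+0=0$), silently using the Kalmanson condition both for those equalities and for the remaining $D(t_1,t_3)+D(t_4,t_5)$ term, whereas you use only the one-sided inequalities and make the single needed Kalmanson inequality on $t_1,t_3,t_4,t_5$ explicit, which is slightly cleaner.
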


\begin{proof}
Note that
\begin{align*}
&D(t_1,t_4)+D(t_3,t_5)-D(t_1,t_5)-D(t_3,t_4)\\
&=\bigl(D(t_1,t_4)+D(t_2,t_5)-D(t_1,t_5)-D(t_2,t_4)\bigr)\\
&\quad+\bigl(D(t_2,t_4)+D(t_3,t_5)-D(t_2,t_5)-D(t_3,t_4)\bigr)\\
&=0+0=0.
\end{align*}
Therefore, $(t_1,t_5)$ does not repel $(t_3,t_4)$.
\end{proof}

\paragraph{Shortest path structures.}
For an OS instance $(G,T)$, a \emph{shortest path structure} is a collection $\pset$ that contains, for every pair $t,t'$ of terminals in $T$, a path $P_{t,t'}$ from $t$ to $t'$ in $G$, such that the intersection between any two paths in $\pset$ is either empty or a subpath of both. We say that a shortest path structure $\pset$ is \emph{$D$-good}, or simply \emph{good} when $D$ is clear from the context, if for all repelling pairs $(t_1,t_2),(t'_1,t'_2)$, paths $P_{t_1,t_2}$ and $P_{t'_1,t'_2}$ are vertex-disjoint.

\paragraph{Boundary cuts and chains.}
For the minimum-realization part, let $X$ contain every terminal and one additional reference point in each boundary interval between consecutive terminals. We consider the disk boundary as part of the drawing of an OS instance. A face is \emph{peripheral} if its boundary contains a reference point in $X\setminus T$.

Two points $x,y\in X$ divide the disk boundary into two arcs and define the $x$-$y$ \emph{boundary cut}. A terminal pair crosses the cut if its endpoints lie on opposite arcs, or if an endpoint coincides with $x$ or $y$.

\begin{definition}\label{def:chains}
A \emph{chain} in an OS instance is a sequence $A=(A_1,A_2,\ldots,A_r)$ such that
\begin{itemize}
\item for all $i<r$, one of $A_i$ and $A_{i+1}$ is a vertex of $G$, the other is a region, and they are incident; and
\item $A_i$ is peripheral if and only if $i=1$ or $i=r$.
\end{itemize}
The length of $A$, denoted by $|A|$, is the number of terms $A_i$ that are vertices. Its boundary cut is the cut whose endpoints are the terminal endpoint when an end of the chain is a terminal vertex, and otherwise the reference point contained in the corresponding peripheral region. A terminal pair crosses $A$ if it crosses this boundary cut.
\end{definition}

\section{Proof of Theorem 1}
\label{sec: repelling paths}

The construction of an OS instance realizing the input metric $D$ consists of two steps:
\begin{enumerate}
    \item constructing the graph and the shortest path structure, which requires specifying
    \label{prop_1}
    \begin{itemize}
        \item the graph $G$;
        \item for each pair $t,t'$ of terminals, a $t$--$t'$ path $P_{t,t'}$ in $G$, designated as their shortest path, such that the intersection between every pair of shortest paths is empty or a subpath of both;
    \end{itemize}
    \item setting the edge weights of $G$, such that for every pair $t,t'$ of terminals,
    \label{prop_2}
    \begin{itemize}
        \item $P_{t,t'}$ is indeed a shortest $t$--$t'$ path in $G$; and
        \item the length of $P_{t,t'}$ equals $D(t,t')$.
    \end{itemize}
\end{enumerate}

Now imagine that we are given the graph $G$ and the shortest path structure $\pset=\set{P_{t,t'}}_{t,t'\in T}$ produced by Step 1. Theorem~\ref{lem: 2-path condition} states that Step 2 can be accomplished if and only if the repelling paths condition holds for every pair of terminal pairs.

From \Cref{lem: intersecting shortest paths} and the definition of repelling pairs, the forward direction follows immediately. We now focus on the backward direction. Assume that we are given $G,\pset$ such that for every repelling pair $(t_1,t_2)$ and $(t'_1,t'_2)$, the paths $P_{t_1,t_2}$ and $P_{t'_1,t'_2}$ are disjoint. We will show that there exist edge lengths realizing $\pset$ and $D$.

The existence of such edge lengths is actually characterized by the feasibility of the following LP. The variables are $\set{\ell_e}_{e\in E(G)}$. There is no objective function, as only its feasibility matters.
\begin{eqnarray*}
&\sum_{e\in E(P_{t,t'})}\ell_e\leq D(t,t') &\forall t,t', \\
&\sum_{e\in E(Q)}\ell_e\geq D(t,t') &\forall t,t', \forall \text{ simple path }Q\text{  from }t \text{ to  }t'\\
	&\ell_e\geq 0&\forall e\in E(G)
\end{eqnarray*}
By Farkas' lemma, either the above LP is feasible (and therefore there exist edge lengths for Theorem~\ref{lem: 2-path condition}), or
there exist flows $F,F'$ in $G$, such that:
    \begin{enumerate}
        \item $F$ only uses paths in $\pset$, while $F'$ may use any paths; 
        \label{flowprop_1}
        \item for each edge $e$, the amount of flow in $F$ sent through $e$ is greater than the amount of flow in $F'$ sent through $e$; and \label{flowprop_2}
        \item if we denote, for each pair $t,t'$, by $F_{t,t'}$ the amount of $F$ sent between $t,t'$, then
        \[\sum_{t,t'\in T}F_{t,t'}\cdot D(t,t')<\sum_{t,t'\in T}F'_{t,t'}\cdot D(t,t').\] \label{flowprop_3}
    \end{enumerate}

Using standard techniques \cite{chen2025path}, we can convert any pair $F,F'$ of flows satisfying the above properties to a pair of flows with the following additional property:
    \begin{enumerate}[resume]
    \item for each terminal $t$, the total amount of flow in $F$ sent from (received by, resp.) $t$ is identical to the total amount of flow in $F'$ sent from (received by, resp.) $t$. \label{prop: aligned}
    \end{enumerate}

Therefore, to complete the proof of Theorem~\ref{lem: 2-path condition}, it suffices to show that, in an Okamura-Seymour instance $(G,T)$ with designated paths $\pset=\set{P_{t,t'}}_{t,t'\in T}$ such that $P_{t_1,t_2}$ and $P_{t'_1,t'_2}$ are disjoint for every repelling pair $(t_1,t_2),(t'_1,t'_2)$, there do not exist flows $F,F'$ satisfying Properties \ref{flowprop_1}, \ref{flowprop_2}, \ref{flowprop_3} and \ref{prop: aligned}.
Before we give the proof, we make an additional assumption: as $\mathbb{Q}$ is dense in $\mathbb{R}$, we assume that $F,F'$ are integral (they send integer units of flow through every path they use), so both $F,F'$ are  collections of paths.

\subsection{Preparation: chords, uncrossing, and path switching}

\paragraph{Chord configurations.}
$n$ distinct points labeled by $1,2,\ldots,n$ ($n$ is even) lie on the boundary of a disc clockwise in this order. A \emph{chord} $(a,b)$ where $a,b\in [n]$ is a straight line segment connecting $a,b$. Such a chord is also viewed as a matched pair $(a,b)$.
A \emph{configuration} $C$ is defined to be a set of chords.

For each $1\le i\le n$, there is a point lying on the boundary segment between points $i$ and $i+1$ that we call $\bar i$ (the point between $n$ and $1$ is called $\bar n$). For each pair $\bar i, \bar j$, we denote by $\dist_C(\bar i, \bar j)$ the number of chords in $C$ that the straight line segment $(\bar i, \bar j)$ intersects. It is immediate to verify that $\dist_C(\cdot , \cdot)$ is a metric on $\set{\bar 1,\bar 2,\ldots,\bar n}$. We call it the \emph{metric induced by $C$}. The line segment $(\bar i, \bar j)$ is also viewed as a cut, which partition the set of points into subsets $\set{i+1,\ldots,j}$ and $\set{j+1,\ldots,i}$.

\paragraph{Uncrossing and cut-metric dominance.}
For a pair of chords $(x,y),(z,w)$ that cross each other (that is, their distinct endpoints $x,z,y,w$ appear on the boundary in this order), we can \emph{uncross} them to obtain chords $(x,z),(y,w)$ or chords $(x,w),(y,z)$.
If a configuration $C$ contains $(x,y)$ and $(z,w)$, and we define $C^*=\big(C\setminus \set{(x,y),(z,w)}\big)\cup \set{(x,z),(y,w)}$, then we say that $C^*$ can be obtained from $C$ by uncrossing $(x,y),(z,w)$.
For configurations $C,C'$, we say that we can \emph{uncross $C$ to $C'$}, denoted by $C\to C'$, if there exists a sequence of configurations $C_0,\ldots,C_t$, such that $C_0=C$, $C_t=C'$; and for each $0\le i<t$, $C_{i+1}$ can be obtained from $C_i$ by uncrossing one pair of its chords.

We say that a configuration $C$ \emph{dominates} configuration $C'$, denoted by $C\succeq C'$, iff for every pair $\bar i,\bar j$, $\dist_C(\bar i,\bar j)\ge \dist_{C'}(\bar i,\bar j)$. We write $C\succ C'$ if additionally there exists some pair $\bar i,\bar j$ with $\dist_C(\bar i,\bar j) > \dist_{C'}(\bar i,\bar j)$. 
It is easy to see that, if $C\to C'$, then $C\succeq C'$. 

\paragraph{Path switching.}
Let $a,b,c,d$ be distinct terminals.
Let $P$ be an $a$-$b$ path, viewed as being directed from $a$ to $b$. Let $Q$ be a $d$-$c$ path in $G$, viewed as being directed from $d$ to $c$. Let $R=P\cap Q$.
So $P$ is the concatenation of: the subpath from $a$ to the $a$-side endpoint of $R$, which we denote by $P_a$; $R$; and the subpath from $b$ to the $b$-side endpoint of $R$, which we denote by $P_b$. We define subpaths $Q_c, Q_d$ of $Q$ similarly.
Assume that $P_a$ and $Q_d$ share an endpoint, and $P_b$ and $Q_c$ share an endpoint.
Define $X$ as the concatenation of paths $P_a$, $R$, and $Q_c$, so $X$ is an $a$-$c$ path.
Define $Y$ as the concatenation of paths $Q_d$, $R$, and $P_b$, so $Y$ is a $d$-$b$ path.
We say that $X, Y$ are obtained by \emph{switching paths $P,Q$}. See \Cref{fig: switch} for an illustration.

\begin{figure}[h]
\centering
\subfigure
{\scalebox{0.09}{\includegraphics{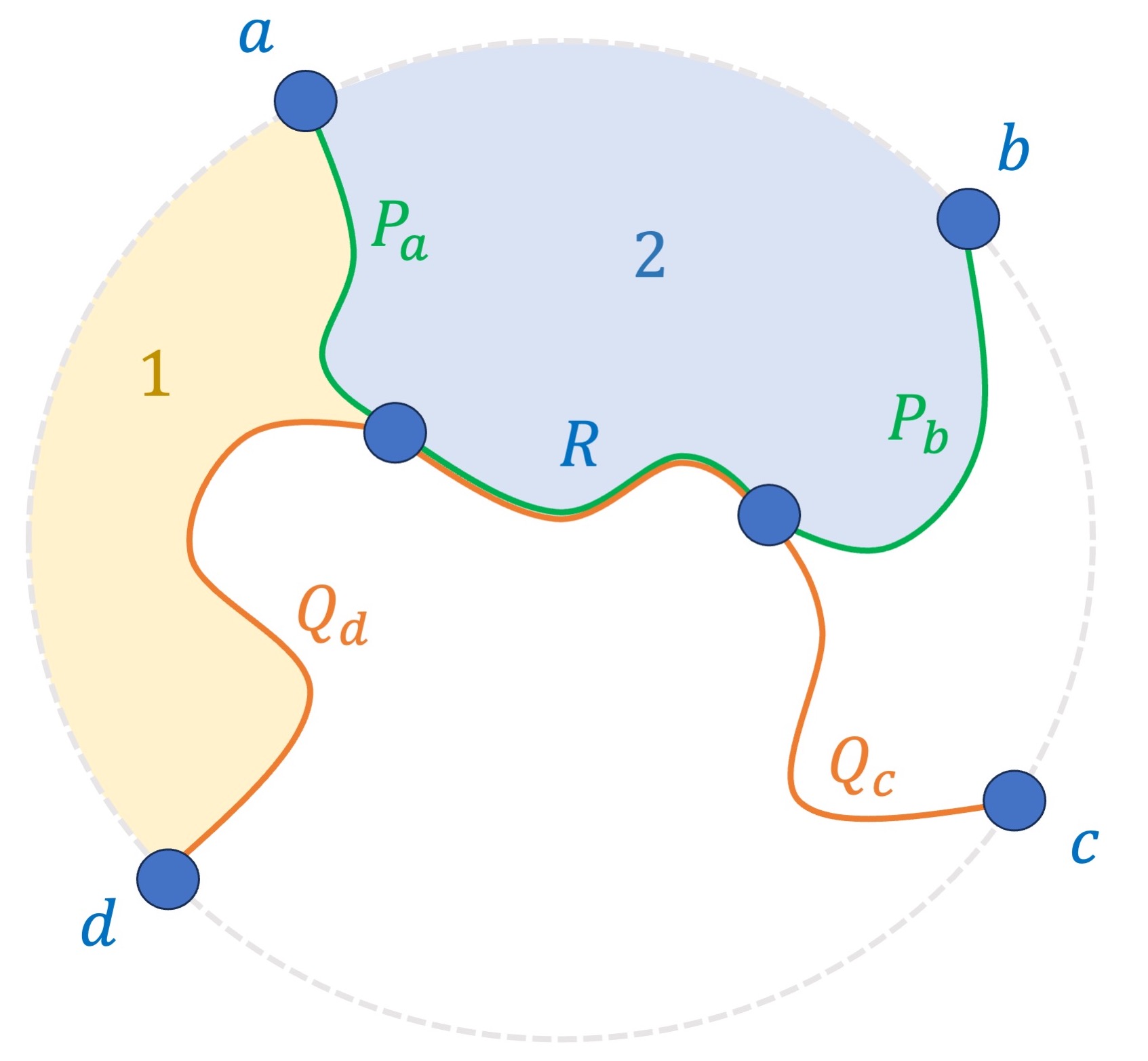}}}
\hspace{2.0cm}
\subfigure
{\scalebox{0.09}{\includegraphics{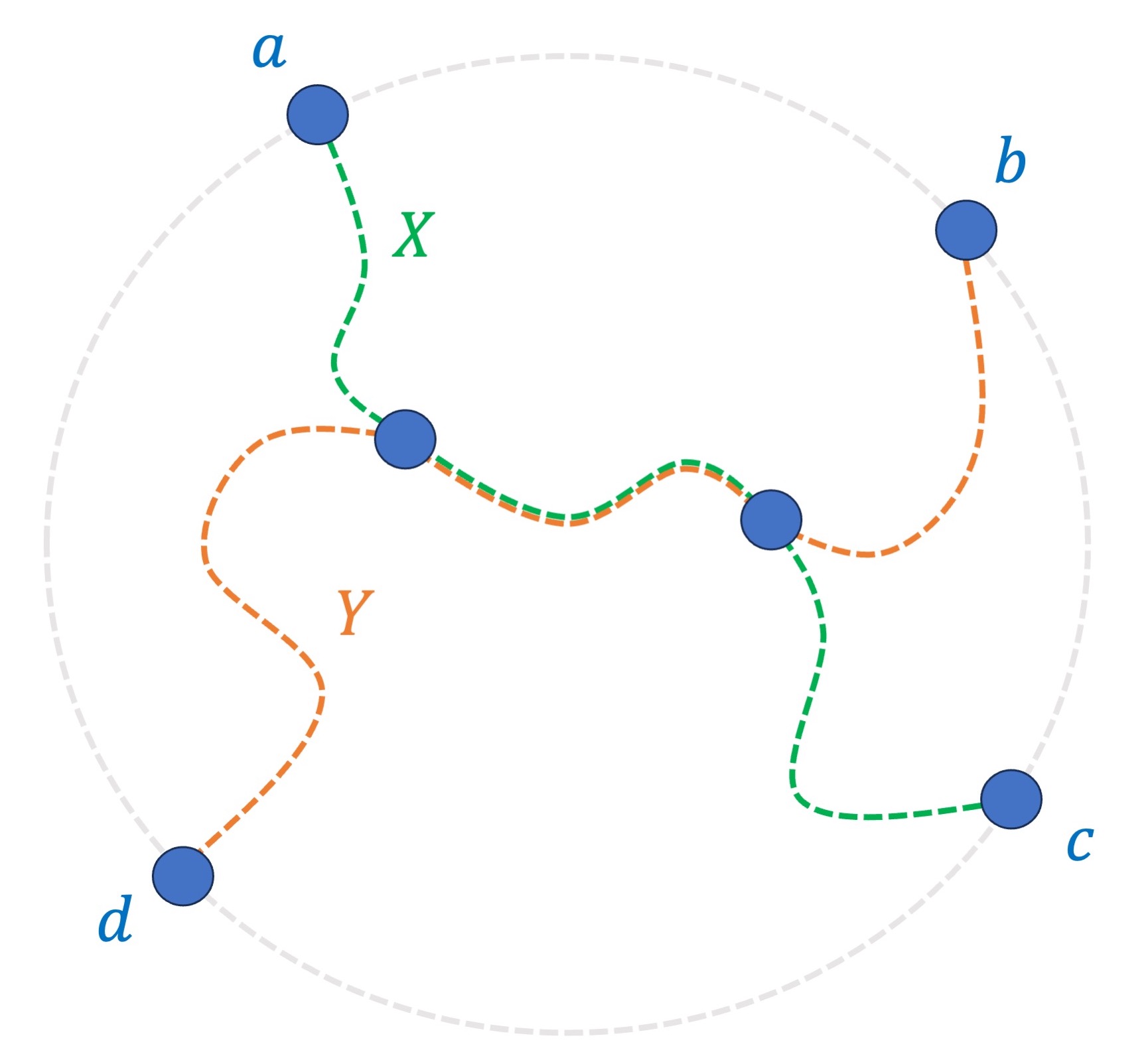}}}
\caption{An illustration of path switching.}\label{fig: switch}
\end{figure}

Given the Okamura-Seymour instance $(G,T)$ and a target metric $D$, 
consider a shortest path structure $\pset$ in it that satisfies all repelling paths conditions.
We first prove the following claim on $\pset$.

\begin{claim} \label{clm:not-repel-switch}
Let $a,b,c,d$ be distinct terminals appearing on the boundary in this order. Assume that the $a$-$b$ shortest path $P_{a,b}$ intersects the $d$-$c$ shortest path $P_{d,c}$, and let $X,Y$ be paths obtained by switching $P_{a,b}$ and $P_{d,c}$ (where $X$ connects $a$ to $c$ and $Y$ connects $d$ to $b$), then $X$ is the designated $a$-$c$ shortest path in $\pset$, namely $X=P_{a,c}$, and similarly $Y=P_{d,b}$.
\end{claim}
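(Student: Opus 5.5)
The plan is to use only planarity and the rule that any two paths of $\pset$ meet in a common subpath; I do not expect to need the repelling condition. The idea is to show that $P_{a,c}$ contains the whole switched path $X=P_a\cup R\cup Q_c$. Since $X$ is itself an $a$-$c$ path and $P_{a,c}$ is a simple $a$-$c$ path, containing $X$ forces $P_{a,c}=X$. The statement $P_{d,b}=Y$ then follows by the same argument with the roles of the endpoints exchanged.

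\textbf{Step 1: the regions cut out by the two paths.} Let $H=P_{a,b}\cup P_{d,c}$. It is a tree shaped like an ``H'': the four legs $P_a,P_b,Q_c,Q_d$ hang off the common segment $R$, and they end at $a,b,c,d$, which lie on the boundary in this cyclic order. Together with the disk boundary, $H$ cuts the disk into four regions. Name each region by the boundary arc it touches: $\rho_{ab},\rho_{bc},\rho_{cd},\rho_{da}$.
\begin{itemize}
\item The leg $P_a$ touches only $\rho_{da}$ and $\rho_{ab}$.
\item The leg $Q_c$ touches only $\rho_{bc}$ and $\rho_{cd}$.
\item The segment $R$ touches $\rho_{ab}$ and $\rho_{cd}$; the endpoints of $R$ are where the other regions meet.
\end{itemize}
Because $G$ is embedded, a path of $G$ can pass from one region to another only through a vertex of $H$.

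\textbf{Step 2: the shape of $P_{a,c}$ relative to $H$.} The paths $P_{a,c}$ and $P_{a,b}$ share the vertex $a$, so their intersection is one subpath, and it contains $a$; call it $S_1$. Likewise $S_2=P_{a,c}\cap P_{d,c}$ is one subpath containing $c$. Write $P_{a,c}=S_1\cdot M\cdot S_2$, where the middle part $M$ meets $H$ only at its two endpoints. Suppose $S_1$ does not contain all of $R$. Then $S_1\subseteq P_a\cup(\text{part of }R)$, and $M$ leaves $S_1$ into one of the regions touching that part, namely $\rho_{da}$, $\rho_{ab}$ or $\rho_{cd}$. Inside that region $M$ must reach the start of $S_2$ without touching $H$ in between.

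\textbf{Step 3: ruling out the bad cases.} I expect this to be the main obstacle: carefully checking, by the region incidences, that no region adjacent to a proper initial part of $P_a\cup R$ also touches a possible $S_2$. Here $S_2$ is a subpath of $P_{d,c}$ ending at $c$, so it is a final segment of $Q_d\cup R\cup Q_c$ read toward $c$. If $S_2$ lies inside $Q_c$, it touches only $\rho_{bc}$ and $\rho_{cd}$. The work is to show that any region both sides could share would force $M$ to touch $H$ a second time, or would force $S_1$ or $S_2$ to meet its path in two separate pieces. Either outcome contradicts the subpath rule for $\pset$.

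\textbf{Step 4: conclusion.} I expect the case analysis to give $R\subseteq S_1$, and symmetrically $R\subseteq S_2$. Since $S_1$ is a single subpath of $P_{a,b}$ that contains both $a$ and $R$, it contains $P_a\cup R$. By the same reasoning $S_2$ contains $R\cup Q_c$. Hence $X\subseteq P_{a,c}$, and therefore $P_{a,c}=X$.

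A further subtlety sits inside Step 3: endpoints of $R$ may lie on the boundary, and legs may be trivial. I would handle these degenerate configurations by the same region bookkeeping, allowing a region to be empty.
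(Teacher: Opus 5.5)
Your approach is essentially the paper's: the paper also uses only planarity and the common-subpath rule, with no repelling condition, and argues via the regions cut out by $P_{a,b}\cup P_{d,c}$ that $P_{a,c}$ cannot leave $P_a$, $R$ or $Q_c$ without meeting $P_{a,b}$ or $P_{d,c}$ in two separate pieces. Your Step 3 closes quickly once you note that any vertex of $R$ on $P_{a,c}$ lies in both $S_1$ and $S_2$. Then either $S_1\cap S_2$ contains a vertex $s$, so $P_{a,c}$ is $P_{a,b}$ from $a$ to $s$ followed by $P_{d,c}$ from $s$ to $c$, which is exactly $X$ (the overlapping case your decomposition $S_1\cdot M\cdot S_2$ does not cover); or $S_1$ and $S_2$ avoid $R$ entirely, so they sit in $P_a$ and $Q_c$ on opposite sides of the $d$-$b$ path $Q_d\cup R\cup P_b$, which $M$ cannot cross.
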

\begin{proof}
We will show that $P_{a,c}$ has no choice but to follow the trajectory of $X$. The proof of $Y=P_{d,b}$ is symmetric.
Let $R=P\cap Q$.
In \Cref{fig: switch}, if the first segment of $P_{a,c}$ does not follow $P_a$, then it must enter the interior of either region $1$ or region $2$. If it enters region $1$, it can only exit the region either from $P_a$, in which case  $P_{a,c}$ intersects twice with  $P_{a,b}$, a contradiction, or from $Q_d$, in which case  $P_{a,c}$ intersects twice with  $P_{d,c}$, a contradiction. So the first segment of $P_{a,c}$ must follow  $P_a$, and similarly the last segment of $P_{a,c}$ must follow path $Q_c$.
To avoid intersecting with $P_{a,b}$ or $P_{d,c}$ twice, between its first and last segments it must follow $R$ exactly. Altogether, $P_{a,c}=X$, and similarly $Y=P_{d,b}$.
\end{proof}

\subsection{Step 1. Maximally switching $F$}

In the first step, our goal is to ensure the following additional property of $F$:
\begin{enumerate}[resume]
\item for every tuple $s,s',t',t$ of terminals appearing on the boundary in this order with $F_{s,s'},F_{t,t'}>0$, their shortest paths $P_{s,s'}$ and $P_{t,t'}$ in $\pset$ are disjoint.  \label{prop: disjoint}
\end{enumerate}

Specifically, we show that if there exist flows $F,F'$ satisfying Properties \ref{flowprop_1}, \ref{flowprop_2}, \ref{flowprop_3} and \ref{prop: aligned}, then there exist flows $F,F'$ satisfying Properties \ref{flowprop_1}, \ref{flowprop_2}, \ref{flowprop_3}, \ref{prop: aligned}, and \ref{prop: disjoint}.

Start with the flow $F$ that we are given. While there exists a tuple $s,s',t',t$ where the above property does not hold, 
we take one unit of $s$-$s'$ flow and one unit of $t$-$t'$ flow and consider the corresponding $s$-$s'$ path $P_{s,s'}$ and $t$-$t'$ path $P_{t,t'}$ (recall that $F$ is viewed as a set of paths). We switch them and obtain an $s$-$t'$ path and a $t$-$s'$ path, which,
from \Cref{clm:not-repel-switch}, must be the designated paths $P_{s,t'},P_{t,s'}$, respectively. We replace $P_{s,s'}$ and $P_{t,t'}$ in $F$ with $P_{s,t'}$ and $P_{t,s'}$. We repeat until there are no such tuples.

We first show that after each iteration of path switching, Properties \ref{flowprop_1}, \ref{flowprop_2}, \ref{flowprop_3}, and \ref{prop: aligned} still hold for $F,F'$: \ref{flowprop_1} holds as $F$ still only uses the designated paths in $\pset$; \ref{flowprop_2} holds by definition of path switching; since paths $P_{s,s'}$ and $P_{t,t'}$ are not disjoint, the pairs $(s,s')$ and $(t,t')$ do not repel each other, which implies that $D(s,t')+D(t,s')=D(s,s')+D(t,t')$, which means that \ref{flowprop_3} also holds; and \ref{prop: aligned} holds trivially. 

We now proceed to show that the algorithm will terminate. Define a potential function $\pot(F)$ as
\[
\pot(F) =   \sum_{(s,s'), (t,t'): \text{ do not cross}} F(s,s') \cdot F(t,t').
\]
Before each iteration, $(s,s')$ and $(t,t')$ do not cross each other, and after the iteration, $(s,t')$ and $(s',t)$ cross each other. 
Moreover, it is easy to verify that for every other pair of terminals, the number of pairs in $\set{(s,s'),(t,t')}$ that it crosses is at most that of $\set{(s,t'),(t,s')}$. Therefore, after each iteration, the potential decreases by at least $1$, and so the algorithm will terminate.


\subsection{Step 2. $F$ vs $F'$: flow and cut-metric dominance}

We now show that there do not exist flows with Properties \ref{flowprop_1}, \ref{flowprop_2}, \ref{flowprop_3}, \ref{prop: aligned}, and \ref{prop: disjoint}.
Consider a pair $F,F'$ of flows satisfying Properties
\ref{flowprop_1}, \ref{flowprop_2}, \ref{prop: aligned} and \ref{prop: disjoint}. We now show that they may not satisfy \ref{flowprop_3}:      
\[\sum_{t,t'\in T}F_{t,t'}\cdot D(t,t')\ge \sum_{t,t'\in T}F'_{t,t'}\cdot D(t,t').\]

Recall that $F$ is viewed as a collection of paths.
We define a configuration $C$ as follows: for each unit of flow sent from $t$ to $t'$ in $F$ (corresponding to a $t$-$t'$ path in $F$), we add a chord $(t,t')$ to $C$. We define a configuration $C'$ in a similar way for $F'$. 

\begin{figure}[h]
\centering
\includegraphics[width=0.4\linewidth]{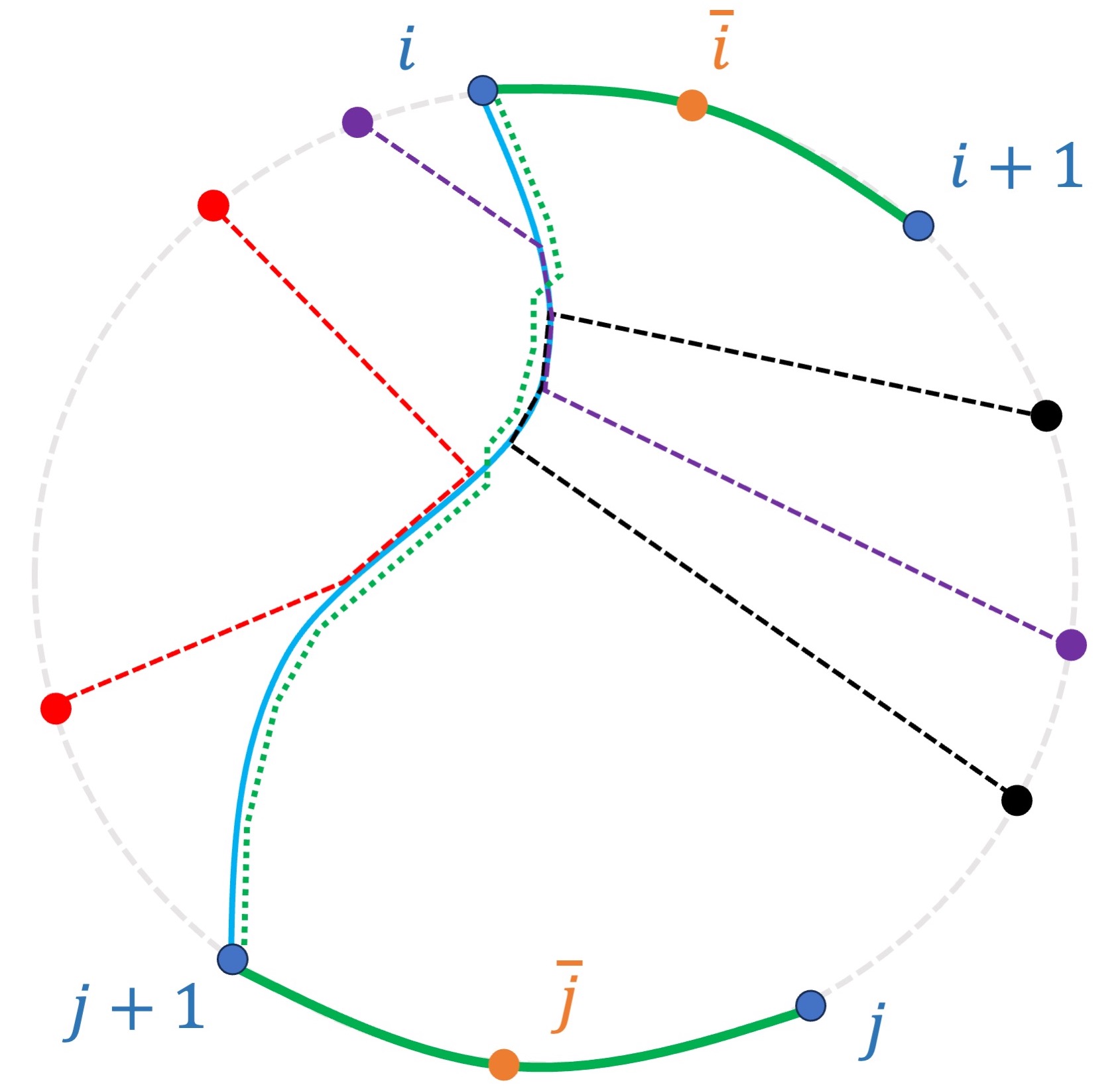}
\caption{The $i$-$(j+1)$ path $P$ (blue), and the curve $\gamma$ along it (dashed green line).\label{fig: dominate}}
\end{figure} 

\begin{claim}
\label{clm: dominance}
    $C \succeq C'$.
\end{claim}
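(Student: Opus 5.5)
We fix a cut $(\bar i,\bar j)$, with $\bar i$ between $i$ and $i+1$ and $\bar j$ between $j$ and $j+1$, and aim to show that the number of chords of $C'$ crossing it is at most the number of chords of $C$ crossing it. As in \Cref{fig: dominate}, we compare the two flows on a single curve $\gamma$ drawn in the disk from $\bar i$ to $\bar j$. We take $\gamma$ to run parallel and very close to a designated path $P=P_{i,j+1}\in\pset$, on the side facing the terminals $i+1,\dots,j$. Such a $\gamma$ meets the drawing of $G$ only in interior points of edges incident to $P$ on that side, together with possibly the first and last boundary segments near $i$ and $j+1$. Let $E_\gamma$ be the multiset of edges crossed by $\gamma$.

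The upper bound for $F'$ is topological. Every path of $F'$ whose chord separates $\{i+1,\dots,j\}$ from $\{j+1,\dots,i\}$ has its two endpoints on opposite sides of $\gamma$. Hence it must traverse some edge of $E_\gamma$, because $\gamma$ avoids all vertices. So
\[
\#\{\text{chords of } C' \text{ crossing } (\bar i,\bar j)\}\le \sum_{e\in E_\gamma} F'(e) < \sum_{e\in E_\gamma} F(e),
\]
where the last inequality is Property \ref{flowprop_2}; if $E_\gamma$ is empty, both sides are trivially $0$. It remains to show
\[
\sum_{e\in E_\gamma}F(e)\le \#\{\text{chords of } C \text{ crossing } (\bar i,\bar j)\}.
\]
This says that each designated path used by $F$ crosses $\gamma$ at most once, and crosses it only when its endpoints are separated by the cut. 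Since $\gamma$ hugs $P$, this reduces to understanding how a designated path $P_{s,s'}$ can meet $P$. Their intersection is empty or a common subpath, by the definition of a shortest path structure. There are three cases. If $P_{s,s'}$ passes through $P$ from one side to the other, it crosses $\gamma$ exactly once, and then $s,s'$ are separated, so its chord crosses the cut. If it merely shares a subpath with $P$ and stays on the far side, it crosses $\gamma$ zero times. If it touches $P$ from the $\gamma$-side, it crosses $\gamma$ twice; this is the bad case. Paths with an endpoint equal to $i$ or $j+1$ need a small separate check: they start or end on the far side of $\gamma$, so they behave like one of the cases above.

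The main obstacle is the bad case: an $F$-path $P_{s,s'}$ with $i<s<s'\le j$ (cyclically) that touches $P_{i,j+1}$ from the inside. The pairs $(s,s')$ and $(i,j+1)$ appear in the non-crossing order $s,s',j+1,i$, so \Cref{clm:not-repel-switch} applies with $a=s$, $b=s'$, $c=j+1$, $d=i$. Switching therefore yields the designated paths $P_{s,j+1}$ and $P_{i,s'}$. The first thing I would try is to replace $P$ by a better choice of designated path: among the designated paths from a terminal of $\{j+1,\dots,i\}$ to a terminal of $\{j+1,\dots,i\}$ (or running along the boundary arc), pick one whose region on the $\{i+1,\dots,j\}$ side is inclusion-minimal. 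If some $F$-path $P_{s,s'}$ touched it from the inside, the switched designated path ($P_{i,s'}$, or symmetrically $P_{s,j+1}$) would, as the bracketing below shows, give a strictly smaller region, contradicting minimality. The comparison uses only the non-crossing property of $\pset$ (pairwise intersections are subpaths), which forces the switched paths to bracket the touching region. Property \ref{prop: disjoint}, the pairwise disjointness of nested $F$-paths, would be used to ensure that two $F$-paths on the inside cannot together force $\gamma$ to be crossed twice. Once each $F$-path meets $\gamma$ at most once, and only when its chord crosses the cut, the displayed inequalities combine to give $\dist_{C'}(\bar i,\bar j)\le\dist_C(\bar i,\bar j)$ for every pair, that is, $C\succeq C'$.
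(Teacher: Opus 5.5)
There is a genuine gap, at exactly the point you flag as the main obstacle. Your bound for $F'$ is fine, and so is the reduction to $\sum_{e\in E_\gamma}F(e)\le \dist_C(\bar i,\bar j)$. But for a $\gamma$ that hugs one fixed side of a designated path this inequality is false in general, and the inclusion-minimality repair does not rescue it.

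\emph{Why minimality gives no contradiction.} When an $F$-path $P_{s,s'}$ with $s,s'\in\{i+1,\dots,j\}$ touches your path from the inside, Claim~\ref{clm:not-repel-switch} produces $P_{i,s'}$ and $P_{s,j+1}$. Each of these has an endpoint in $\{i+1,\dots,j\}$, so neither is a path between two terminals of $\{j+1,\dots,i\}$, and neither competes in your minimization. The promised ``bracketing'' is never supplied.

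\emph{Why no member of your family can work.} Take a star with hub $h$, where every designated path has the form $x$--$h$--$y$. Let $F$ contain $P_{s,s'}$ with $s,s'\in\{i+1,\dots,j\}$. Every designated path between two terminals of $\{j+1,\dots,i\}$ passes through $h$ and is touched there from the inside by $P_{s,s'}$. A curve running along the boundary arc also crosses both $sh$ and $s'h$. So $P_{s,s'}$ contributes $2$ to $\sum_{e\in E_\gamma}F(e)$ but $0$ to $\dist_C(\bar i,\bar j)$. Hugging the outside fails symmetrically for $F$-paths with both ends in $\{j+1,\dots,i\}$, and both kinds can touch the same path at different vertices.

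The missing idea is how Property~\ref{prop: disjoint} enters. The paper keeps $P=P_{i,j+1}$ but lets $\gamma$ weave across it. Let $R$ be the set of vertices of $P$ lying on $F$-paths with both endpoints in $\{i+1,\dots,j\}$, and $L$ the set lying on $F$-paths with both endpoints in $\{j+1,\dots,i\}$. A pair from one side and a pair from the other are non-crossing, so Property~\ref{prop: disjoint} gives $L\cap R=\emptyset$.

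Hence $\gamma$ can run along the inside of $P$, cross to the outside just before a vertex of $R$, and cross back just before a vertex of $L$. The resulting curve meets no same-side $F$-path and meets each crossing $F$-path once. That is exactly the inequality you need, and your topological argument for $F'$ together with Property~\ref{flowprop_2} then finishes the proof.

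The role you assign to Property~\ref{prop: disjoint}, preventing two inside $F$-paths from jointly forcing a double crossing, is not where it is needed. Its job is to keep the inside-touching and outside-touching same-side paths apart along $P$, so that $\gamma$ can dodge both.
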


\begin{proof}
Consider any cut $(\bar{i},\bar{j})$. On the one hand, by definition,
$\dist_C(\bar{i},\bar{j})=\sum_{(t,t'): \text{ crosses } (\bar{i},\bar{j})} F_{t,t'}$. On the other hand, we show that
$\dist_{C'}(\bar{i},\bar{j})\le \sum_{(t,t'): \text{ crosses } (\bar{i},\bar{j})} F_{t,t'}$.

Assume that $(\bar{i},\bar {j})$ cuts the boundary into two segments (named left and right).
Consider the $i$-$(j+1)$ path, which we denote by $P$.
We define $L$ as the set of vertices in $P$ that belongs to some designated path $P_{\ell,\ell'}$ connecting a pair of terminals $\ell,\ell'$ lying on the left boundary with $F_{\ell,\ell'}>0$, and we define $R$ similarly.
From Property \ref{prop: disjoint}, $L\cap R=\emptyset$. We now draw a curve $\gamma$ along $P$ as follows: start from the right side of $i$, follow the trajectory of $P$ on the same side until we reach a vertex in $R$, switch to the left side of $P$, follow the trajectory of $P$ until we reach a vertex in $L$, switch to the right side of $P$, $\ldots$ , until finally we reach $(j+1)$ from the right side (see \Cref{fig: dominate}).

We now show that the curve $\gamma$ satisfies the following properties:
\begin{itemize}
    \item for every pair $\ell,\ell'$ of terminals on the left segment with $F_{\ell,\ell'}>0$, $P_{\ell,\ell'}$ does not intersect $\gamma$; 
    \item for every pair $r,r'$ of terminals on the left segment with $F_{r,r'}>0$, $P_{r,r'}$ does not intersect $\gamma$; 
    \item for every pair $\ell,r$ of terminals with $\ell$ on the left segment and $r$ on the right segment such that $F_{\ell,r}>0$, $P_{\ell,r}$ intersects $\gamma$ exactly once; 
\end{itemize}

The first and second properties are immediate according to our construction of $\gamma$.
We now verify the third property.
Assume that some left-right terminal path $P'$ intersects the curve $\gamma$ more than once. Denote $X=P\cap P'$, so on $X$, $\gamma$ switches from the left side to the right side of $P$. If $\gamma$ switches back to the right side of $P$ before reaching the end of $X$, this means that it meets a vertex from $L$, which is a contradiction since such a vertex belongs to some $P_{\ell,\ell'}$ with $\ell,\ell'$ on the left segment and $F_{\ell,\ell'}>0$, and from Property \ref{prop: disjoint} it should be disjoint from $P'$. Therefore, $\gamma$ cannot switch back to the right side of $P$ before reaching the end of $X$, which means that $\gamma$ intersects $P'$ exactly once.

Finally, by Property \ref{flowprop_3}, for each edge $e$, the amount of $F'$ through $e$ is smaller than the amount of $F$ through $e$. Therefore, with the above three properties of $\gamma$, the total amount of $F'$ that goes through the cut $(\bar{i},\bar {j})$ is at most $\sum_{(t,t'): \text{ crosses } (\bar{i},\bar{j})} F_{t,t'}$, which means that $\dist_{C'}(\bar{i},\bar{j})\le \sum_{(t,t'): \text{ crosses } (\bar{i},\bar{j})} F_{t,t'}$.
\end{proof}

\subsection{Step 3. Uncrossing vs dominance: chords and intersections}

In the last step, we prove the following pure graph-theoretic lemma on chords and intersections, which implies Theorem~\ref{lem: 2-path condition} with the help of \Cref{clm: dominance} and the four-point condition of Okamura-Seymour metrics.

\begin{lemma} \label{thm:main-uncross}
For any configurations $C,C'$, if $C\succ C'$, then $C\to C'$.
\end{lemma}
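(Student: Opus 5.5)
The plan is to prove the stronger statement that $C\succeq C'$ already implies $C\to C'$; the case $C=C'$ is the empty uncrossing sequence. I will assume, as the application in Step~2 supplies via Property~\ref{prop: aligned} (after splitting each terminal into as many copies as its flow multiplicity), that $C$ and $C'$ are perfect matchings on the same point set $[n]$. The induction is on the potential $\Phi(C)=\sum_{\bar i,\bar j}\dist_C(\bar i,\bar j)$.

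\emph{Effect of one uncrossing.} Let $x,z,y,w$ appear in this order, and uncross the chords $(x,y),(z,w)$ into $(x,z),(y,w)$. A cut that separates the arc containing $\{x,z\}$ from the arc containing $\{y,w\}$ drops from value $2$ to $0$. Every other cut keeps its value: cuts separating $\{x,w\}$ from $\{z,y\}$ stay at $2$, and cuts isolating a single endpoint stay at $1$. So $\Phi$ strictly decreases, which gives termination.

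\emph{Parity.} For any cut, $\dist_C(\bar i,\bar j)$ and $\dist_{C'}(\bar i,\bar j)$ are congruent mod $2$ to the number of points on one side. Hence every slack $\sigma(\bar i,\bar j)=\dist_C(\bar i,\bar j)-\dist_{C'}(\bar i,\bar j)$ is a nonnegative even integer.

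\emph{Base case.} Suppose $C$ is noncrossing and $C\succeq C'$. Take an innermost chord $(a,a+1)$ of $C$. The cut that separates $\{a,a+1\}$ from the rest has $\dist_C=0$, so $\dist_{C'}=0$ as well, forcing $(a,a+1)\in C'$. Delete this chord from both configurations. Dominance is preserved, because deleting a chord common to both lowers each cut value by the same amount in $C$ and $C'$. By induction on $n$, $C=C'$.

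\emph{Inductive step.} Now let $C$ be crossing and $C\succeq C'$, $C\ne C'$. The goal is to find a crossing pair in $C$ together with one of its two uncrossings such that every cut whose value drops by $2$ has slack $\sigma\ge 2$; the result $C^*$ then still dominates $C'$, and induction on $\Phi$ finishes.

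\emph{Reduction.} First strip common structure. If some cut has $\sigma=0$ and value $0$, or more generally if $C$ and $C'$ share an innermost chord, delete it and recurse on smaller $n$. So assume no such reductions apply.

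\emph{First attempt.} Choose a cut $(\bar i,\bar j)$ with $\sigma\ge 2$ whose smaller side $I$ is minimal under inclusion. I expect some chord $(x,y)$ with $x\in I$ and $y\notin I$ to cross another chord $(z,w)$ of $C$ with $z\in I$ and $w\notin I$. Uncrossing them into the pair whose chords both lie on one side of $(\bar i,\bar j)$ only decreases cuts that separate $\{x,z\}$ from $\{y,w\}$. I then need to argue, using the minimality of $I$ and the fact that the $\dist$ functions are cut functions of a circular order, that all such cuts still have slack. A tight cut there, combined with $(\bar i,\bar j)$ via the four-point inequality that both $\dist_C$ and $\dist_{C'}$ satisfy, should produce a smaller side with positive slack, contradicting minimality.

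\emph{Main obstacle.} This existence step is the hardest part. The difference $\sigma$ is not itself Kalmanson, so the combination argument in the first attempt may fail. If it does, the fallback is to induct on $n$ directly. I would look at the point $x$ for which $C$ and $C'$ match $x$ to different partners, chosen to minimize the arc length of $x$'s chord in $C'$. Then I would show that the $C'$-chord at $x$ is compatible with an uncrossing in $C$ that creates this chord while preserving $\succeq$, reducing to configurations sharing that chord.
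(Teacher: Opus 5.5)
Your preliminary facts are sound: uncrossing $(x,y),(z,w)$ into $(x,z),(y,w)$ lowers by $2$ exactly the cuts separating $\{x,z\}$ from $\{y,w\}$, slacks $\sigma$ are even, and a noncrossing $C$ with $C\succeq C'$ equals $C'$. Reducing the lemma to ``find one crossing pair of $C$ whose uncrossing keeps $C^*\succeq C'$'' is also how the paper organizes its proof. But that existence step is the entire lemma, and you leave it unproved.

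Your first attempt in fact fails as stated. Suppose $I=[p,q]$ is inclusion-minimal with $\sigma(I)\ge 2$. Then every proper sub-arc of $I$ has $\sigma=0$. Your uncrossing also lowers the cut of the sub-arc of $I$ between $x$ and $z$, so the only admissible choice is $\{x,z\}=\{p,q\}$. (Comparing $I$ with $[p,q-1]$ and $[p+1,q]$ shows that $p$ and $q$ have $C$-partners outside $I$ and $C'$-partners inside $I$.) These two chords need not cross. Take $n=8$, $C=\{(1,6),(2,4),(3,7),(5,8)\}$ and $C'=\{(1,2),(3,4),(5,6),(7,8)\}$. One checks that $\dist_C\ge\dist_{C'}$ on all $28$ cuts and that $I=\{1,2\}$ is minimal with $\sigma(I)=2$. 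Yet the only $C$-chords leaving $I$, namely $(1,6)$ and $(2,4)$, are nested.

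The same instance defeats your fallback. Since every chord of $C'$ joins adjacent points, $x=1$ is a legitimate minimizer of the $C'$-arc length. But $(1,2)$ cannot be created by a single uncrossing, because $(1,6)$ and $(2,4)$ do not cross. A dominance-preserving move does exist here (uncross $(2,4),(3,7)$ into $(3,4),(2,7)$), but nothing in your plan locates it.

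The missing idea is the paper's selection rule via \emph{good pairs}. For $x$ with $C(x)\ne C'(x)$, let $S(x)$ be the arc from $C(x)$ to $C'(x)$ avoiding $x$. Call $(x,y)$ good if $(x,C(x))$ and $(y,C(y))$ cross, $y$ and $C'(x)$ lie on opposite sides of $(x,C(x))$, and $C(y)$ or $C'(y)$ lies in $S(x)$.

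Dominance at the cut $(\bar x,\overline{C(x)-1})$ forces a good pair to exist. For the good pair with the shortest $x$--$y$ arc, a case analysis stepping from $x$ toward $y$ shows that every cut with one end in the $x$--$y$ arc and the other in $S(x)$ has slack at least $2$. These cuts, with $S(y)$ added when $C(y)\notin S(x)$, cover all cuts lowered by replacing $(x,C(x)),(y,C(y))$ with $(x,C(y)),(y,C(x))$. Anchoring at a discrepant point and its arc $S(x)$, rather than at a minimal slack arc, is what makes the slack bound provable.

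Separately, your reduction to perfect matchings is only asserted. Dominance survives splitting a shared endpoint only if the copies are attached as a mutually crossing fan in $C$ and a nested fan in $C'$. That requires the unimodality check the paper carries out.
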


We first consider the special case where the chords do not share endpoints.
That is, we only consider configurations where the chords give a perfect matching on all points on the boundary.
We will prove the following lemma, which immediately implies \Cref{thm:main-uncross} in this special case.

\begin{lemma} \label{lem:find-one}
If $C\succ C'$, then $C$ contains a pair of chords that cross each other, such that the configuration obtained from $C$ by uncrossing these chords still dominates $C'$.
\end{lemma}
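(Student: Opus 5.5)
We would first pin down exactly which cuts an uncrossing affects. Let $(x,y),(z,w)\in C$ cross, with $x,z,y,w$ in clockwise order, and suppose we replace them by $(x,z),(y,w)$. A cut $(\bar i,\bar j)$ splits the boundary into two arcs, and its count changes only if it separates these four endpoints. Checking the cases, it drops by exactly $2$ on the cuts having $\{x,z\}$ on one arc and $\{y,w\}$ on the other, and is unchanged on every other cut. (If one side holds a single endpoint, both old and new configurations contribute one chord; for the split $\{x,w\}\mid\{z,y\}$ both contribute two.) The other uncrossing $(x,w),(z,y)$ behaves symmetrically, with the split $\{x,w\}\mid\{z,y\}$.

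Define the excess $f(\bar i,\bar j)=\dist_C(\bar i,\bar j)-\dist_{C'}(\bar i,\bar j)\ge 0$. Call a cut \emph{tight} if $f=0$ and \emph{loose} otherwise. Since $C,C'$ are perfect matchings on the same point set, $\dist_C(\bar i,\bar j)\equiv\dist_{C'}(\bar i,\bar j)\equiv j-i \pmod 2$, so $f$ is even and every loose cut has $f\ge 2$. By the first paragraph, Lemma~\ref{lem:find-one} therefore reduces to a purely combinatorial statement: there is a crossing pair in $C$ and one of its two uncrossings whose affected cuts (those separating the two new chords from each other) are all loose.

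To find such a pair, we would take a loose cut whose side $I=\{i+1,\ldots,j\}$ is inclusion-minimal among sides of loose cuts; one exists since $C\succ C'$. Moving one endpoint of a cut past one point changes each of $\dist_C$ and $\dist_{C'}$ by $\pm1$. Minimality forces removing either end point $a=i+1$ or $b=j$ of $I$ to give a tight cut. Hence in $C$ the chords at $a$ and at $b$ leave $I$, while in $C'$ they stay inside $I$. We would then use the relation $\dist_C(I)>\dist_{C'}(I)$, together with tightness of all proper subintervals that are sides of cuts contained in $I$, to locate a chord of $C$ with both ends inside $I$ that crosses the chord at $a$ or at $b$. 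Either such a chord exists, or the chords of $C$ restricted to $I$ would match those of $C'$ too closely to keep all subintervals tight while leaving $I$ loose. Uncrossing this pair toward the side that pairs $a$ (or $b$) with an interior endpoint makes every affected cut have a side containing $a$ and lying within $I$ or covering $I$. We would check these cuts are loose, using the same $\pm1$ parity bookkeeping and minimality of $I$.

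The main obstacle is this last verification. Tight cuts can sit very close to $I$, and one has to show that the affected cuts of the chosen uncrossing avoid every one of them. If the inclusion-minimal choice of $I$ does not suffice, we would instead pick, among all crossing pairs of $C$ with an endpoint in a loose region, one whose affected family of cuts is extremal (for instance, of minimal total span). We would then argue by contradiction: a tight cut among the affected ones would split the configuration into two sides on which $C$ and $C'$ have equal cut counts. That would yield a crossing pair with strictly smaller span, contradicting the extremal choice. Repeated application of Lemma~\ref{lem:find-one} then gives $C\to C'$ in the perfect-matching case, since each step decreases $\sum f$ by a positive even amount and the process stops only when $f\equiv 0$.
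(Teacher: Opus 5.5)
\textbf{Verdict: there is a genuine gap.} Your reduction is sound, but the step that is supposed to produce the crossing pair, which is the whole content of the lemma, cannot work as written.

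\textbf{What is correct.} The bookkeeping of which cuts an uncrossing changes is right. The parity remark correctly shows that every loose cut has excess at least $2$, so it suffices to find an uncrossing all of whose affected cuts are loose. This is exactly how the paper finishes, and your parity observation lets the paper's ``$\ge 2$'' claim follow from ``$>0$''. Minimality of $I$ does give $f(I)=2$, $C(a),C(b)\notin I$ and $C'(a),C'(b)\in I$.

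\textbf{Why the key step fails.} $I$ is the clockwise arc from $a$ to $b$ and $C(a)\notin I$, so all of $I\setminus\{a\}$ lies on one side of the chord $(a,C(a))$. Hence no chord with both ends in $I$ can cross it, and likewise for $(b,C(b))$. Your dichotomy therefore always lands in its second branch, which cannot yield a contradiction because the hypotheses are satisfiable.

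The natural repair also fails. Suppose $(q,r)\in C$ with $a<q<b$ crosses $(a,C(a))$. Then $r\notin I$, and uncrossing to $(a,q),(C(a),r)$ lowers the cut with side $\{a,\ldots,q\}$. That is a proper subarc of $I$, so it is tight by minimality and domination breaks.

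More generally, minimality of $I$ only tells you which uncrossings are \emph{forbidden}: those affecting a proper subarc of $I$. In fact it forces $(a,b)\in C'$, and $C(q)=C'(q)$ for every $a<q<b$ with $C(q)$ or $C'(q)$ in $I$. It never tells you which uncrossings are allowed, because the cuts an admissible move actually affects are not controlled by the choice of $I$.

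\textbf{Example.} Take $C=\{(1,5),(2,4),(3,6)\}$ and $C'=\{(1,2),(3,4),(5,6)\}$.
\begin{itemize}
\item $C\succ C'$, and $I=\{1,2\}$ is a minimal loose arc.
\item The chords of $C$ at $1$ and $2$ are nested, so they do not cross.
\item The only admissible uncrossings are $(1,5),(3,6)\to(1,3),(5,6)$ and $(2,4),(3,6)\to(2,6),(3,4)$.
\item Both use the chord $(3,6)$, which is disjoint from $I$.
\end{itemize}

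\textbf{The fallback.} The ``minimal total span'' argument only restates the goal. It does not explain why a tight affected cut would produce a smaller crossing pair whose own uncrossing is admissible.

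\textbf{What is missing.} You need an extremal object that carries a self-propagating invariant. The paper supplies one:
\begin{itemize}
\item It defines \emph{good pairs} $x,y$: the chords $(x,C(x))$ and $(y,C(y))$ cross, $y$ and $C'(x)$ are separated by $(x,C(x))$, and $C(y)$ or $C'(y)$ lies in $S(x)$.
\item It shows a good pair exists by comparing $\dist_C$ and $\dist_{C'}$ on the cut $(\bar x,\overline{C(x)-1})$.
\item It takes a good pair with the shortest $x$--$y$ segment.
\item By induction along that segment, it shows every cut from the segment to $S(x)$ keeps excess at least $2$; minimality rules out each case where the excess could drop.
\item This block of cuts contains every cut affected by uncrossing to $(x,y),(C(x),C(y))$, using the symmetric good pair $y,x$ when only $C'(y)\in S(x)$.
\end{itemize}
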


\begin{proof}
For a point $x$, denote by $C(x)$ ($C'(x)$, resp.) the point to which $x$ is matched in $C$ ($C'$, resp.), and denote by $S(x)$ the boundary segment from $C(x)$ to $C'(x)$ that does not contain $x$. 
We say a pair $x,y$ of points is \emph{good}, iff 
\begin{itemize}
\item the chords $(x,C(x))$ and $(y,C(y))$ cross each other;
\item $y$ and $C'(x)$ lie on different sides of the chord $(x,C(x))$; and
\item either $C(y)\in S(x)$ or $C'(y) \in S(x)$.
\end{itemize}
The proof is centered around good pairs, and we first prove the following two claims.
\begin{claim} \label{clm:good}
        There is at least one good pair.
    \end{claim}
    \begin{proof}
Assume for contradiction that there is no good pair. Choose a point $x$ such that $C(x)\neq C'(x)$. Assume w.l.o.g. that $x, C'(x), C(x)$ appear on the boundary clockwise in this order. Consider the cut $\left(\overline{x},\overline{C(x)-1}\right)$. The chord $(x,C(x))$ does not cross this cut, while $(x,C'(x))$ does.

For a point $y$ that lies clockwise between $C(x)$ and $x$, if the chord $(y,C(y))$ crosses $\left(\overline{x},\overline{C(x)-1}\right)$, then it must also cross $(x,C(x))$. Therefore, $C(y)$ must lie clockwise between $x$ and $C'(x)$, otherwise $(x,y)$ is a good pair. 
Now consider the location of $C'(y)$. If it lies clockwise between $y$ and $x$, then $(y, C(x))$ is a good pair. If it lies clockwise between $C(x)$ and $y$, then $(y,x)$ is a good pair. If it lies clockwise between $C'(x)$ and $C(x)$, then $(x,y)$ is a good pair. Hence $C'(y)$ can only lie clockwise between $x$ and $C'(x)$. Therefore, $(y, C'(y))$ also crosses the cut $\left(\overline{x},\overline{C(x)-1}\right)$.
However, this implies that
\[\dist_C\left(\overline{x},\overline{C(x)-1}\right)<\dist_{C'}\left(\overline{x},\overline{C(x)-1}\right),\]  a contradiction towards the condition that $C\succ C'$.
\end{proof}

We now choose the good pair $x,y$ such that the $x$-$y$ segment that does not contains $C(x)$ contains the minimum number of points. For such a good pair, we prove the following lemma.
    \begin{claim} \label{clm:ge2}
        For any $\bar i$ between $x$ and $y$, and $\bar j$ between $C(x)$ and $C'(x)$, $\dist_C(\bar i,\bar j)-\dist_{C'}(\bar i,\bar j)\ge 2$.
    \end{claim}
    \begin{proof}
Assume w.l.o.g. that points $x$, $C(x)$, and $C'(x)$ appear on the boundary clockwise in this order. For any $\bar j \in S(x)$, consider the cuts $(\overline{x-1},\bar j)$ and $(\bar x,\bar j)$. Observe that 
$\dist_C(\overline{x-1},\bar j)-\dist_C(\bar x,\bar j)$ and
$\dist_C'(\overline{x-1},\bar j)-\dist_C'(\bar x,\bar j)$ only depends on the locations of chords $(x,C(x))$ and $(x,C'(x))$. Under our positional assumption, we have $\dist_C(\bar x,\bar j)=\dist_C(\overline{x-1},\bar j)+1$ and $\dist_{C'}(\bar x,\bar j)=\dist_{C'}(\overline{x-1},\bar j)-1$. Since $\dist_C(\overline{x-1},\bar j)\ge\dist_{C'}(\overline{x-1},\bar j)$, it follows that $\dist_C(\bar x,\bar j)-\dist_{C'}(\bar x,\bar j)\ge 2$.

We now show that the inequality still holds if we replace $\bar x$ in the above discussion by any $\bar i$ between $x$ and $y$. We prove by induction, where for base case $\bar i = \bar x$ the proof is provided above.
Consider now a general $\bar i$ between $x$ and $y$, the induction hypothesis gives $\dist_C(\overline{i-1},\bar j)-\dist_{C'}(\overline{i-1},\bar j)\ge 2$. We now examine the locations of $C(i)$ and $C'(i)$.

\paragraph{Case 1: $C(i)$ lies clockwise between $i$ and $C(x)$.} In this case, $\dist_C(\bar i,\bar j)=\dist_C(\overline{i-1},\bar j)+1$. Consequently,
$\dist_C(\bar i,\bar j)-\dist_{C'}(\bar i,\bar j)\ge \dist_C(\overline{i-1},\bar j)-\dist_{C'}(\overline{i-1},\bar j)\ge 2.$

\paragraph{Case 2: $C(i)$ lies clockwise between $C(x)$ and $C'(x)$.} In this case, $x,i$ is a good pair and the $x$-$i$ segment contains fewer points than the $x$-$y$, a contradiction to our assumption on the pair $x,y$.

\paragraph{Case 3: $C(i)$ lies clockwise between $C'(x)$ and $x$.} In this case, $\dist_C(\bar i,\bar j)=\dist_C(\overline{i-1},\bar j)-1$. We then examine the position of $C'(i)$. If $C'(i)$ lies clockwise between $i$ and $C'(x)$, then $i,x$ forms a good pair such that the $x$-$i$ segment contains fewer points than the $x$-$y$, a contradiction.
If it lies clockwise between $C'(x)$ and $i$, then $\dist_{C'}(\bar i,\bar j)=\dist_{C'}(\overline{i-1},\bar j)-1$, which implies that
        \[
            \dist_C(\bar i,\bar j)-\dist_{C'}(\bar i,\bar j) = \dist_C(\overline{i-1},\bar j)-\dist_{C'}(\overline{i-1},\bar j)\ge 2.
        \]
        
\paragraph{Case 4: $C(i)$ lies clockwise between $x$ and $i$.} In this case, $\dist_C(\bar i,\bar j)=\dist_C(\overline{i-1},\bar j)-1$, and we examine the position of $C'(i)$. If $C'(i)$ lies clockwise between $C'(x)$ and $i$, then the proof is the same as Case 3. It remains to consider the case where $C'(i)$ lies clockwise between $i$ and $C'(x)$. Consider the cut $(\bar i,\overline{C(i)-1})$. The chord $(i,C(i))$ does not cross it, while the chord $(i,C'(i))$ does. This implies that there exists a point $k$ between $i$ and $C(i)$ such that $(k,C(k))$ crosses $(\bar i,\overline{C(i)-1})$ while $(k,C'(k))$ does not. But then either $(k,i)$ or $(k,C(i))$ would form a good pair whose boundary segment contains fewer points than the $x$-$y$ segment, a contradiction.
\end{proof}

We now show that if we uncross the chords $(x, C(x))$ and $(y, C(y))$ in $C$, then the resulting configuration still dominates $C'$. Observe that this uncrossing operation only changes the distances between a point in the  $x$-$y$ segment and a point in the $C(x)$-$ C(y)$ segment, and all such distances are decreased by 2.
    
Since $x,y$ is a good pair, either $C(y)$ or $C'(y)$ lies in $S(x)$. If $C(y)$ lies in $S(x)$, then from \Cref{clm:ge2}, the entire $C(x)$-$C(y)$  boundary segment is contained in $S(x)$. Therefore, after uncrossing, all affected distances remain greater than those in $C'$. 
If $C(y)$ does not lie in $S(x)$ but $C'(y)$ lies in $S(x)$, then $y,x$ is also a good pair. From \Cref{clm:ge2}, for any $\bar i$ lying in the $x$-$y$ segment and any $\bar j \in S(x)\cup S(y)$, $\dist_C(\bar i,\bar j)-\dist_{C'}(\bar i,\bar j) \ge 2$. Observe that in this case $S(x) \cup S(y)$ is exactly the $C(x)$-$C(y)$ segment. Hence, after uncrossing, all affected distances remain greater than those in $C'$. 
    
This completes the proof of \Cref{lem:find-one}.
\end{proof}

We now complete the proof of \Cref{thm:main-uncross} by a reduction to the special case where each point belongs to exactly one chord. We perform iterations to split vertices.
Initially, set $\hat C=C$ and $\hat C'=C'$, so $\hat C, \hat C'$ are configurations on the same set of vertices.
Consider a vertex $v$ that is the endpoint of $a>1$ chords in $\hat C$, since flows $F,F'$ are aligned (Property \ref{prop: aligned}), $v$ is also the endpoint of $a$ chords in $\hat C'$.
We split $v$ into $a$ new vertices $v_1,v_2,\ldots, v_a$ clockwise in this order. 
Denote by $u_1,u_2,\ldots,u_a$ ($w_1,w_2,\ldots,w_a$, resp.) 
the other endpoints of $v$-incident chords in $\hat C$ ($\hat C'$, resp.) clockwise in this order. 
We replace the $v$-incident chords in $\hat C$ by $(v_1,u_1),\dots,(v_a,u_a)$, and replace the $v$-incident chords in $\hat C'$ by $(v_1,w_a),\dots,(v_a,u_1)$. 

Initially, $\hat C=C$ dominates $\hat C'=C'$.
We show that after each iteration, $\hat C$ still dominates $\hat C'$. 
There are two types of cuts to verify: those that separate the newly created vertices and those that do not.
For a cut that does not separate new vertices, clearly the values $\dist_{\hat C}(\bar i,\bar j)$ and $\dist_{\hat C'}(\bar i,\bar j)$ do not change, and so $\dist_{\hat C}(\bar i,\bar j)\ge \dist_{\hat C'}(\bar i,\bar j)$ still holds.
Consider now a cut $(\bar i,\bar j)$ that separate new vertices.

If both $\bar i$ and $\bar j$ lie within the segment $(v_1,\ldots,v_a)$, then $\dist_{\hat C}(\bar{i},\bar{j})=\dist_{\hat C'}(\bar{i},\bar{j})$ as they both equal the number of new vertices between $\bar{i}$ and $\bar{j}$. Consider now the case where $\bar i$ is in $(v_1,\ldots,v_a)$ while $\bar j$ is not. Specifically, let $\bar{i_0}$ be the point to the left of $v_1$, and for each $1 \le \ell \le a-1$, let $\bar{i_{\ell}}$ be the be the point between $v_{\ell}$ and $v_{\ell+1}$ on the boundary, and let $\bar{i_a}$ be the point to the right of $v_a$. 

Consider the number of new chords in $\hat C$ crossed by the cut $(\bar i,\bar j)$.
Assume that $(\bar{i_{0}},\bar{j})$ crosses $a_1$ new chords, then $(\bar{i_{a}},\bar{j})$ crosses $a-a_1$ new chords, for each $0 \le \ell \le a-a_1$, $(\bar{i_{\ell}},\bar{j})$ crosses $a_1+\ell$ new chords, and for each $a-a_1 \le \ell \le a$, $(\bar{i_{\ell}},\bar{j})$ crosses $a-(\ell-a_1)=a_1+a-\ell$ new chords. 
Therefore, when $\ell$ goes from $0$ to $a$, the value of $\dist_{\hat C}(\bar{i_{\ell}},\bar{j})$ first increases and then decreases. 
Similarly, we can show that value of $\dist_{\hat C'}(\bar{i_{\ell}},\bar{j})$ first decreases and then increases. Since we have shown that $\dist_{\hat C}(\bar{i_{0}},\bar{j}) \ge \dist_{\hat C'}(\bar{i_{0}},\bar{j})$ and $\dist_{\hat C}(\bar{i_{a}},\bar{j}) \ge \dist_{\hat C'}(\bar{i_{a}},\bar{j})$, it follows that for each $0 \le \ell \le a$, $\dist_{\hat C}(\bar{i_{\ell}},\bar{j}) \ge \dist_{\hat C'}(\bar{i_{\ell}},\bar{j})$ holds, and so after this iteration, $\hat C$ still dominates $\hat C'$.

Therefore, we can continue with the next iteration until all vertices incident to more than chords have been separated. And then we are in the special case where the chords do not share endpoints, whose proof is already provided.

\emph{Completing the proof of Theorem~\ref{lem: 2-path condition}}.
It is easy to verify that if configurations $C$ and $C'$ induce the same metric on $\set{\bar 1,\ldots,\bar n}$, then $C=C'$. From \Cref{clm: dominance}, $C\succeq C'$, so either $C=C'$, which means that 
\[\sum_{t,t'\in T}F_{t,t'}\cdot D(t,t')= \sum_{t,t'\in T}F'_{t,t'}\cdot D(t,t'),\]
or $C\succ C'$, which by \Cref{thm:main-uncross} implies that $C\to C'$. Since, in an Okamura-Seymour instance $(G,T)$, for any terminals $x,z,y,w$ appearing on the boundary in this order, $$\dist_G(x,y)+\dist_G(z,w)\ge \max\set{\dist_G(x,z)+\dist_G(y,w),\dist_G(x,w)+\dist_G(z,y)}.$$ As $D(\cdot,\cdot)=\dist_G(\cdot,\cdot)$ the value of $\sum_{t,t'\in T}F_{t,t'}\cdot D(t,t')$ decreases after each uncrossing. Therefore,
\[\sum_{t,t'\in T}F_{t,t'}\cdot D(t,t')\ge  \sum_{t,t'\in T}F'_{t,t'}\cdot D(t,t').\]
To sum up, we showed that any pair $F,F'$ of flows satisfying Properties \ref{flowprop_1}, \ref{flowprop_2}, \ref{prop: aligned}, and \ref{prop: disjoint} cannot satisfy Property \ref{flowprop_3}, and therefore there exist no flows $F,F'$ satisfying all the properties. This completes the proof of Theorem~\ref{lem: 2-path condition}.

\section{Recognizing Outerplanar Metrics: Proof of Theorem 2}\label{sec:outerplanar-proof}

\subsection{High level overview}

The proof of Theorem~\ref{thm: no O(1) point} is given in Appendix~\ref{apd: no O(1)-point}. In the remainder of this section, we prove Theorem~\ref{thm: main}. Throughout this overview, we assume that Theorem~\ref{lem: 2-path condition} has already been proved. Consequently, it is enough to construct an outerplanar graph together with a shortest path structure in which the paths assigned to every repelling pair are disjoint; Theorem~\ref{lem: 2-path condition} then supplies the edge lengths.

\paragraph{Step 1. Compute the circular ordering of terminals.}
An outerplanar graph with terminals is, in particular, an Okamura--Seymour instance. We therefore begin with the circular-ordering algorithm for OS metrics from~\cite{chen2025path}. It either certifies that no cyclic order satisfies the Kalmanson inequalities, in which case the metric is not outerplanar, or produces a compatible cyclic order.

There is one subtlety: the compatible order need not be unique. In this case, the ordering algorithm identifies two consecutive blocks of terminals, which we denote by $A$ and $B$, whose cross distances have the additive form
\[
D(a,b)=\alpha_a+\beta_b\qquad (a\in A,b\in B).
\]
We introduce a new terminal $o$ with $D(a,o)=\alpha_a$ and $D(o,b)=\beta_b$, and recurse on the two induced metrics on $A\cup\{o\}$ and $B\cup\{o\}$. If both are outerplanar, their realizations can be glued at $o$. Conversely, from any outerplanar realization of the original metric, the relevant points on the cross-block shortest paths can be contracted to obtain outerplanar realizations of the two subinstances. Thus any compatible order produced by the procedure is sufficient. The details appear in Section~\ref{sec: ordering}.

\paragraph{Step 2. Find a canonical outerplanar graph and segment traces.}
Once the terminal order is fixed, we massage any outerplanar realization into a canonical form: the outer boundary is a simple cycle, each terminal has degree two, every other vertex has degree three, and the shortest path between each consecutive pair of terminals is their boundary segment. Denote these boundary segments by $S_1,\ldots,S_k$. The remaining edges connect distinct segments; we call them channels. After contracting each segment to a single vertex, the channels form a maximal outerplanar graph.

\begin{figure}[h]
\centering
\includegraphics[width=0.45\linewidth]{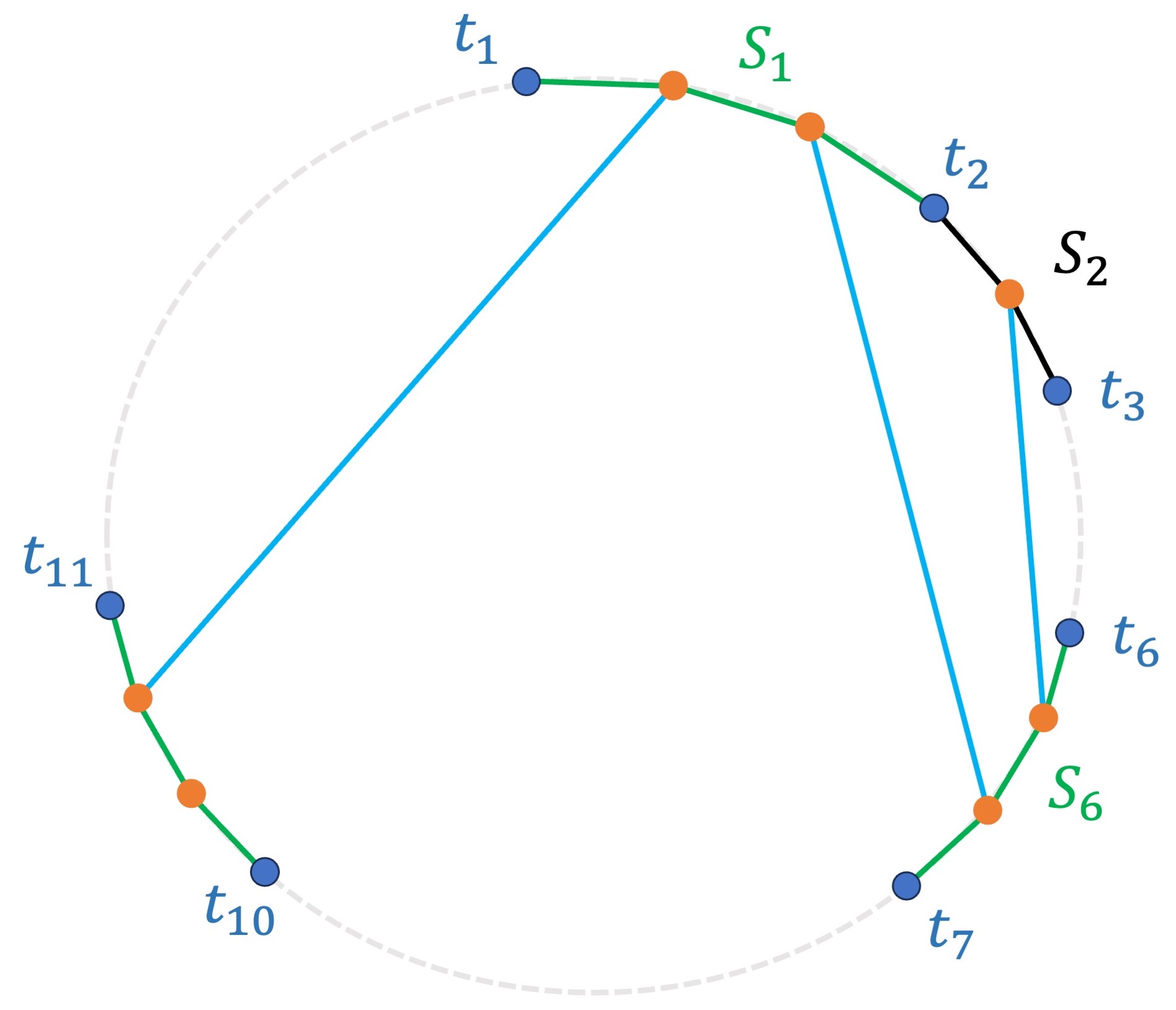}
\caption{A canonical outerplanar graph: terminals $t_1,t_2,\ldots,t_k$ and segments $S_1,\ldots,S_k$.}\label{fig: canonical_intro}
\end{figure}

The advantage of this form is that a terminal path is described, at a coarse level, by the sequence of boundary segments it visits. Since $S_r$ is the designated shortest path between $t_r$ and $t_{r+1}$, a path for the terminal pair $(t_i,t_j)$ may visit $S_r$ only if $(t_i,t_j)$ does not repel $(t_r,t_{r+1})$. We therefore first search for a maximal outerplanar channel structure together with a segment sequence for every terminal pair satisfying all of these segment-level restrictions.

This search is carried out by an interval dynamic program. A state on the segments $S_i,\ldots,S_j$ enforces the outer channel $S_iS_j$. The triangular face incident to this channel supplies an anchor $i<\ell<j$ and decomposes the state into two smaller intervals. A terminal pair separated by the anchor is routed either through the middle segment $S_\ell$ or through the outer channel $S_iS_j$; the repelling relations determine which choices are allowed. If no anchor passes these tests, the state is infeasible. This construction and its correctness proof are given in Section~\ref{sec: dp}.

\paragraph{Step 3. Convert segment traces into paths and assign weights.}
The dynamic program initially records only segment traces. To obtain an actual shortest path structure, we replace every channel by three nearby parallel edges. Paths whose two terminals lie on the same side of the channel use the copy closest to that side, while paths whose terminals lie on opposite sides use the middle copy. This local rule makes the intersection of every two selected paths either empty or a common subpath, and it keeps the paths for every repelling pair disjoint.

We may now invoke Theorem~\ref{lem: 2-path condition} to compute nonnegative edge lengths realizing the input metric. The circular-ordering stage, the dynamic program, and the final edge-length computation together run in $O(k^5)$ time.

\subsection{Computing the Circular Ordering of Terminals}
\label{sec: ordering}

As shown in the previous work \cite{ChangO20,hurkens1988tidy}, for a metric $D$ on a set $T=\set{t_1,\ldots,t_k}$ of terminals, there exists an Okamura-Seymour instance $(G,T)$ that realizes $D$ as the shortest-path distance metric on $T$, such that terminals $t_1,\ldots,t_k$ appearing on the boundary in this order, iff the following \emph{four-point condition} (called \emph{Monge property}) holds for all tuples $1\le a<b<c<d\le k$:
\[
\emph{$D(t_a,t_c)+D(t_b,t_d)\ge \max\set{D(t_a,t_b)+D(t_c,t_d), D(t_a,t_d)+D(t_b,t_c)}$.}
\]
Since outerplanar graphs with terminals are Okamura-Seymour instances, we can apply the algorithm in \cite{chen2025path} to compute a circular ordering of terminals appearing on the boundary, such that the input metric $D$ satisfies all four-point conditions or certify there are no such orderings. If the answer is \textsf{No}, then this means that $D$ is not realizable by any Okamura-Seymour instances, and we report that $D$ is not realizable by any outerplanar graphs.
However, if there is more than one circular ordering that satisfies all four-point conditions, then we are uncertain about which one we should choose to proceed with further steps.
In this section, we address this question by showing that any one will do.

One crucial feature in the algorithm of \cite{chen2025path} is that, if there is more than one ordering that satisfies all four-point conditions, then the algorithm will produce a bi-partition of the terminals into two sets $S,T$ (abusing the notation $T$ here for a subset of terminals), such that the terminals of the same set must appear together on the boundary, and for any $s_1,s_2 \in S$ and $t_1,t_2 \in T$, 
\[
D(s_1,t_1)+D(s_2,t_2)=D(s_1,t_2)+D(s_2,t_1).
\]
Denote $S=\set{s_1,\ldots,s_p}$ and  $T=\set{t_1,\ldots,t_q}$. We prove the following strengthening of this fact.


\begin{claim} \label{block-exist}
There exist non-negative real numbers $\set{\alpha_i}_{1\le i\le p}$ and $\set{\beta_j}_{1\le j\le q}$, such that
    \begin{itemize}
        \item for all $1\le i\le p$ and $1\le j\le q$, $D(s_i,t_j)=\alpha_i + \beta_j$;
        \item for all $1\le i,j\le p$, $\alpha_i+\alpha_j \ge D (s_i,s_j)$; and
        \item for all $1\le i,j \le q$, $\beta_i+\beta_j \ge D(t_i,t_j)$.
    \end{itemize}
\end{claim}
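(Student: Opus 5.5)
The plan is to use the additive structure directly. Fix $s_1\in S$ and $t_1\in T$. From the identity $D(s_i,t_j)+D(s_1,t_1)=D(s_i,t_1)+D(s_1,t_j)$ we get
\[
D(s_i,t_j)=\bigl(D(s_i,t_1)-c\bigr)+\bigl(D(s_1,t_j)-D(s_1,t_1)+c\bigr)
\]
for any real constant $c$. So every admissible choice has the form
\[
\alpha_i=D(s_i,t_1)-c,\qquad \beta_j=D(s_1,t_j)-D(s_1,t_1)+c .
\]
The first bullet then holds automatically. Since a common shift of the $\alpha$'s against the $\beta$'s is the only freedom, everything reduces to choosing the single parameter $c$.

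Next I translate the remaining conditions into an interval for $c$.

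\textbf{Upper bound on $c$.} Nonnegativity of the $\alpha_i$ requires $c\le \min_i D(s_i,t_1)$. The condition $\alpha_i+\alpha_j\ge D(s_i,s_j)$ reads
\[
2c\le D(s_i,t_1)+D(s_j,t_1)-D(s_i,s_j).
\]
By the triangle inequality the right side is nonnegative. Taking $i=j$, this also contains nonnegativity: $2c\le 2D(s_i,t_1)$.

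\textbf{Lower bound on $c$.} Set $b_j=D(s_1,t_j)-D(s_1,t_1)$. The $\beta$-conditions read $2c\ge D(t_i,t_j)-b_i-b_j$, and $i=j$ gives $c\ge -b_j$, i.e.\ nonnegativity.

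A valid $c$ exists if and only if every upper bound is at least every lower bound. Using $b_{j}=D(s_1,t_{j})-D(s_1,t_1)$ and rewriting all quantities through the additive identity, this inequality becomes
\[
D(s_i,t_1)+D(s_j,t_1)-D(s_i,s_j)+D(s_1,t_{i'})+D(s_1,t_{j'})-2D(s_1,t_1)\ge D(t_{i'},t_{j'}).
\]
Using the additive identity again ($D(s_j,t_1)+D(s_1,t_{j'})-D(s_1,t_1)=D(s_j,t_{j'})$, and similarly for the other pair), this becomes the symmetric condition
\begin{equation}\label{eq:block-key}
D(s_i,t_{i'})+D(s_j,t_{j'})\ge D(s_i,s_j)+D(t_{i'},t_{j'})
\end{equation}
for all $s_i,s_j\in S$ and $t_{i'},t_{j'}\in T$, with repetitions allowed.

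The main step, and the expected obstacle, is proving \eqref{eq:block-key}. The idea is that $S$ and $T$ occupy contiguous arcs, so $s_i,s_j,t_{i'},t_{j'}$ appear in cyclic order $s_i,s_j,t_{j'},t_{i'}$ or $s_j,s_i,t_{i'},t_{j'}$ (after possibly renaming). The Kalmanson four-point condition then gives that the "crossing" sum dominates the "parallel" sum $D(s_i,s_j)+D(t_{i'},t_{j'})$. By the additive identity the two crossing pairings $D(s_i,t_{i'})+D(s_j,t_{j'})$ and $D(s_i,t_{j'})+D(s_j,t_{i'})$ are equal. Hence whichever of them is the diagonal sum in the four-point condition, it dominates the parallel sum, which is exactly \eqref{eq:block-key}.

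Degenerate cases (coinciding points) are handled by the triangle inequality, which is the four-point inequality with repeated terminals. For example, $i=j$ gives $2D(s_i,t)\ge D(t,t')$-type bounds, and these follow from the triangle inequality.

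Finally, take $c$ to be the minimum of the upper bounds. By \eqref{eq:block-key} it is at least every lower bound, so all three bullets hold with nonnegative $\alpha_i,\beta_j$.
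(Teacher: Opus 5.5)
Your proof is correct and is essentially the paper's argument: the paper also uses the one-parameter shift freedom and fixes it by taking the $\alpha_i$ as small as possible subject to $\alpha_i+\alpha_j\ge D(s_i,s_j)$, which is exactly your choice of $c$ as half the minimum upper bound. It then obtains the $\beta$-constraints from the Kalmanson inequality applied to $s_{i^*},s_{j^*},t_j,t_i$ for a tight pair, which is your inequality $D(s_i,t_{i'})+D(s_j,t_{j'})\ge D(s_i,s_j)+D(t_{i'},t_{j'})$; your interval formulation additionally makes explicit two points the paper leaves implicit, namely nonnegativity (the $i=j$ constraints) and the degenerate cases covered by the triangle inequality.
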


\begin{proof}
For all $s_i,s_j \in S$ and all $t_\ell,t_r \in T$, 
$D(s_i,t_\ell)+D(s_j,t_r)=D(s_i,t_r)+D(s_j,t_\ell)$.
This means that for all $s_i,s_j \in S$, the value $D(s_i,t_{\ell})-D(s_j,t_{\ell})$ is the same for all $t_{\ell} \in T$, which we denote by $\Delta_{i,j}$. Moreover, $\Delta_{i,j}+\Delta_{j,m}=\Delta_{i,m}$ holds for all $i,j,m$, which means that there are real numbers $\set{\alpha_i}_{1\le i\le p}$ and $\set{\beta_j}_{1\le j\le q}$ such that for all $s_i\in S$ and $t_j \in T$, $D(s_i,t_j)=\alpha_i+\beta_j$, and they are unique up to $+\eps$ for all $\alpha_i$ and $-\eps$ for all $\beta_j$ simultaneously.

Let $\alpha_i$'s be the smallest values such that for all $s_i,s_j\in S$, $\alpha_i+\alpha_j \ge D (s_i,s_j)$. This implies that there are $s_{i^*},s_{j^*}\in S$ with $\alpha_{i^*}+\alpha_{j^*} = D(s_{i^*},s_{j^*})$. For every $t_j\in T$, $\beta_j=D(s_1,t_j)-\alpha_1$. We now show that for any $t_i,t_j \in T$, $\beta_i+\beta_j \ge D (t_i,t_j)$, which completes the proof. 

Assume for contradiction that $\beta_i+\beta_j < D (t_i,t_j)$. Consider terminals $s_{i^*},s_{j^*},t_j,t_i$, and assume without loss of generality that they appear on the boundary in this order. By Monge property, 
    \[    
    D(s_{i^*},s_{j^*})+D (t_i,t_j) \le D(s_{i^*},t_j)+D (s_{j^*},t_j) = \alpha_{i^*}+\alpha_{j^*}+\beta_i+\beta_j.
    \]
    However, $\alpha_{i^*}+\alpha_{j^*} = D(s_{i^*},s_{j^*})$ and $\beta_i+\beta_j < D (t_i,t_j)$, a contradiction.
\end{proof}

With \Cref{block-exist}, we can design a natural divide and conquer algorithm: add a new terminal $o$, augment $D$ to a metric on $T\cup \set{o}$, where
\begin{itemize}
    \item for every $s_i\in S$, $D(s_i,o)=\alpha_i$;
    \item for every $t_j\in T$, $D(t_j,o)=\beta_j$; and
    \item for all $s_i \in S$ and $t_j \in T$, $D(s_i,t_j)=D(s_i,o)+D(o,t_j)$.
\end{itemize} 
We then proceed with the induced metrics $D_{\mid S \cup \{o\}}$ and $D_{\mid T \cup \{o\}}$. If both induced metrics are realizable by outerplanar graphs, then we simply stick these graphs together at $o$, disregarding $o$ as a terminal. From the above discussion, the resulting graph is clearly an outerplanar graph realizing $D$ on $T$.
The correctness of this divide and conquer algorithm is confirmed by the following claim.

\begin{claim}
If $D$ is outerplanar, then both $D_{\mid S \cup \{o\}}$ and $D_{\mid T \cup \{o\}}$ are outerplanar.
\end{claim}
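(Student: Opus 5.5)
Given an outerplanar realization $G$ of $D$, we will build an outerplanar realization of $D_{\mid S\cup\{o\}}$ by locating a single vertex of $G$ that can play the role of $o$. By symmetry, the same construction handles $D_{\mid T\cup\{o\}}$. Since the blocks $S$ and $T$ are contiguous on the boundary, the boundary of the disk splits into an arc containing $S$ and an arc containing $T$.

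\textbf{Step 1: the pinch point.} Take the shortest paths from $s_1$ to $t_1$ and from $s_p$ to $t_q$; they connect the first and last terminals of the two arcs, so they are the "outermost" $S$-$T$ paths. From Claim~\ref{block-exist} and the equality $D(s_1,t_1)+D(s_p,t_q)=D(s_1,t_q)+D(s_p,t_1)$, the pairs $(s_1,t_1)$ and $(s_p,t_q)$ do not repel. We must decide whether these paths can be chosen to meet. If they are disjoint, outerplanarity says they bound a region containing no interior vertices, so there is a chord structure between them. The cleaner route is as follows.

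\textbf{Step 2: contract.} In $G$, consider the set $Z$ of vertices $z$ with
\[
\dist_G(s_i,z)+\dist_G(z,t_j)=D(s_i,t_j)\quad\text{for all } i,j.
\]
We aim to show that either $Z\neq\emptyset$, or we can modify $G$ while keeping it outerplanar and preserving $D$ so that $Z$ becomes nonempty. One option is to add a new vertex subdividing a suitable edge or channel, taking care that edge lengths remain nonnegative. Given $z\in Z$, set $\alpha'_i=\dist_G(s_i,z)$ and $\beta'_j=\dist_G(z,t_j)$. These satisfy $\alpha'_i+\beta'_j=D(s_i,t_j)$, so they are a shift of $(\alpha,\beta)$ by some $\eps$. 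Triangle inequality through $z$ gives $\alpha'_i+\alpha'_{i'}\ge D(s_i,s_{i'})$. By the minimality in the choice of $\alpha$, this forces $\eps\ge 0$, i.e.\ $\alpha'_i\ge\alpha_i$.

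\textbf{Step 3: realize the submetric.} Take the subgraph of $G$ spanned by the $S$-side, i.e., the part of the disk cut off by a $z$-through curve separating $S$ from $T$. Make $z$ a boundary vertex; this is automatic in an outerplanar graph, since every vertex lies on the boundary. Attach a pendant edge from $z$ to a new terminal $o$ of length $\eps'$ chosen so that $D(s_i,o)=\alpha_i$. When $\eps\ge0$ this calls for a negative correction, which is the delicate point. The fix is to choose $z$ to minimize $\alpha'$ over $Z$, and to argue that the minimizing $z$ attains $\alpha'_{i^*}+\alpha'_{j^*}=D(s_{i^*},s_{j^*})$ using the Monge property with $t$'s. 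Alternatively, one can reroute along the $S$-side shortest paths. Distances among $S$ are preserved because restricting to the $S$-side and then contracting the $T$-side into $z$ cannot shorten paths: any path leaving the $S$-side passes through the separator, and from $z$ the detour costs at least $\alpha'_i+\alpha'_{i'}\ge D$.

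\textbf{Main obstacle.} Constructing $z$ with exact tightness, $\alpha'=\alpha$, is the expected main obstacle. My first attempt will work with the intersection of the paths $P_{s_{i^*},t}$ and $P_{s_{j^*},t}$: they meet at a branch vertex $w$ with $\dist(s_{i^*},w)+\dist(w,s_{j^*})=D(s_{i^*},s_{j^*})$, and I would show that $w\in Z$ via the additive structure and outerplanarity (every $S$-$T$ path crosses the region near $w$).
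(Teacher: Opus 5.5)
\textbf{There is a genuine gap in Steps~2--3, and it is fatal to the single-vertex plan.} Your construction needs one point $z$ of $G$ with $\dist(s_i,z)=\alpha_i$ for every $i$ and $\dist(z,t_j)=\beta_j$ for every $j$. Such a point need not exist.

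Take the outerplanar graph formed by a triangle $s_1s_2v$ with unit edges, plus two pendant unit edges $vt_1$ and $vt_2$. The terminal order is $s_1,s_2,t_1,t_2$, with $D(s_i,t_j)=2$ for all $i,j$, $D(s_1,s_2)=1$ and $D(t_1,t_2)=2$. This is an instance of the claim's setting, and its normalized values are $\alpha_1=\alpha_2=\tfrac12$ and $\beta_1=\beta_2=\tfrac32$.

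The only point of $G$ at distance $\tfrac12$ from both $s_1$ and $s_2$ is the midpoint of $s_1s_2$, even if you subdivide edges. That midpoint is at distance $\tfrac52$ from each $t_j$. In this example:
\begin{itemize}
\item your set $Z$ is $\{v\}$, with $\alpha'_1=\alpha'_2=1$, so $\eps=\tfrac12$ and $\alpha'_1+\alpha'_2=2>D(s_1,s_2)$;
\item hence ``the minimizing $z$ attains $\alpha'_{i^*}+\alpha'_{j^*}=D(s_{i^*},s_{j^*})$'' is false;
\item the branch-vertex claim also fails: $P_{s_1,t_1}$ and $P_{s_2,t_1}$ first meet at $v$, and $\dist(s_1,v)+\dist(v,s_2)=2\neq 1$.
\end{itemize}
A pendant edge at $z$ can only increase distances, so no correction localized at a single vertex can repair this. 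Separately, Step~1 is abandoned midway, and the nonemptiness of $Z$ in Step~2 is only announced, never proved.

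\textbf{The missing idea is to manufacture $o$ by identifying several points rather than locating one.}
\begin{itemize}
\item For each $i,j$, let $o_{ij}$ be the point of the $s_i$-$t_j$ shortest path at distance $\alpha_i$ from $s_i$, hence at distance $\beta_j$ from $t_j$. It exists because $\alpha_i,\beta_j\ge 0$. Identify all the $o_{ij}$ into one node $o$.
\item Since $\dist(s_\ell,o_{ij})\ge D(s_\ell,t_j)-\dist(o_{ij},t_j)=\alpha_\ell$, with equality for $o_{\ell j}$, the new distance from $s_\ell$ to $o$ is exactly $\alpha_\ell$. Symmetrically, the new distance from $t_j$ to $o$ is $\beta_j$.
\item Any terminal path through $o$ between $s_i$ and $s_{i'}$ has length at least $\alpha_i+\alpha_{i'}\ge D(s_i,s_{i'})$ by Claim~\ref{block-exist}. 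The same holds on the $T$ side, and $S$-$T$ paths through $o$ have length at least $\alpha_i+\beta_j=D(s_i,t_j)$. So no terminal distance drops.
\item One then checks, as the paper does, that the $o_{ij}$ lie on edges joining the $S$-part of the boundary to the $T$-part. The identification therefore splits the graph into two outerplanar graphs sharing $o$.
\end{itemize}
In the example, this identifies the midpoints of $s_1v$ and $s_2v$. On one side you get the triangle $s_1s_2o$ with sides $1,\tfrac12,\tfrac12$. On the other side you get the tree with an edge $ov$ of length $\tfrac12$ and unit edges $vt_1,vt_2$. Both realize the required restricted metrics, which no single point of the original graph could do.
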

\begin{proof}
Let $G$ be the outerplanar graph realizing $D$. We now construct outerplanar graphs $G_S$ realizing $D_{\mid S \cup \{o\}}$ and $G_T$ realizing $D_{\mid T \cup \{o\}}$.
For all $s_i\in S$ and $t_j \in T$, let $o_{i,j}$ be the point on the $s_i$-$t_j$ shortest path in $G$ with $\dist(o_{i,j},s_i) = \alpha_i$ and $\dist(o_{i,j},t_j)=\beta_j$. 
We contract all points $o_{i,j}$ into a supernode $o$. 
See \Cref{fig: cut} for an illustration.

\begin{figure}[h]
\centering
\subfigure
{\scalebox{0.11}{\includegraphics{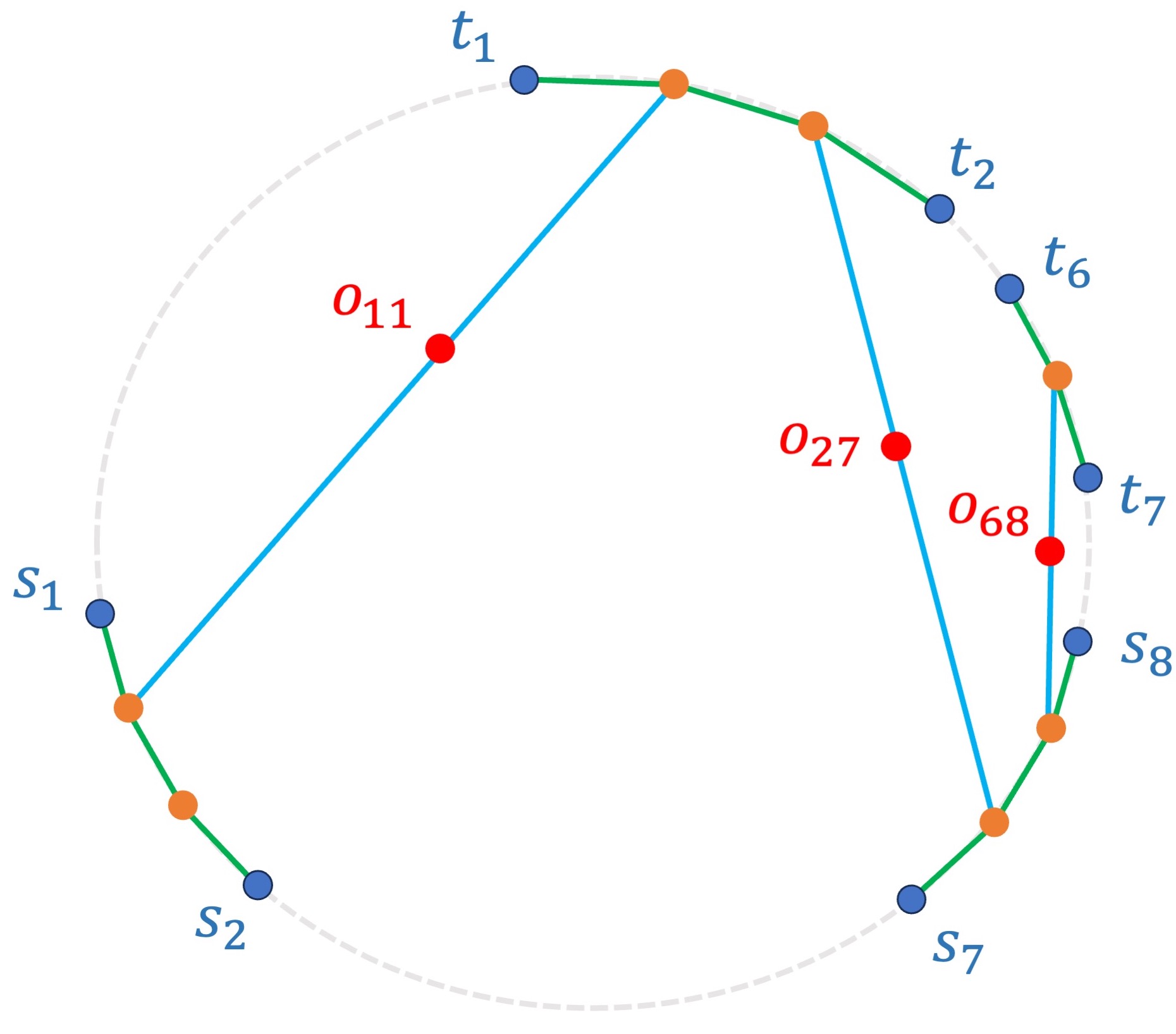}}}
\hspace{1.0cm}
\subfigure
{\scalebox{0.10}{\includegraphics{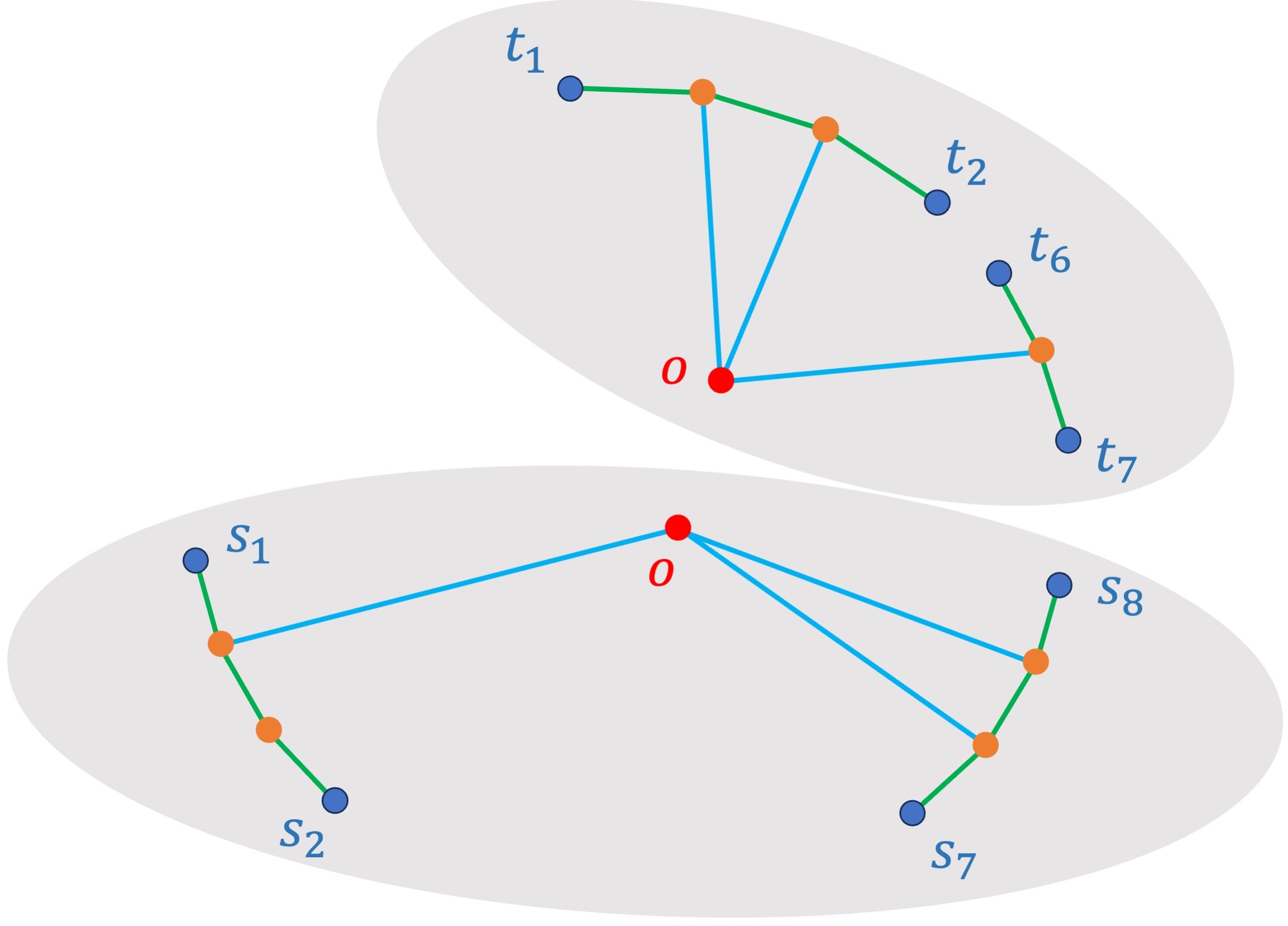}}}
\caption{Contracting all points $o_{i,j}$ into a supernode $o$, and splitting at $o$ to obtain two graphs.}\label{fig: cut}
\end{figure} 

We first prove that, after contraction, the shortest-path distance metric on terminals stays the same. Consider a terminal $s_{\ell} \in S$. For any $o_{ij}$, since $\dist(o_{ij},t_j)=\beta_j$, 
\[\dist(s_{\ell},o_{ij}) \ge \dist(s_{\ell},t_j)-\dist(o_{ij},t_j) = \alpha_{\ell}+\beta_j-\beta_j=\alpha_{\ell}.\]
Moreover, by definition $\dist(s_{\ell},o_{\ell 1})=\alpha_{\ell}$. This implies after the contraction, the new distance from $s_{\ell}$ to $o$ is exactly $\alpha_{\ell}$. 
Similarly, the new distance from $t_{\ell}$ to $o$ is exactly $\beta_{\ell}$. Therefore, for all $s_i\in S, t_j \in T$, the new distance between $s_i$ and $t_j$ is still $\alpha_i+\beta_j$. For all $s_i,s_j \in S$ (resp. $t_i,t_j \in T$), any $s_i$-$s_j$ path containing $o$ has length at least $\alpha_i+\alpha_j$, which by the properties of $\set{\alpha_i}$ is at least $D(s_i,s_j)$, so their distance is unaffected.
Similarly, the distance between any pair $t_i,t_j\in T$ is also unaffected.

We then show that, after the contraction, we indeed obtain two outerplanar graphs sharing a common vertex $o$.
From previous discussion, sets $S$ and $T$ partition the boundary of $G$  into two segments. 
We show that, for each pair $i,j$, $o_{ij}$ lies on an edge with one endpoint in $S$-segment and the other endpoint in $T$-segment.
Assume not, then without loss of generality there is a vertex $v$ on the $S$-segment with $\dist(s_i,v) > \alpha_i$, so $\dist(v,t_j) < \beta_j$. Suppose $v$ lies between $s_{\ell}$ and $s_{\ell+1}$, then $\dist(s_{\ell},v)+\dist(s_{\ell+1},v)\le \alpha_{\ell}+\alpha_{\ell+1}$. This implies either $\dist(s_{\ell},v) \le \alpha_{\ell}$ or $\dist(s_{\ell+1},v) \le \alpha_{\ell+1}$ holds. Without loss of generality, assume $\dist(s_{\ell},v) \le \alpha_{\ell}$, then $\dist(s_{\ell},v)+\dist(v,t_j) < \alpha_{\ell}+\beta_j=D(s_{\ell},t_j)$, a contradiction. 
Since every edge with one endpoint in the $S$-segment and the other endpoint in the $T$-segment belongs to some $s_i$-$t_j$ shortest path, we derive that
the $s_i$-$t_j$ shortest path $P$ contains consecutive vertices $a,b$ such that subpath $P[s_i,a]$ lies entirely in the $S$-segment, and subpath $P[b,t_j]$ lies entirely in the $T$-segment. In other words: every $S$-to-$T$ shortest path crosses segments exactly once. See \Cref{fig: cut}.
\end{proof}

\paragraph{Runtime.} The algorithm we employ in this section is a simplified version of the algorithm in \cite{chen2025path}. We give a better analysis of the runtime here: the time for computing the bipartition or certifying the uniqueness of the ordering (corresponding to refining the groups in \cite{chen2025path}) is $O(k^4)$. We may need to recurse on the obtained sets to compute bipartitions again, and the total number of times we compute the bipartitions is $O(k)$. Therefore, the runtime of the algorithm in this section is $O(k^5)$.

At the end of this section, we state the following observations on repelling pairs based on our computed circular orderings on terminals. The proofs are straightforward and only use the four-point conditions of Okamura-Seymour metrics, and are deferred to \Cref{sec: ordering proofs}.

\begin{observation} \label{obs:add}
If terminals $t_1,t_2,t_3,t_4,t_5,t_6$ lie on the boundary in this order, and $(t_1,t_6)$ does not repel $(t_3,t_4)$, then $(t_2,t_5)$ does not repel $(t_3,t_4)$.
\end{observation}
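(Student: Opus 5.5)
The plan is to reduce both ``does not repel'' conditions to the vanishing of a single quantity, and then show that this quantity only shrinks as the outer pair moves inward, using nothing but the four-point (Kalmanson) condition.

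\textbf{Step 1: simplify the repelling condition.} For terminals $a,t_3,t_4,b$ appearing on the boundary in this order, apply the four-point condition to the quadruple $(a,t_3,t_4,b)$. Its crossing pairs are $(a,t_4)$ and $(t_3,b)$, so it gives
\[
D(a,t_4)+D(t_3,b)\ \ge\ \max\{D(a,t_3)+D(t_4,b),\ D(a,b)+D(t_3,t_4)\}.
\]
Consequently, in the definition of $(a,b)$ repelling $(t_3,t_4)$, the maximum on the right-hand side equals $D(a,t_4)+D(t_3,b)$. Define
\[
f(a,b)=D(a,t_4)+D(t_3,b)-D(a,b)-D(t_3,t_4).
\]
The inequality above shows $f(a,b)\ge 0$. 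Moreover, $(a,b)$ does not repel $(t_3,t_4)$ if and only if $f(a,b)\le 0$, that is, if and only if $f(a,b)=0$. The hypothesis of Observation~\ref{obs:add} therefore says $f(t_1,t_6)=0$, and the goal is to show $f(t_2,t_5)=0$.

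\textbf{Step 2: monotonicity in the outer pair.} I will show $f(t_1,t_6)\ge f(t_2,t_6)\ge f(t_2,t_5)$, moving one endpoint inward at a time.

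For the first inequality, the $D(t_3,t_6)$ and $D(t_3,t_4)$ terms cancel, leaving
\[
f(t_1,t_6)-f(t_2,t_6)=D(t_1,t_4)+D(t_2,t_6)-D(t_1,t_6)-D(t_2,t_4).
\]
This is nonnegative by the four-point condition on $t_1,t_2,t_4,t_6$, since $(t_1,t_4)$ and $(t_2,t_6)$ are its crossing pairs.

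For the second inequality, the $D(t_2,t_4)$ and $D(t_3,t_4)$ terms cancel, leaving
\[
f(t_2,t_6)-f(t_2,t_5)=D(t_3,t_6)+D(t_2,t_5)-D(t_2,t_6)-D(t_3,t_5).
\]
This is nonnegative by the four-point condition on $t_2,t_3,t_5,t_6$, since $(t_2,t_5)$ and $(t_3,t_6)$ are its crossing pairs.

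\textbf{Step 3: conclude.} Combining the two steps gives $0\le f(t_2,t_5)\le f(t_1,t_6)=0$. Hence $f(t_2,t_5)=0$, so $(t_2,t_5)$ does not repel $(t_3,t_4)$.

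\textbf{Main obstacle: degenerate cases.} The only care needed is when some of the terminals coincide, for example $t_2=t_3$, $t_1=t_2$, or $t_5=t_6$. In these cases the four-point inequalities used above still hold: each either becomes an equality or reduces to the triangle inequality, using $D(t,t)=0$. So the same chain of inequalities goes through, and I would check each coincidence briefly rather than treat them separately.
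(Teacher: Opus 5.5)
Your proof is correct and follows essentially the same route as the paper. Both arguments decompose the vanishing quantity $D(t_1,t_4)+D(t_3,t_6)-D(t_1,t_6)-D(t_3,t_4)$ into nonnegative four-point differences, one of which is $D(t_2,t_4)+D(t_3,t_5)-D(t_2,t_5)-D(t_3,t_4)$, and then conclude that every summand vanishes. The paper uses four summands: it splits your first difference through the quadruples $t_1,t_2,t_5,t_6$ and $t_1,t_2,t_4,t_5$, and its printed last bracket has $D(t_2,t_6)$ where $D(t_2,t_4)$ is meant. Your three-term telescoping is slightly leaner. It also makes explicit the reduction of ``does not repel'' to $f=0$, which the paper uses without comment.
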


\begin{observation} \label{obs:trans}
If terminals $t_1,t_2,t_3,t_4,t_5,t_6$ lie on the boundary in this order, $(t_1,t_6)$ does not repel $(t_2,t_3)$, and $(t_2,t_6)$ does not repel $(t_4,t_5)$, then $(t_1,t_6)$ does not repel $(t_4,t_5)$.
\end{observation}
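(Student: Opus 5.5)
The plan is to reduce everything to Kalmanson inequalities, in the same style as the proof of Observation~\ref{lem: W-repel}. The first step is to rewrite ``does not repel'' for two nested (non-crossing) pairs. Suppose $a,b,c,d$ appear in this order and we look at the pairs $(a,d)$ and $(b,c)$. The two alternative pairings give the sums $D(a,b)+D(c,d)$ and $D(a,c)+D(b,d)$. The Kalmanson inequality for $a,b,c,d$ gives
\[
D(a,c)+D(b,d)\ \ge\ \max\{D(a,b)+D(c,d),\ D(a,d)+D(b,c)\}.
\]
So the maximum in the repelling definition is the crossing sum $D(a,c)+D(b,d)$. Hence $(a,d)$ does not repel $(b,c)$ exactly when $D(a,d)+D(b,c)\ge D(a,c)+D(b,d)$. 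Combined with the Kalmanson inequality, this holds exactly when
\[
D(a,d)+D(b,c)=D(a,c)+D(b,d).
\]

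Next apply this to the two hypotheses. For $t_1,t_2,t_3,t_6$, non-repelling of $(t_1,t_6)$ and $(t_2,t_3)$ gives
\[
D(t_1,t_6)+D(t_2,t_3)=D(t_1,t_3)+D(t_2,t_6).
\]
For $t_2,t_4,t_5,t_6$, non-repelling of $(t_2,t_6)$ and $(t_4,t_5)$ gives
\[
D(t_2,t_6)+D(t_4,t_5)=D(t_2,t_5)+D(t_4,t_6).
\]
Adding the two identities and cancelling $D(t_2,t_6)$ yields
\[
D(t_1,t_6)+D(t_4,t_5)=D(t_1,t_3)+D(t_2,t_5)+D(t_4,t_6)-D(t_2,t_3).
\]

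By the first step applied to $t_1,t_4,t_5,t_6$, the goal is the inequality $D(t_1,t_6)+D(t_4,t_5)\ge D(t_1,t_5)+D(t_4,t_6)$. After substituting the identity above, this is equivalent to
\[
D(t_1,t_3)+D(t_2,t_5)\ \ge\ D(t_1,t_5)+D(t_2,t_3).
\]
This is exactly the Kalmanson inequality for $t_1,t_2,t_3,t_5$ in this cyclic order, comparing the crossing pairs $(t_1,t_3),(t_2,t_5)$ with the nested pairs $(t_1,t_5),(t_2,t_3)$. That finishes the argument.

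The only real work is choosing the right auxiliary quadruple $t_1,t_2,t_3,t_5$ for the last step. Beyond that, I would only check degenerate cases where some terminals coincide. There the Kalmanson inequalities still hold with equality or trivially, because $D$ is a metric, so the same chain of identities goes through.
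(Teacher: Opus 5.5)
Your proof is correct and is essentially the paper's argument. The paper first upgrades the first hypothesis to ``$(t_1,t_6)$ does not repel $(t_2,t_5)$''. It attributes this step to Observation~\ref{obs:add}, but the step really rests on the same Kalmanson inequality for $t_1,t_2,t_3,t_5$ that you use. The paper then adds two equalities that telescope exactly to the goal, whereas you add the original hypotheses and absorb the leftover term with that Kalmanson inequality directly. This is the same computation in a slightly more self-contained order.
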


\begin{observation} \label{obs:sub}
If terminals $t_1,t_2,t_3,t_4,t_5,t_6$ lie on the boundary in this order, $(t_1,t_4)$ does not repel $(t_2,t_3)$, and $(t_2,t_4)$ does not repel $(t_6,t_5)$, then $(t_1,t_5)$ does not repel $(t_2,t_3)$.
\end{observation}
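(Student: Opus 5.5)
The plan is to reduce everything to linear inequalities among distances and apply the Kalmanson (four-point) condition, as in the proof of Observation~\ref{lem: W-repel}. The first step is to translate ``does not repel'' for each pair of pairs. For terminals $a,b,c,d$ in cyclic order, the Kalmanson condition says the crossing sum $D(a,c)+D(b,d)$ dominates both noncrossing sums. Consequently, a noncrossing pair of pairs, say $(a,d)$ and $(b,c)$, fails to repel exactly when its sum is at least the crossing sum. By Kalmanson this forces equality with the crossing sum.

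Applying this to the two hypotheses gives two equalities. From $(t_1,t_4)$ not repelling $(t_2,t_3)$, on the order $t_1,t_2,t_3,t_4$, we get
\[
D(t_1,t_4)+D(t_2,t_3)=D(t_1,t_3)+D(t_2,t_4).
\]
From $(t_2,t_4)$ not repelling $(t_6,t_5)$, on the order $t_2,t_4,t_5,t_6$, we get
\[
D(t_2,t_4)+D(t_5,t_6)=D(t_2,t_5)+D(t_4,t_6).
\]

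For the conclusion, the terminals $t_1,t_2,t_3,t_5$ are in this order and the pairs $(t_1,t_5),(t_2,t_3)$ are noncrossing. So it suffices to show the single inequality
\[
D(t_1,t_5)+D(t_2,t_3)\ge D(t_1,t_3)+D(t_2,t_5);
\]
the other pairing $D(t_1,t_2)+D(t_3,t_5)$ is at most the right-hand side by Kalmanson. I will substitute $D(t_1,t_3)=D(t_1,t_4)+D(t_2,t_3)-D(t_2,t_4)$ from the first equality and $D(t_2,t_5)=D(t_2,t_4)+D(t_5,t_6)-D(t_4,t_6)$ from the second. After cancelling $D(t_2,t_3)$ and $D(t_2,t_4)$, the target becomes
\[
D(t_1,t_5)+D(t_4,t_6)\ge D(t_1,t_4)+D(t_5,t_6).
\]
This is precisely the Kalmanson inequality for the quadruple $t_1,t_4,t_5,t_6$, whose crossing sum is $D(t_1,t_5)+D(t_4,t_6)$, so the observation follows.

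The main obstacle is finding the right chain of substitutions. A naive attempt that uses only the first equality leads to the inequality $D(t_1,t_5)+D(t_2,t_4)\ge D(t_1,t4)+D(t_2,t_5)$, which points the wrong way relative to Kalmanson. The second hypothesis is exactly what routes the argument through $t_6$ to a valid four-point inequality.

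Degenerate cases with coinciding terminals need a check, since the convention allows repeated terminals. There the corresponding distance terms vanish, and each inequality used either becomes trivial or reduces to the triangle inequality. I would verify these briefly rather than treat them in full.
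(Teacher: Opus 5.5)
Your proof is correct and is essentially the paper's argument. The paper first applies Observation~\ref{obs:add} to the second hypothesis to get that $(t_1,t_5)$ does not repel $(t_2,t_4)$ — which is exactly your Kalmanson inequality on $t_1,t_4,t_5,t_6$ combined with the second hypothesis — and then adds that equality to the one from the first hypothesis; your substitution does the same computation inline and avoids the somewhat degenerate invocation of Observation~\ref{obs:add}.
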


\begin{figure}[h]
\centering
\subfigure
{\scalebox{0.25}{\includegraphics{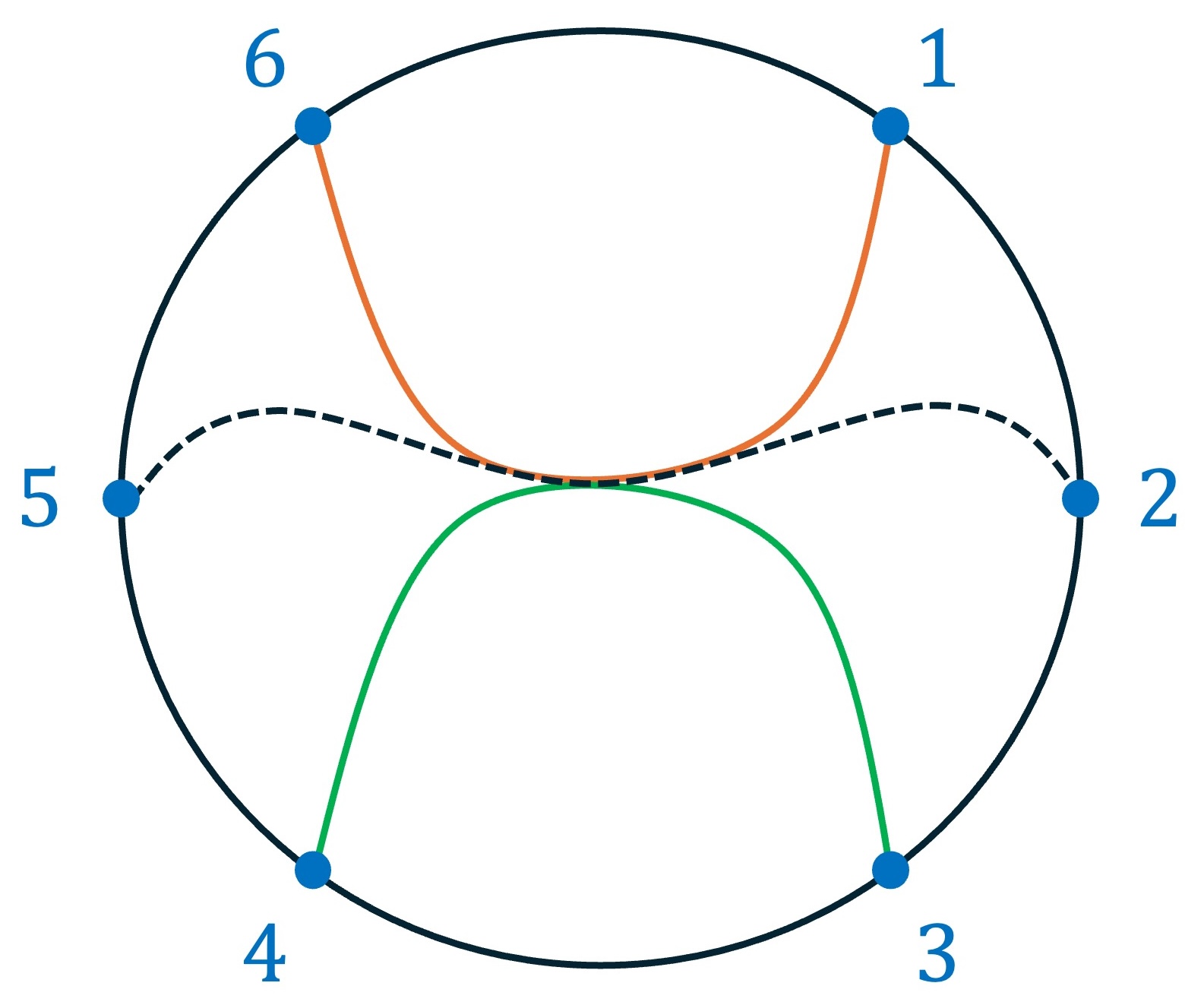}}}
\hspace{1.0cm}
\subfigure
{\scalebox{0.25}{\includegraphics{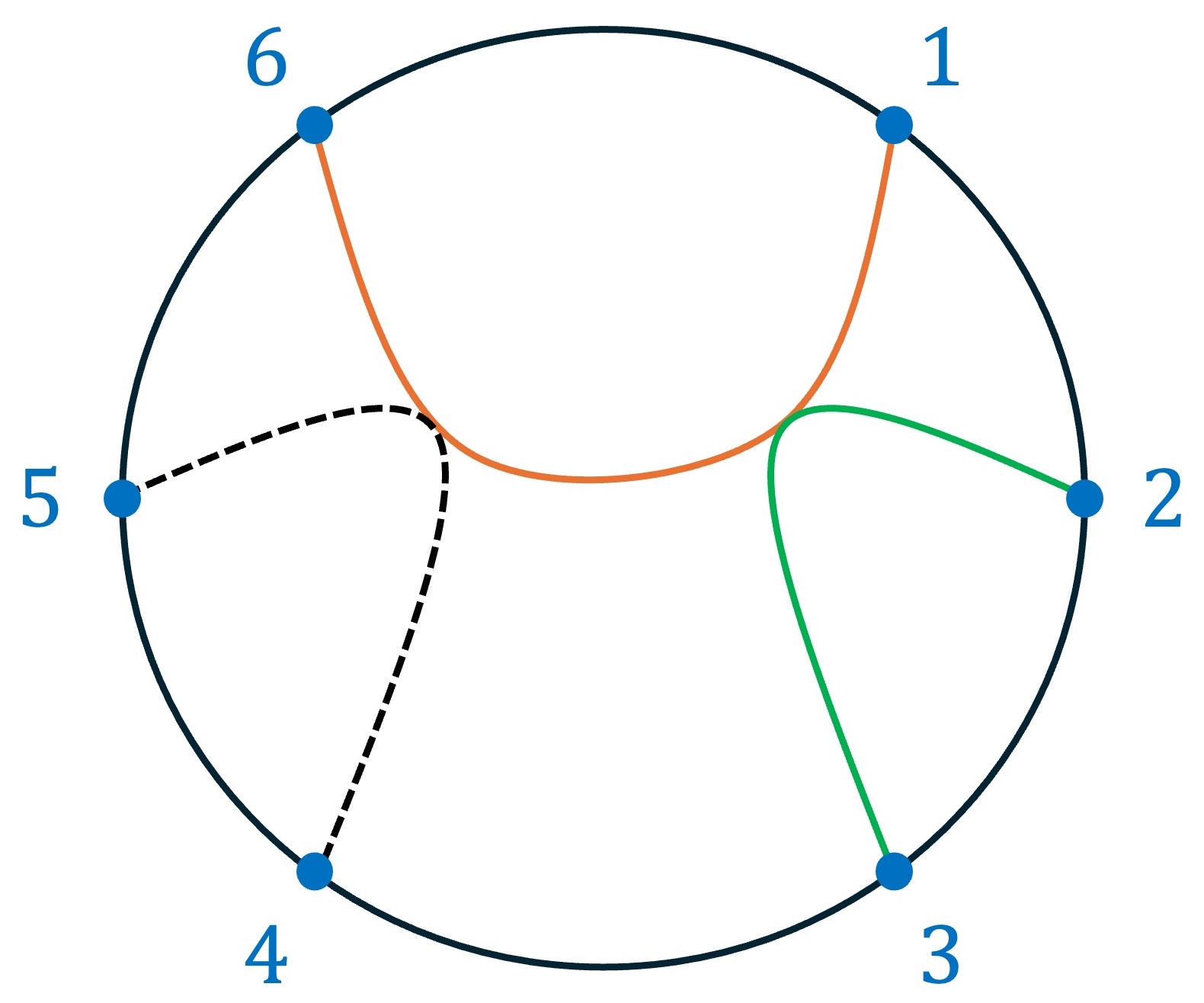}}}
\hspace{1.0cm}
\subfigure
{\scalebox{0.25}{\includegraphics{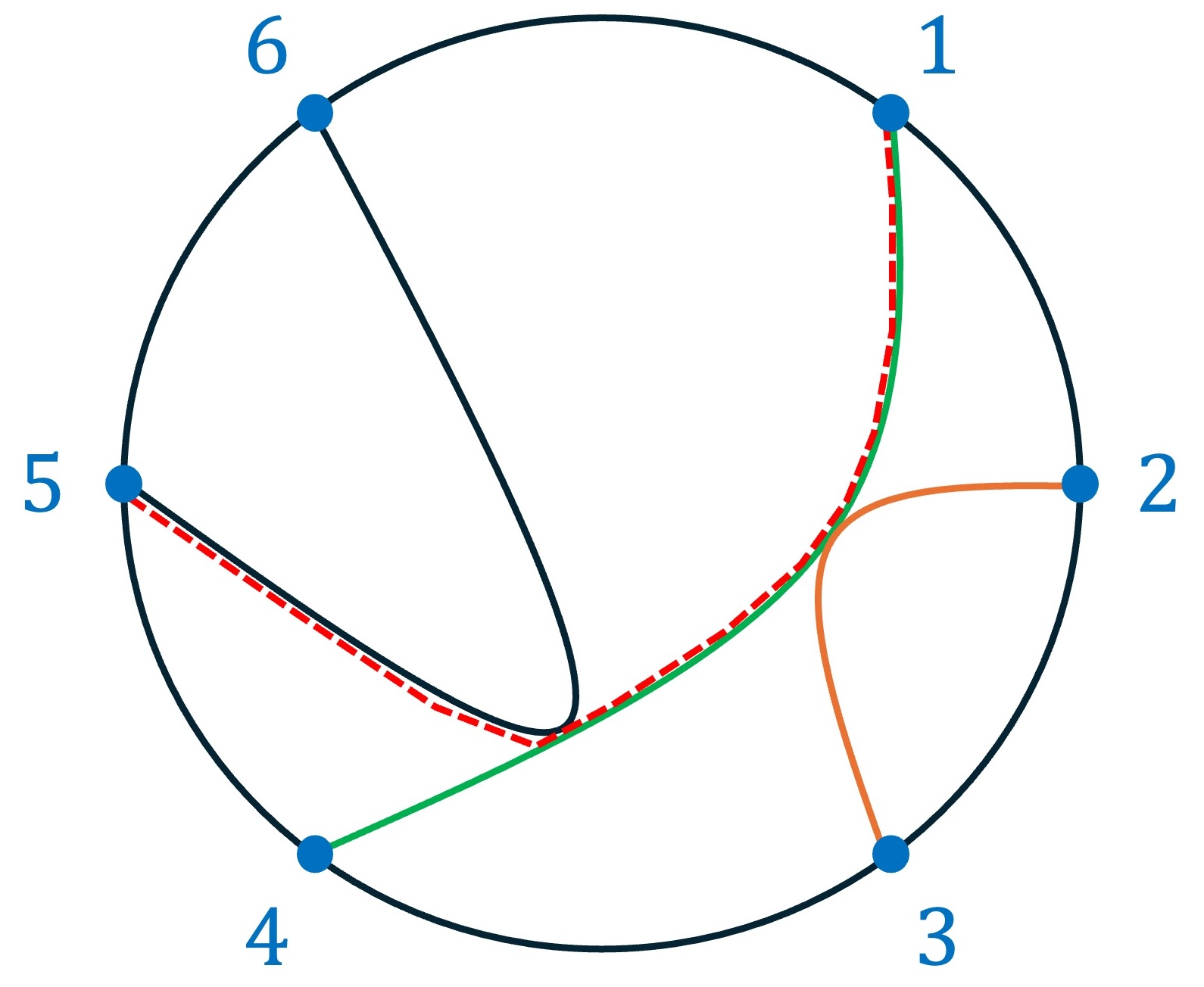}}}
\caption{Illustration of Observations \ref{obs:add} (left), \ref{obs:trans} (middle), and \ref{obs:sub} (right). Left: the $t_1$-$t_6$ path is allowed to intersect with the $t_3$-$t_4$ path, which forces the $t_2$-$t_5$ path to pass through their intersection. \label{fig: obs}}
\end{figure} 

\subsection{Dynamic Programming for Finding the Graph Structure}
\label{sec: dp}


In this section, we present the algorithm for \Cref{thm: main}. From \Cref{sec: ordering}, we have the circular ordering of terminals appearing on the boundary. From Theorem~\ref{lem: 2-path condition}, we only need a shortest path structure where the shortest paths for repelling pairs of terminals are disjoint. We will present a dynamic programming algorithm that either finds such a shortest path structure or certifies that none exists.

\subsubsection*{Step 1. Preprocessing an outerplanar graph}

Assume that the input metric $D$ is realizable by some outerplanar graph $G$. 
Our first step is to massage $G$ into a canonical form, so that later the task of finding such a graph or certifying that none exists becomes easier. Specifically, we show that $D$ can be realized by an outerplanar graph where
\begin{itemize}
    \item the boundary is a simple cycle;
    \item every terminal has degree $2$; 
    \item every other vertex has degree $3$; and
    \item the shortest path between every consecutive pair of terminals is their boundary segment.
\end{itemize}
\begin{figure}[h]
\centering
\includegraphics[width=0.45\linewidth]{outerplanar_canonical.jpg}
\caption{A canonical outerplanar graph: boundary segments $S_1$ (green), $S_2$ (black), and $S_6$ (green) are shortest paths connecting their endpoints; and channels are shown in light blue.\label{fig: canonical}}
\end{figure} 
An illustration of a canonical outerplanar graph is given in \Cref{fig: canonical}.

For the first property, note that if the boundary of an outerplanar graph is not a simple cycle, then there is a cut-vertex, and we handle it in the way depicted in \Cref{fig: massage1}. For the second and third properties, we split a vertex into several depending on its degree and connect them by length-$0$ edges, as shown in \Cref{fig: massage2}. For the last property, if the shortest path $P$ between some consecutive pair $t_i,t_j$ of terminals is not their boundary segment, then we remove all vertices in the $t_i$-$t_j$ segment that does not lie in $P$ together with their incident edges (since they do not appear in any terminal-terminal shortest paths).
This either creates a cut-vertex which is then handled in \Cref{fig: massage1}, or simply make the $t_i$-$t_j$ shortest path their boundary segment in the updated graph.

\begin{figure}[h]
\centering
\subfigure[Handling a cut-vertex: we connect its lowest neighbor in the upper graph and the highest neighbor in the lower graph by an edge whose length is their distance in $G$.]
{\scalebox{0.09}{\includegraphics{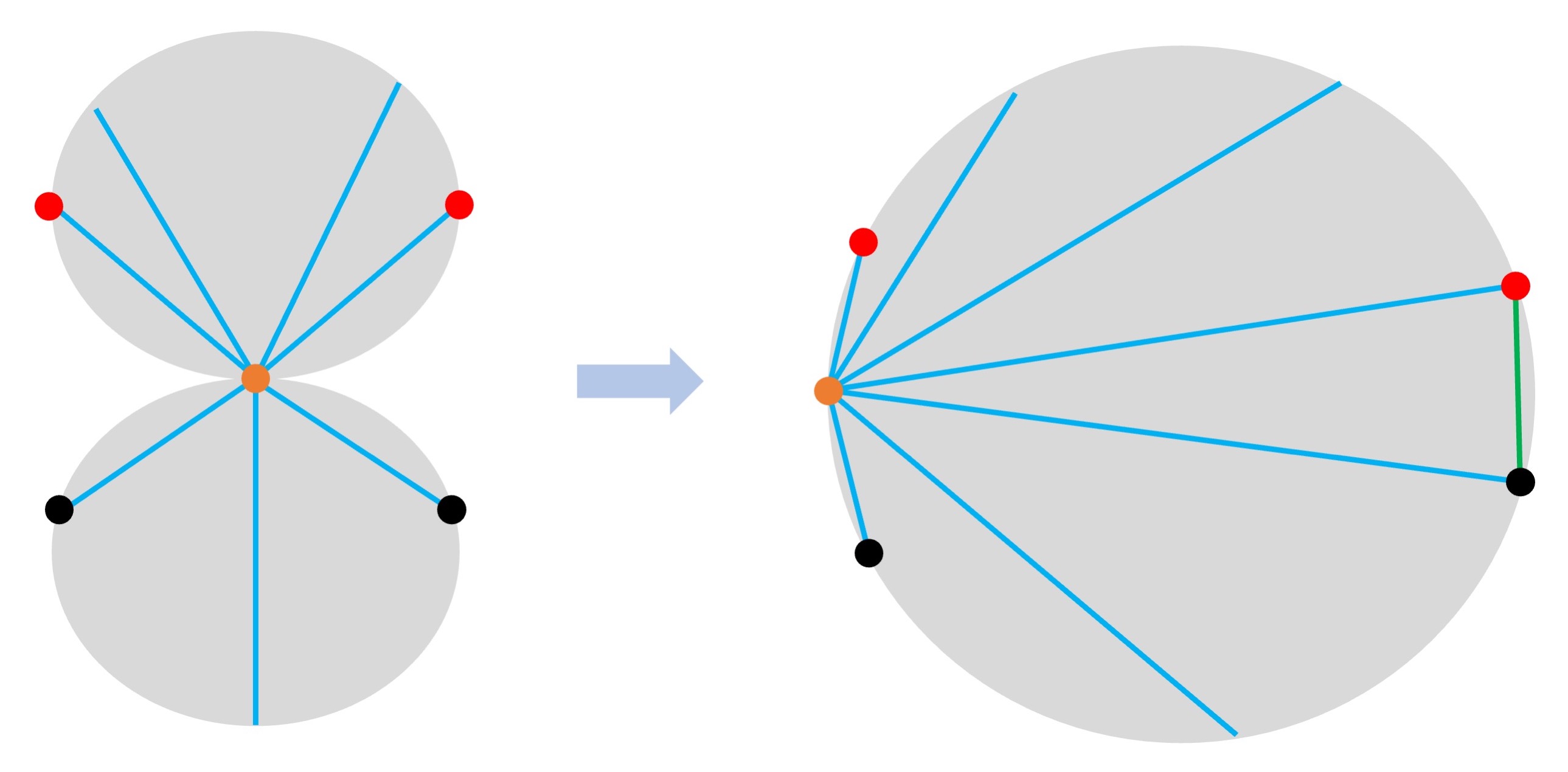}}\label{fig: massage1}}
\hspace{0.3cm}
\subfigure[Splitting a vertex: (upper) if the vertex is a terminal, and (lower) if the vertex is not a terminal. The short black and green edges have length $0$.]
{\scalebox{0.09}{\includegraphics{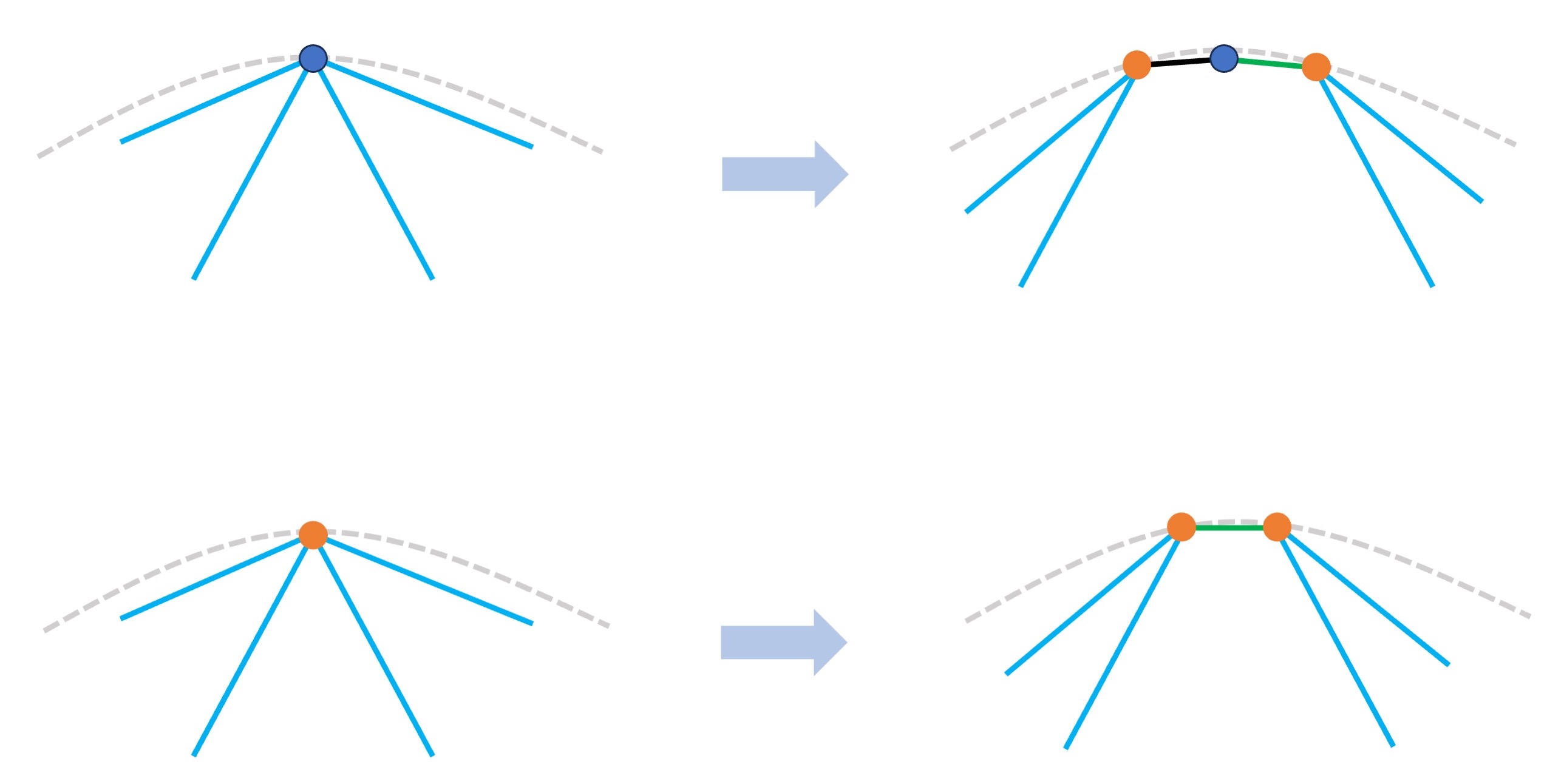}}\label{fig: massage2}}
\caption{Massaging an outerplanar graph into a canonical one.}\label{fig: massage}
\end{figure}

In a canonical outerplanar graph where the terminals $t_1,\ldots,t_k$ lie on the boundary in this order, we denote by $S_i$ the boundary segment between $t_{i}$ and $t_{i+1}$. Structurally, the segments $S_1,\ldots,S_k$ are the foundations, and the edges connecting distinct segments are the core. We call them \emph{channels}. Before we go to the next step, we state two w.l.o.g. assumptions that further regularize the channels:
\begin{itemize}
    \item between a pair $S_i,S_j$ of segments, there is at most one channel (since if there are more than one, then they must be parallel to each other and we can safely remove all but one); and
    \item the channels are maximally filled (that is, if we suppress all terminals and contract all internal vertices of each $S_i$ into a supernode $s_i$, then we get a maximal outerplanar graph on $\set{s_1,\ldots,s_k}$, which is a triangulated outerplanar graph).
\end{itemize}

\subsubsection*{Step 2. Computing the channel structure}

In a canonical outerplanar graph, the shortest path connecting a pair $t_i,t_j$ of terminals can be characterized by the sequence $(S_{r_1},S_{r_2},\ldots,S_{r_p})$ of segments it visits: such a path starts from $t_i$, goes either clockwise (if $S_{r_1}=S_{i}$) or anti-clockwise (if $S_{r_1}=S_{i+1}$) to the $S_{r_1}$-end of the $S_{r_1}$-$S_{r_2}$ channel, hops to $S_{r_2}$, finds the $S_{r_2}$-end of the $S_{r_2}$-$S_{r_3}$ channel, hops to $S_{r_3}$, $\ldots$ , and finally reaches $t_j$.

Recall that the shortest path structure we are searching for is one where all repelling pairs have disjoint shortest paths. 
So far we have only fixed the shortest paths connecting consecutive terminals: for each $r$, the $t_r$-$t_{r+1}$ shortest path is segment $S_r$. Therefore, a natural intermediate goal is to find a shortest path structure where \emph{for all $i,j,r$, if the $t_i$-$t_j$ shortest path visits $S_r$, then the pairs $(t_i,t_j)$ and $(t_r,t_{r+1})$ do not repel each other}.
In this step, we will design a dynamic programming algorithm to either find channels with a shortest path structure achieving this intermediate goal or certify that there are no such channels and shortest path structures that can achieve the goal. 

\paragraph{Subproblems.}
We set up a $2$-dimensional dynamic programming table $\Pi[i,j]$ for all $1 \le i < j \le k$. For each pair $i,j$, the entry $\Pi[i,j]$ studies the following subproblem. Imagine that we cut the boundary at $t_i,t_{j+1}$ and only retain the part consisting of segments $S_i,\ldots,S_j$. We enforce a channel between $S_i,S_j$, forming a new boundary while leaving terminals $t_i$ and $t_{j+1}$ out as tails (see \Cref{fig: DP}).
Within this new boundary, we consider the subproblem of designing channels and the shortest path structure between terminals $t_i,\ldots,t_{j+1}$.
That is, the entry $\Pi[i,j]$ stores either
\begin{itemize}
    \item a set $C[i,j]$ of maximally-filled channels between segments $S_i,\ldots,S_j$; and
    \item for each pair $t_a,t_b\in \set{t_i,\ldots,t_{j+1}}$, a path $P_{t_a,t_b}$ connecting $t_a$ to $t_b$ using only channels in $C[i,j]$, such that for each segment $S_{t_c}$ it visits, the pairs $(t_a,t_b)$, $(t_c,t_{c+1})$ do not repel each other;
\end{itemize}
or value \textsf{Null}, indicating that no such channels and shortest-path structures exist.

\begin{figure}[h]
\centering
\includegraphics[width=0.45\linewidth]{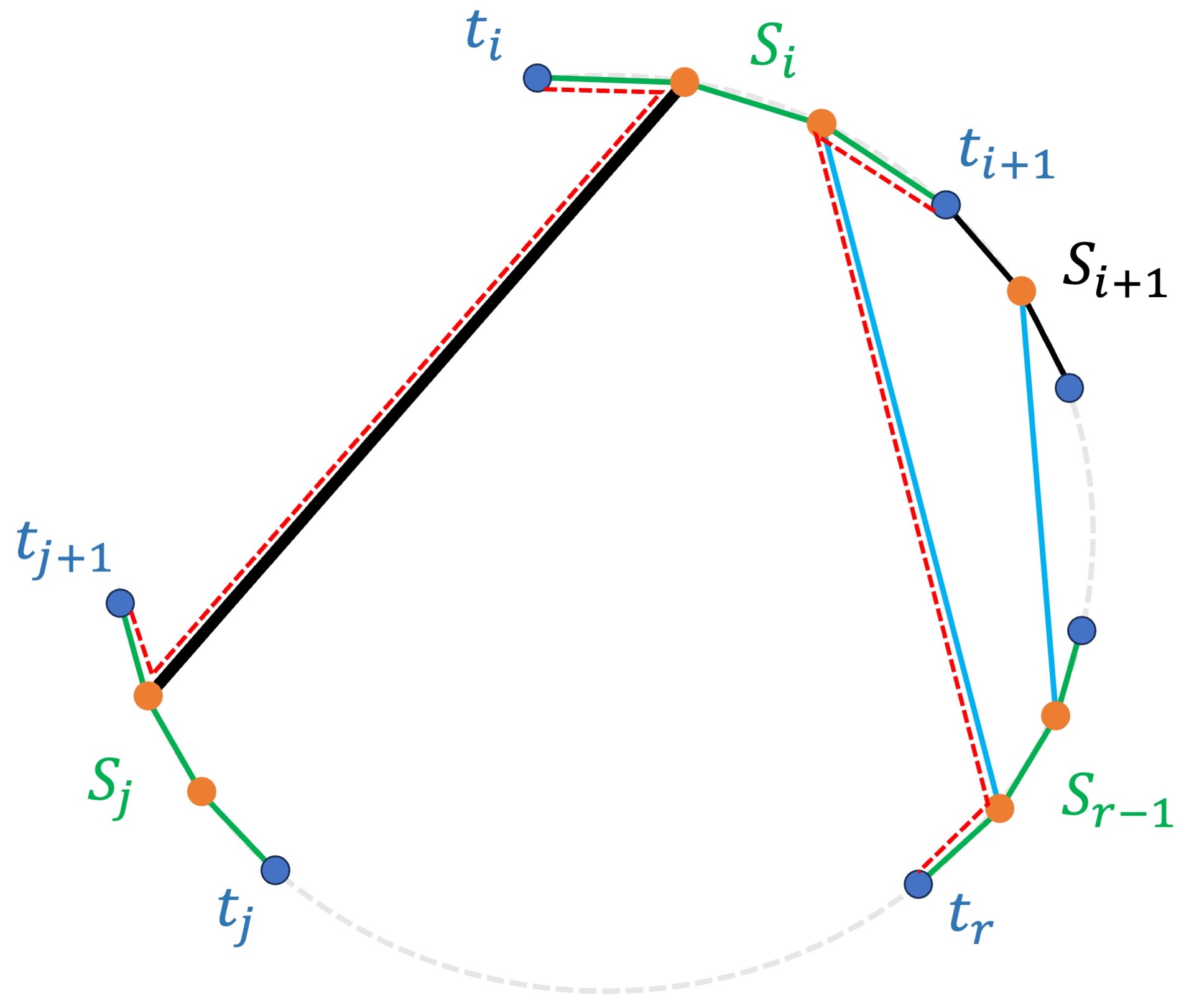}
\caption{The subproblem $\Pi[i,j]$. The $S_i$-$S_j$ channel (black) is enforced.
\label{fig: DP}}
\end{figure}

\paragraph{The dynamic programming algorithm.}
We now describe the algorithm for computing the entries in the dynamic programming table.
The base case is the entries $\Pi[i,i+1]$ for all $i \le k-1$. In these subproblems, there are no channels to add, and the shortest paths are: the $t_i$-$t_{i+1}$ shortest path is $S_i$, the $t_{i+1}$-$t_{i+2}$ shortest path is $S_{i+1}$; and the $t_i$-$t_{i+2}$ shortest path uses only the $S_i$-to-$S_{i+1}$ channel (that is, it visits both $S_i$ and $S_{i+1}$).

Consider now a general pair $i,j$ with $i\le j-2$. Since the channels need to be filled maximally, there needs to be some ``anchor index'' $i<\ell<j$ such that both $S_i$-$S_{\ell}$ and $S_{\ell}$-$S_j$ channels are included. Now for each $i<\ell<j$, we perform the following test to see if it can serve as the anchor index.

\textbf{Test 1. Both $\Pi[i,\ell]$ and $\Pi[\ell,j]$ are feasible.} That is, neither of them can be \textsf{Null}, as otherwise we cannot enforce either the $S_i$-$S_{\ell}$ channel or the $S_{\ell}$-$S_{j}$ channel.

\textbf{Test 2. For each $i\le a\le \ell$ and each $\ell\le b\le j$, either the pair $(t_a,t_b)$ does not repel $(t_\ell,t_{\ell+1})$, or it does not repel either $(t_i,t_{i+1})$ or $(t_j,t_{j+1})$.}
This is because the $t_a$-$t_b$ shortest path must visit either $S_{\ell}$ or both $S_i$ and $S_j$, due to the existence of channels $S_i$-$S_{\ell}$ $S_{\ell}$-$S_{j}$, and $S_i$-$S_{j}$.

If an index $\ell$ fails either test, then we conclude that it cannot serve as the anchor index.
If all indices fail to be the anchor index, then we set $\Pi[i,j] = \textsf{Null}$.
Otherwise, we pick any $\ell$ that passes both tests and define $\Pi[i,j]$ as follows:
\begin{itemize}
\item $C[i,j]=C[i,\ell] \cup C[\ell,j] \cup \set{(S_i,S_j)}$. Since $C[i,\ell]$ and $C[\ell,j]$ are both maximally filled, it is easy to verify that $C[i,j]$ is also maximally filled.
\item For each pair $i\le a<b\le j$,
\begin{itemize}
    \item if $a,b\le \ell$, we use the $t_a$-$t_b$ shortest path in $\Pi[i,\ell]$;
    \item if $a,b\ge \ell$, we use the $t_a$-$t_b$ shortest path in $\Pi[\ell,j]$;
    \item if $a< \ell <b$, then
\begin{itemize}
    \item if $(t_a,t_b)$ does not repel $(t_\ell,t_{\ell+1})$, then the $t_a$-$t_b$ path is the sequential concatenation of 
    \begin{enumerate}
        \item the subpath of the $t_a$-$t_{\ell+1}$ shortest path in $\Pi[i,\ell]$ between $t_a$ and segment $S_\ell$; and
        \item the subpath of the $t_\ell$-$t_b$ shortest path in $\Pi[\ell,j]$ between segment $S_\ell$ and $t_b$;
    \end{enumerate}
    \item otherwise, $(t_a,t_b)$ does not repel either $(t_i,t_{i+1})$ or $(t_j,t_{j+1})$, then the $t_a$-$t_b$ path is the sequential concatenation of \begin{enumerate}
        \item the subpath of the $t_a$-$t_i$ shortest path in $\Pi[i,\ell]$ between $t_a$ and segment $S_i$;
        \item the $S_i$-$S_{j}$ channel; and
        \item the subpath of the $t_{j+1}$-$t_b$ shortest path in $\Pi[\ell,j]$ between segment $S_j$ and $t_b$.
    \end{enumerate} 
\end{itemize}
\end{itemize}
\end{itemize}

This completes the description of the algorithm. 
In the remainder of this step, we verify its correctness by proving the following claims.

\begin{claim}
If $\Pi[1,k]=\textsf{Null}$, then no channel and shortest path structure achieves intermediate goal.
\end{claim}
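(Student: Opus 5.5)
We prove the contrapositive: if some maximally filled channel set $C$ on $S_1,\dots,S_k$, together with segment-trace paths, achieves the intermediate goal, then $\Pi[1,k]\neq\textsf{Null}$. We strengthen the statement to every entry and argue by induction on $j-i$. For $i<j$, say that $(C,\pset)$ \emph{witnesses} $[i,j]$ if $C$ contains the $S_i$-$S_j$ channel and $C$ restricted to $S_i,\dots,S_j$ is a maximal outerplanar triangulation of the polygon $S_i,\dots,S_j$. In addition, every pair $t_a,t_b$ with $i\le a<b\le j+1$ must have a path inside this sub-polygon, using only its channels, that visits only segments $S_c$ for which $(t_a,t_b)$ does not repel $(t_c,t_{c+1})$. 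The claim to prove is: if some structure witnesses $[i,j]$, then $\Pi[i,j]\neq\textsf{Null}$. At the top level, a global structure contains all channels of the triangulated $k$-gon, including $S_1$-$S_k$ after relabeling, and it witnesses $[1,k]$.

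We must first justify that a global structure restricts to witnesses of sub-intervals. Fix a channel $S_i$-$S_j$ in $C$. It separates the disk, so the polygon $S_i,\dots,S_j$ is triangulated by the channels of $C$ lying on that side. Take a pair $t_a,t_b$ with both endpoints among $t_i,\dots,t_{j+1}$. Its path may leave the sub-polygon through the channel $S_i$-$S_j$. In that case we shortcut it: the excursion lies between two visits to $S_i$ or $S_j$ (or one to each), and we replace it by a walk along that segment, or by the single channel. The segment sequence of the shortcut path is a subsequence of the original, so the non-repelling condition is preserved. This is the only place where the definition of a path by segment sequences is used, and it needs some care about paths that start at $t_i$ or $t_{j+1}$ (the tails), which lie on $S_i$ and $S_j$ respectively.

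The induction itself goes as follows. The base case $j=i+1$ is trivial, since the DP never returns \textsf{Null} there. For $j\ge i+2$, maximality of the triangulation gives a triangle on the channel $S_i$-$S_j$ with apex $S_\ell$, where $i<\ell<j$, and both $S_i$-$S_\ell$ and $S_\ell$-$S_j$ belong to $C$. By the restriction argument, the structure witnesses $[i,\ell]$ and $[\ell,j]$, so by induction Test 1 passes for this $\ell$. For Test 2, take $i\le a\le\ell\le b\le j$. The triangle $S_i S_\ell S_j$ separates the tails on the two sides, so the witnessing path for $t_a,t_b$ must visit $S_\ell$, or visit both $S_i$ and $S_j$. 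We treat the degenerate cases $a=\ell$ or $b=\ell$ (and $b=\ell$ means $t_b=t_\ell\in S_\ell$) separately; there the path trivially touches $S_\ell$ or an endpoint segment. In the first case $(t_a,t_b)$ does not repel $(t_\ell,t_{\ell+1})$. In the second it repels neither $(t_i,t_{i+1})$ nor $(t_j,t_{j+1})$, which implies the disjunction required in Test 2. So $\ell$ passes both tests, and the DP finds some anchor, possibly a different one, so $\Pi[i,j]\neq\textsf{Null}$.

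The main obstacle is the separation claim: any path between the two sides of the triangle must visit $S_\ell$, or visit both $S_i$ and $S_j$. It follows from planarity. The region bounded by the channels $S_i$-$S_\ell$, $S_\ell$-$S_j$, $S_i$-$S_j$ together with the segments is a face after contraction, and a segment-sequence path crossing from the arc $S_i..S_\ell$ to the arc $S_\ell..S_j$ must use a segment or channel incident to this triangle. The restriction and shortcutting argument is routine by comparison. One point needs checking: the DP may pick an anchor different from the true one, but the claim only concerns \textsf{Null}-ness, so this is harmless.
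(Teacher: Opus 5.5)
Your proposal is correct and follows essentially the same route as the paper. Both argue by induction over the interval table, take the apex $S_\ell$ of the triangle on the $S_i$-$S_j$ channel as the anchor, and check that Tests~1 and~2 must pass for it, so the DP cannot return \textsf{Null}. Your one addition is an explicit restriction (shortcutting) argument showing that a witness for $[i,j]$ yields witnesses for $[i,\ell]$ and $[\ell,j]$, a step the paper uses without comment. The tail cases you flag are harmless: by the triangle inequality, a pair sharing a terminal with $(t_i,t_{i+1})$ or $(t_j,t_{j+1})$ never repels it.
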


\begin{proof}
We prove by induction that for every pair $1 \le i < j \le k$, if $\Pi[i,j]=\textsf{Null}$, then no channels and shortest path structure achieve the intermediate goal within the $\Pi[i,j]$ subproblem.
For the base case where $j = i+1$, $\Pi[i,j]\ne \textsf{Null}$. Consider now a pair $(i,j)$ with $j\ge i+2$. Assume that $\Pi[i,j]=\textsf{Null}$ and for contradiction that there is a set of channels and a shortest path structure achieving the intermediate goal, then there must be an anchor index $i<\ell<j$.
However, since $\Pi[i,j]=\textsf{Null}$, either
\begin{itemize}
    \item $\Pi[i,\ell]=\textsf{Null}$, which by induction means that no channels and shortest path structure achieve the intermediate goal within $\Pi[i,\ell]$, a contradiction to the assumption that $\ell$ is anchor; or
    \item $\Pi[\ell,j]=\textsf{Null}$, which by induction means that no channels and shortest path structure achieve the intermediate goal within $\Pi[\ell,j]$, a contradiction to the assumption that $\ell$ is anchor; or
    \item there is a pair $(t_a,t_b)$ with $i<a<\ell$ and $\ell<b<j$ such that $(t_a,t_b)$ repels $(t_{\ell},t_{\ell+1})$ and also repels one of $(t_i,t_{i+1})$ and $(t_j,t_{j+1})$, which means that the $t_a$-$t_b$ shortest path cannot visit either $S_{\ell}$ or both $S_i$ and $S_j$, which is impossible given the channels $S_i$-$S_j$, $S_i$-$S_{\ell}$, and $S_{\ell}$-$S_j$.
\end{itemize}
This completes the proof by induction.
\end{proof}

Assume now that $\Pi[1,k]\ne \textsf{Null}$. For each pair $t,t'$, we denote by $\Pi_{t,t'}$ the $t$-$t'$ path provided in $\Pi[1,k]$.

\begin{claim} \label{clm:path-repel}
For every pair $t,t'$, if path $\Pi_{t,t'}$ visits $S_{x}$, then $(t, t')$ and $(t_x, t_{x+1})$ do not repel each other.
\end{claim}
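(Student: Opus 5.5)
We prove the stronger statement that for every DP entry $\Pi[i,j]\neq\textsf{Null}$ and every pair $t_a,t_b$ with $i\le a<b\le j+1$, if the path stored in $\Pi[i,j]$ visits $S_x$, then $(t_a,t_b)$ does not repel $(t_x,t_{x+1})$. The claim is the case $\Pi[1,k]$. The induction is on $j-i$.

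\textbf{Base case.} For $\Pi[i,i+1]$, the only pair needing attention is $(t_i,t_{i+2})$, which visits $S_i$ and $S_{i+1}$. We must show that $(t_i,t_{i+2})$ repels neither $(t_i,t_{i+1})$ nor $(t_{i+1},t_{i+2})$. This is a direct computation from the repelling definition with a repeated endpoint, using the triangle inequality. For example, $D(t_i,t_{i+2})+D(t_i,t_{i+1})\ge \max\{D(t_i,t_i)+D(t_{i+2},t_{i+1}),\,D(t_i,t_{i+1})+D(t_{i+2},t_i)\}$. Pairs of consecutive terminals use their own segment, and a pair never repels itself, since $2D(s,t)\ge D(s,s)+D(t,t)$ and $2D(s,t)\ge 2D(s,t)$.

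\textbf{Inductive step.} Let $\ell$ be the chosen anchor.
\begin{itemize}
\item If $a,b\le\ell$ or $a,b\ge\ell$, the path is inherited from $\Pi[i,\ell]$ or $\Pi[\ell,j]$, and induction applies directly.
\item If $a<\ell<b$ and $(t_a,t_b)$ does not repel $(t_\ell,t_{\ell+1})$, the path is the $t_a$-to-$S_\ell$ portion of $\Pi_{t_a,t_{\ell+1}}$ followed by the $S_\ell$-to-$t_b$ portion of $\Pi_{t_\ell,t_b}$.
\begin{itemize}
\item $S_\ell$ itself is fine by the case hypothesis.
\item For a segment $S_x$ with $x<\ell$ visited by the first portion, induction gives that $(t_a,t_{\ell+1})$ does not repel $(t_x,t_{x+1})$. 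We must transfer this to $(t_a,t_b)$. The terminals appear in the order $t_a,\dots,t_x,t_{x+1},\dots,t_\ell,t_{\ell+1},\dots,t_b$. This is exactly Observation~\ref{lem: W-repel} and its relatives (Observations~\ref{obs:trans}, \ref{obs:sub}), applied with the two non-repelling facts ``$(t_a,t_{\ell+1})$ vs.\ $(t_x,t_{x+1})$'' and ``$(t_a,t_b)$ vs.\ $(t_\ell,t_{\ell+1})$''. The intended combination is a Kalmanson-equality chaining: summing the two tight equalities gives the tight equality for $(t_a,t_b)$ vs.\ $(t_x,t_{x+1})$, while the other side of the max is automatically dominated by the four-point condition.
\item Segments $S_x$ with $x>\ell$ are handled symmetrically.
\end{itemize}
\item If $(t_a,t_b)$ repels $(t_\ell,t_{\ell+1})$, the path is the $t_a$-to-$S_i$ part of $\Pi_{t_a,t_i}$, then the channel $S_iS_j$, then the $S_j$-to-$t_b$ part of $\Pi_{t_{j+1},t_b}$.
\begin{itemize}
\item Test 2 guarantees only that $(t_a,t_b)$ fails to repel one of $(t_i,t_{i+1})$ and $(t_j,t_{j+1})$, yet the path visits both. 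So we must first derive non-repulsion for both, and also for the intermediate visited segments. I would try to get this from Observation~\ref{obs:add}: for nested pairs $(t_i,t_{j+1})\supseteq(t_a,t_b)$, non-repulsion propagates inward. It may be necessary to use the enforced outer channel of the parent subproblem, which requires pairs crossing it to be compatible with both $S_i$ and $S_j$. If that fails, I would strengthen the invariant to include ``$(t_a,t_b)$ with $a\le \ell\le b$ routed outside does not repel $(t_i,t_{i+1})$ or $(t_j,t_{j+1})$.''
\item Segments $S_x$ with $i<x<a$ on the way are handled as before: chain the inductive fact for $(t_a,t_i)$ with the fact for $(t_a,t_b)$ versus $S_i$ via Observation~\ref{obs:sub}.
\end{itemize}
\end{itemize}

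\textbf{Main obstacle.} The hardest part is this last, outer-channel case, specifically establishing non-repulsion against both endpoint segments. I expect it to require a case split on which of $(t_i,t_{i+1})$ or $(t_j,t_{j+1})$ is known compatible, combined with the six-point observations.
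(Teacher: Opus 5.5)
Your framework is the paper's: induction over DP entries, a case split on how the stored path was assembled, and transfer of non-repulsion via the six-point observations. Your base case is fine, and so is the $S_\ell$-routed subcase with $a\le x<\ell$, which is Observation~\ref{obs:trans} on $t_b,t_{\ell+1},t_\ell,t_{x+1},t_x,t_a$ read counterclockwise. The genuine gap is the outer-channel case, which you leave unproved, and the obstacle you describe there comes from misreading Test~2. The paper justifies Test~2 by ``the $t_a$-$t_b$ path must visit either $S_\ell$ or both $S_i$ and $S_j$.'' Its \textsf{Null} claim states the failure condition as ``repels $(t_\ell,t_{\ell+1})$ and also repels one of $(t_i,t_{i+1})$, $(t_j,t_{j+1})$.'' So a pair routed through the $S_i$-$S_j$ channel is guaranteed to repel \emph{neither} $(t_i,t_{i+1})$ nor $(t_j,t_{j+1})$, and both endpoint segments are immediate. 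Under your weaker reading, none of your fallbacks can work. For $a=i$, the pair $(t_i,t_b)$ never repels $(t_i,t_{i+1})$ because they share an endpoint, yet it may repel both $(t_\ell,t_{\ell+1})$ and $(t_j,t_{j+1})$. In that situation the claim would simply be false, and neither Observation~\ref{obs:add} nor the parent's enforced channel can supply the missing fact.

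With both endpoint facts in hand, the intermediate segments of the outer case go as in the $S_\ell$ case, with $t_i$ and $t_{j+1}$ playing the roles of $t_{\ell+1}$ and $t_\ell$.
\begin{itemize}
\item For $S_x$ on the first portion with $i<x<a$: combine the inductive fact for $(t_a,t_i)$ with ``$(t_a,t_b)$ does not repel $(t_i,t_{i+1})$'' using Observation~\ref{obs:trans} on $t_b,t_i,t_{i+1},t_x,t_{x+1},t_a$. Observation~\ref{obs:sub}, which you cite, does not fit this configuration.
\item For $a\le x<\ell$: Observation~\ref{obs:add} suffices, since $(t_a,t_b)$ is nested between $(t_a,t_i)$ and $(t_x,t_{x+1})$.
\end{itemize}
You also skip a subcase of the $S_\ell$-routed case. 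The portion taken from $\Pi_{t_a,t_{\ell+1}}$ can visit $S_x$ with $i\le x<a$, for example when that path was itself sent through the outer channel of $\Pi[i,\ell]$. There your asserted order $t_a,\dots,t_x,\dots,t_b$ fails. Instead, $(t_a,t_b)$ is nested between $(t_a,t_{\ell+1})$ and $(t_x,t_{x+1})$, and Observation~\ref{obs:add} gives the conclusion. This is exactly the paper's ``$t_x$ lies between $t_i$ and $t$'' subcase. The second portions are symmetric.
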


\begin{proof}
Assume that path $\Pi_{t,t'}$ is created during the process of computing $\Pi[i,j]$. We prove the claim by induction. For the base case where $j = i + 1$, $x$ is either $i$ or $i+1$, and $(t,t')$ is either $(t_{i},t_{i+1})$, or $(t_{i},t_{i+2})$, or $(t_{i+1},t_{i+2})$. Therefore, $(t,t')$ and $(t_x,t_{x+1})$ share a terminal, and so they do not repel each other.
Consider now a pair $(i,j)$ with $j\ge i+2$. From the algorithm, there exists an index $i<\ell<j$ such that
\begin{itemize}
    \item $\Pi[i,\ell]\ne \textsf{Null}$, $\Pi[\ell,j]\ne \textsf{Null}$;
    \item $t\in\{t_i,\dots,t_{\ell}\}$, $t' \in \{t_{\ell+1},\dots,t_{j+1}\}$, and $i\le x\le j$; and
    \item the pair $(t,t')$ either does not repel $(t_{\ell},t_{\ell+1})$ or does not repel either $(t_i,t_{i+1})$ or $(t_j,t_{j+1})$.
\end{itemize}

From the algorithm, if $(t,t')$ does not repel $(t_{\ell},t_{\ell+1})$, then path $\Pi_{t,t'}$ is a concatenation of path $\Pi_{t,t_{\ell+1}}$ provided by $\Pi[i,\ell]$ and path
$\Pi_{{t_\ell},t'}$ provided by $\Pi[\ell,j]$. If $x=\ell$, the claim holds trivially. 
Assume that $i<x<\ell$ (the proof for case $\ell<x<j$ is symmetric),
so the path $\Pi_{t,t_{\ell+1}}$ also visits $S_x$. By induction hypothesis, $(t,t_{\ell+1})$ does not repel $(t_x,t_{x+1})$. 
If $t_x$ lies between $t_i$ and $t$, then by \Cref{obs:add}, $(t,t')$ does not repel $(t_x,t_{x+1})$. 
If $t_x$ lies between $t$ and $t_{\ell}$, then \Cref{obs:trans} $(t,t')$ does not repel $(t_x,t_{x+1})$. 
If $(t,t')$ repels $(t_{\ell},t_{\ell+1})$ but does not repel $(t_i,t_{i+1})$ or $(t_j,t_{j+1})$, then the path $\Pi_{t,t'}$ is the concatenation of the path $\Pi_{t,t_i}$ provided by $\Pi[i,\ell]$, channel $S_i$-$S_j$, and path $\Pi_{t_{j+1},t'}$ provided by $\Pi[\ell,j]$. We can show that $(t, t')$ and $(t_x, t_{x+1})$ do not repel each other in a similar way.
\end{proof}

\subsubsection*{Step 3. Verifying the repelling paths condition}

In this step, we show that if the algorithm in Step 2 does not return \textsf{Null}, then the channels and the shortest path structure provided in $\Pi[1,k]$ can be slightly modified to an outerplanar graph together with a shortest path structure that satisfy the repelling paths condition.

\begin{figure}[h]
\centering
\subfigure[One channel gives birth to three edges (in parallel). The new endpoints are close to the old channel ends.]
{\scalebox{0.12}{\includegraphics{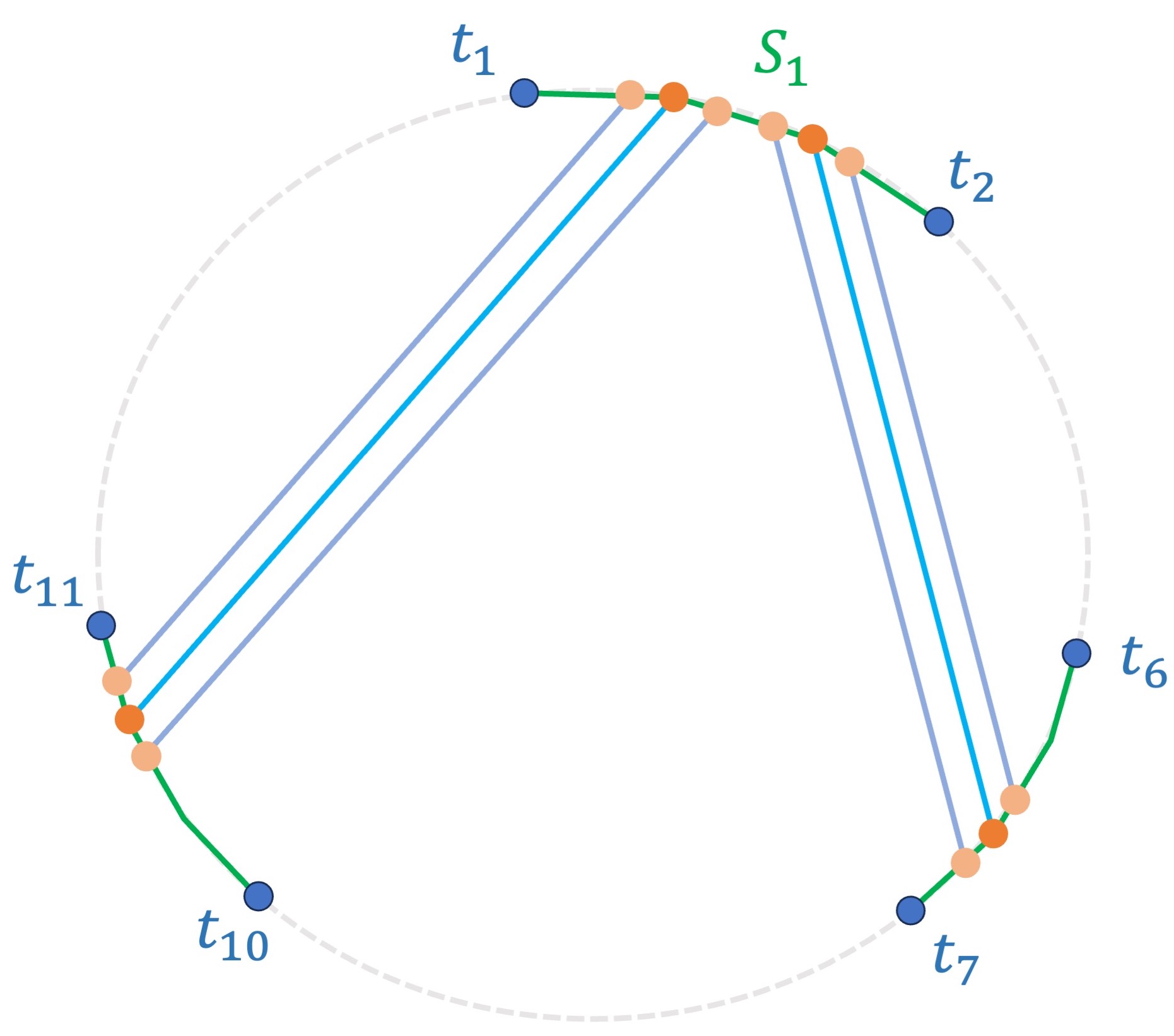}}\label{fig: split 1}}
\hspace{1.5cm}
\subfigure[Illustration of shortest paths at channel $S_1$-$S_{10}$: $t_i,t_j$ either both lie on left (red), or both lie on right (black), or lie on different segments (purple).]
{\scalebox{0.12}{\includegraphics{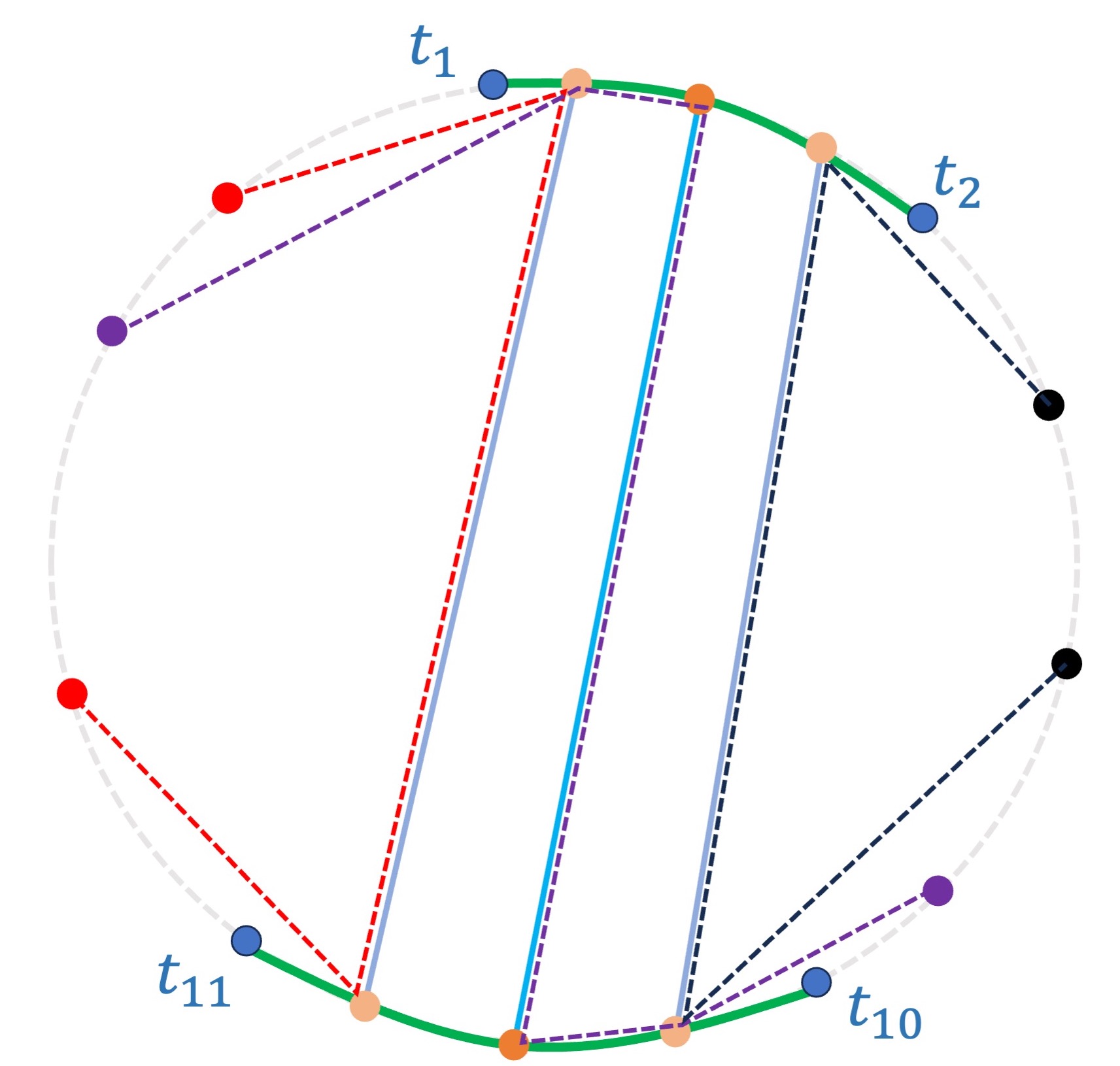}}\label{fig: split 2}}
\caption{The construction of the final outerplanar graph and the shortest path structure.}\label{fig: split}
\end{figure} 

\paragraph{The final graph.}
From the channel structure computed in Step 2, we obtain the final outerplanar graph as follows.
Consider a channel $S_i$-$S_j$. We denote by $x$ its endpoint on $S_i$ and by $y$ its endpoint on $S_j$. Around $x$ we create two new vertices: $x_L$ and $x_R$, and similarly around $y$ we create new vertices $y_L$ and $y_R$, such that vertices $x_L,x,x_R,y_R,y,y_L$ appear on the boundary in this order.
Corresponding to the channel $S_i$-$S_j$, we build three edges $(x_L,y_L), (x,y), (x_R,y_R)$ parallel to each other. We do the same for every other channel, and the resulting graph is our final outerplanar graph. So if $S_i$ has two channels connecting to $S_{j}$ and $S_{j'}$, respectively, then in the final graph, the sequence of vertices on $S_i$ is $(t_i,x_L,x,x_R,x'_L,x',x'_R,t_{i+1})$.
See \Cref{fig: split 1} for an illustration.

\paragraph{The shortest path structure.}
Recall that a terminal shortest path is characterized by the sequence of segments it visits. Consider now a $t_i$-$t_j$ shortest path and assume that it sequentially visits segments $S_{r_1},S_{r_2},\ldots,S_{r_p}$. For each channel $S_{r_x}$-$S_{r_{x+1}}$, we pick one of the three edges according to the following rules. Consider the two boundary segments complementing $S_{r_x}$ and $S_{r_{x+1}}$ (one from $t_{r_x+1}$ to $t_{r_{x+1}}$ and the other from $t_{r_{x+1}+1}$ to $t_{r_{x}}$):
\begin{itemize}
    \item if $t_i, t_j$ lies on different segments, we pick the middle edge $(x,y)$; and
    \item if $t_i, t_j$ lies on the same segments, then we pick the edge closest to that segment among the three.
\end{itemize}
The $t_i$-$t_j$ shortest path is then constructed naturally with all the picked edges: it starts at $t_i$, goes to the $S_{r_1}$-endpoint of the picked edge for the $S_{r_1}$-$S_{r_2}$ channel, hops to $S_{r_2}$, finds the $S_{r_2}$-endpoint of the picked edge for the $S_{r_2}$-$S_{r_3}$ channel, hops to $S_{r_3}$, $\ldots$ , and finally reaches $t_j$.
See \Cref{fig: split 2}.

Denote by $\Pi^*$ the final shortest path structure constructed above, where $\Pi^*_{t,t'}$ denotes the final shortest path between terminals $t,t'$.
We prove the following two claims for the required properties of the final graph and the shortest path structure.

\begin{claim} \label{clm:intersect-once}
The intersection between every pair of paths in $\Pi^*$ is either empty or a subpath of both.
\end{claim}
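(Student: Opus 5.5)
The plan is to reduce the claim to a local statement about what two paths of $\Pi^*$ can share at each channel and along each segment, and then to glue these local pictures together. Each path $\Pi^*_{t,t'}$ is determined by its segment sequence $(S_{r_1},\dots,S_{r_p})$ together with one of the three parallel copies at each channel it uses. Because the channels form a maximal outerplanar graph once the segments are contracted, the segment sequence of any terminal path is exactly the sequence of contracted nodes along the unique path in the weak dual. More precisely, the sequence visits, in order, the segments separating the two arcs of the boundary that are determined by $t$ and $t'$.

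I would first record the key structural fact. The triangulated channel structure together with the construction in Step 2 makes each segment sequence a ``monotone'' walk, so any two terminal pairs that both use a channel $S_a$-$S_b$ traverse it in a consistent direction. The copy chosen at that channel depends only on which of the two complementary boundary arcs contains the endpoints: both on the left, both on the right, or one on each side.

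Next I would show that the set of common vertices of two paths $P=\Pi^*_{t_1,t_2}$ and $Q=\Pi^*_{t_3,t_4}$ is connected along both paths, and I would do this by case analysis on the relative boundary positions of $t_1,t_2,t_3,t_4$. On a single segment $S_r$, each path occupies a contiguous subpath between its entry and exit points. These points are the endpoints $x_L,x,x_R$ of the chosen copies, or a terminal endpoint of $S_r$. The three copies are placed in the order $x_L,x,x_R$ next to each other, and the copy is chosen according to the side the endpoints lie on. I will use these two facts to show that two paths sharing consecutive segments $S_r,S_{r'}$ use the same copy exactly when their endpoint sides coincide. Otherwise their subpaths on $S_r$ meet in an interval that does not reach the channel, so they can diverge but never re-meet. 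In particular, if they both use the channel but with different copies, their intervals on $S_r$ and on $S_{r'}$ overlap on the far sides or not at all, in a way that prevents a second meeting.

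I expect the main obstacle to be ruling out two paths that meet, separate, and meet again on a later segment. I would first try a planarity argument. Two paths in the disk that intersect twice enclose a region, and both terminal endpoints of each path lie outside it. Since every vertex lies on the boundary, the region must consist of boundary vertices between the two meeting points, and these vertices are then sandwiched between the two paths. I would combine this with the side rule: at the separation point the two paths choose different copies, say $x_L$ and $x$. That happens only when one pair has both terminals on the left arc while the other crosses. The left-arc pair's segment sequence then stays on the left side of the channel from that point on, so it cannot return to a segment used by the crossing pair beyond the channel, which gives a contradiction. The symmetric cases are handled the same way, as is the case where the two paths separate at a terminal.
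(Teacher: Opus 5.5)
There is a genuine gap, and it begins with your ``key structural fact''. The segment sequence of a path in $\Pi^*$ is not determined by the triangulated channel structure. The weak-dual path only fixes which channels separate $t$ from $t'$. The path must visit \emph{one} endpoint segment of each such channel, and which one it visits is a choice.

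At each DP triangle $S_i,S_\ell,S_j$, a pair separated by the anchor goes either through the apex $S_\ell$, or through $S_i$, the base channel $S_i$-$S_j$, and $S_j$. The DP decides this by whether the pair repels $(t_\ell,t_{\ell+1})$. So two paths can part by leaving a segment through \emph{different} channels, not through different copies of one channel. Your closing step (different copies at the separation point, hence one pair stays on one side) then says nothing.

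Concretely, take $P=\Pi^*_{t_{x_1},t_{y_1}}$ routed through $S_\ell$ and $Q=\Pi^*_{t_{x_2},t_j}$ routed through $S_i$-$S_j$, with $i<x_1<x_2<\ell$ and $\ell+1<y_1<j$.
\begin{itemize}
\item The part of $Q$ before it first reaches $S_i$ and the part of $P$ before it first reaches $S_\ell$ join alternating boundary points, so they meet.
\item Suppose moreover that the $t_\ell$-$t_{y_1}$ path of $\Pi[\ell,j]$ was itself routed through the base channel $S_\ell$-$S_j$. Then $P$ enters $S_j$ and runs along it towards $t_j$, over vertices that $Q$ also uses, since $Q$ ends by running along $S_j$ from the $S_i$-$S_j$ channel to $t_j$.
\item Between these two meetings, $P$ uses the $S_\ell$-$S_j$ channel, which $Q$ never uses. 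So $P\cap Q$ is disconnected.
\end{itemize}
The cycle they enclose contains no vertex, because all vertices lie on the outer boundary. So your planarity argument yields no contradiction here either.

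This is exactly the paper's Case~5.1 with $S_z=S_j$, and it is excluded only by the metric, which your proposal never uses. The paper inducts over the DP table along the anchor $\ell$:
\begin{itemize}
\item pairs inside one sub-instance are handled by the induction hypothesis;
\item paths from the two different sub-instances could only meet on $S_\ell$, and they reach it from opposite sides;
\item the real work is Case~5, where one path uses $S_\ell$ and the other uses $S_i$-$S_j$.
\end{itemize}
In Case~5, a second meeting on some segment is converted, via Claim~\ref{clm:path-repel} and Observations~\ref{obs:add}, \ref{obs:trans} and~\ref{obs:sub}, into the statement that the outer-routed pair does not repel $(t_\ell,t_{\ell+1})$. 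That contradicts the DP rule, which routes through $S_\ell$ whenever it can. With arbitrary choices between the two routes, the configuration above does occur and the claim fails. So a correct proof must invoke this routing rule together with the Kalmanson-derived observations; an argument purely about channel structure and the three-copy rule cannot suffice.
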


\begin{proof}
We prove the claim by induction on the dynamic programming table that for every $\Pi[i,j]$, the paths between terminals in $t_i,\dots,t_{j+1}$ intersect at most once.
For the base case where $j = i + 1$, the entry $\Pi[i,i+1]$ contains only three paths, and it is easy to verify that, after modification (into the final set $\Pi^*$), the intersection between every pair of them is either empty or a subpath of both.
Consider now a pair $i,j$ with $i\le j-2$.
From the algorithm, there is an anchor index $i<\ell<j$. 
Consider now terminals $t_{x_1},t_{y_1},t_{x_2},t_{y_2}$ handled in $\Pi[i,j]$ and specifically the $t_{x_1}$-$t_{y_1}$ shortest path (with $x_1<y_1$) and the $t_{x_2}$-$t_{y_2}$ shortest path (with $x_2<y_2$).  We distinguish between the following cases.

\textbf{Case 1. $y_1,y_2\le \ell$ or $x_1,x_2\ge \ell$.} By the induction hypothesis on subproblems $\Pi[i,\ell]$ and $\Pi[\ell,j]$, the final two paths intersect at most once. 

\textbf{Case 2. $y_1 \le \ell$ and $x_2 \ge \ell+1$.} The $t_{x_1}$-$t_{y_1}$ path belongs to $\Pi[i,\ell]$ and the $t_{x_2}$-$t_{y_2}$ path belongs to $\Pi[\ell,j]$. The only segment where they could potentially intersect is $S_{\ell}$. But even if both paths visit $S_{\ell}$, their final paths are disjoint as they reach $S_\ell$ from different sides.

\textbf{Case 3. $x_1,x_2 \le \ell$ and $y_1 \le \ell< y_2$.} From our algorithm, the $t_{x_2}$-$t_{y_2}$ path visits either $S_{\ell}$ or both $S_i$ and $S_j$. If it visits $S_{\ell}$, then this path is the concatenation of a subpath of the $t_{x_2}$-$t_{\ell+1}$ path and a subpath of the $t_{\ell}$-$t_{y_2}$ path. By induction hypothesis, the first subpath may intersect the $t_{x_1}$-$t_{y_1}$ path at most once, and by Case 2, the second subpath may not intersect the $t_{x_1}$-$t_{y_1}$ path. Therefore, their final paths intersect at most once. The analysis where the $t_{x_2}$-$t_{y_2}$ path visits both $S_i$ and $S_j$ is similar.

\textbf{Case 4. $y_1,y_2 \ge \ell+1$ and $x_1 \le \ell <x_2$.} Symmetric to Case 3.

\textbf{Case 5. $x_1 \le \ell < y_1$ and $x_2 \le \ell < y_2$.} In this case, both paths visit either segment $S_{\ell}$ or both $S_i,S_j$. If they visit the same segments, then by induction hypothesis they intersect once. Assume now that they visit different segments and without loss of generality that $x_1 \le x_2$. We distinguish the following sub-cases.

\textbf{Case 5.1. $y_1 \le y_2$, the $t_{x_1}$-$t_{y_1}$ path visits $S_{\ell}$, and the $t_{x_2}$-$t_{y_2}$ path visits $S_i,S_j$}. Similar to the previous discussion, both paths are the concatenation of two subpaths. By the induction hypothesis, the first subpaths intersect at most once. It remains to show that the second subpaths, one from $t_{j+1}$ to $t_{y_2}$ and the other from $t_{\ell}$ to $t_{y_1}$, are disjoint.

Assume for contradiction that they intersect on segment $S_z$.
Assume first that $S_z$ lies between $t_{j+1},t_{y_2}$ (see \Cref{fig: case5.1_1}). Since the $t_{\ell}$-$t_{y_1}$ path visits $S_z$, $(t_{\ell},t_{y_1})$ does not repel $(t_z,t_{z+1})$. Also, since the $t_{x_1}$-$t_{y_1}$ path visits $S_{\ell}$, $(t_{x_1},t_{y_1})$ does not repel $(t_{\ell},t_{\ell+1})$. From \Cref{obs:sub}, $(t_{x_1},t_z)$ does not repel $(t_{\ell},t_{\ell+1})$. Then from \Cref{obs:add}, $(t_{x_2},t_{y_2})$ does not repel $(t_{\ell},t_{\ell+1})$. But this means that our algorithm should have routed the $t_{x_2}$-$t_{y_2}$ path through $S_{\ell}$ rather than $S_i$ and $S_j$, a contradiction.
Assume next that $S_z$ lies between $t_{\ell+1}$ and $t_{y_1}$ (see \Cref{fig: case5.1_2}). Since $(t_{x_1},t_{y_1})$ does not repel $(t_{\ell},t_{\ell+1})$, by \Cref{obs:add}, $(t_{x_2},t_{z+1})$ does not repel $(t_{\ell},t_{\ell+1})$. On the other hand, by \Cref{clm:path-repel}, $(t_{x_2},t_{y_2})$ does not repel $(t_z,t_{z+1})$. Thus, from \Cref{obs:trans}, $(t_{x_2},t_{y_2})$ does not repel $(t_{\ell},t_{\ell+1})$ which again means that we should have routed the $t_{x_2}$-$t_{y_2}$ path through $S_{\ell}$ rather than $S_i$ and $S_j$, a contradiction.
Finally, we assume that $S_z$ lies between $t_{y_1}$ and $t_{y_2}$, from the algorithm, in this case the two subpaths may not intersect on $S_z$.

\begin{figure}[h]
\centering
\subfigure[When $S_z$ lies between $t_{j+1}$ and $t_{y_2}$.]
{\scalebox{0.12}{\includegraphics{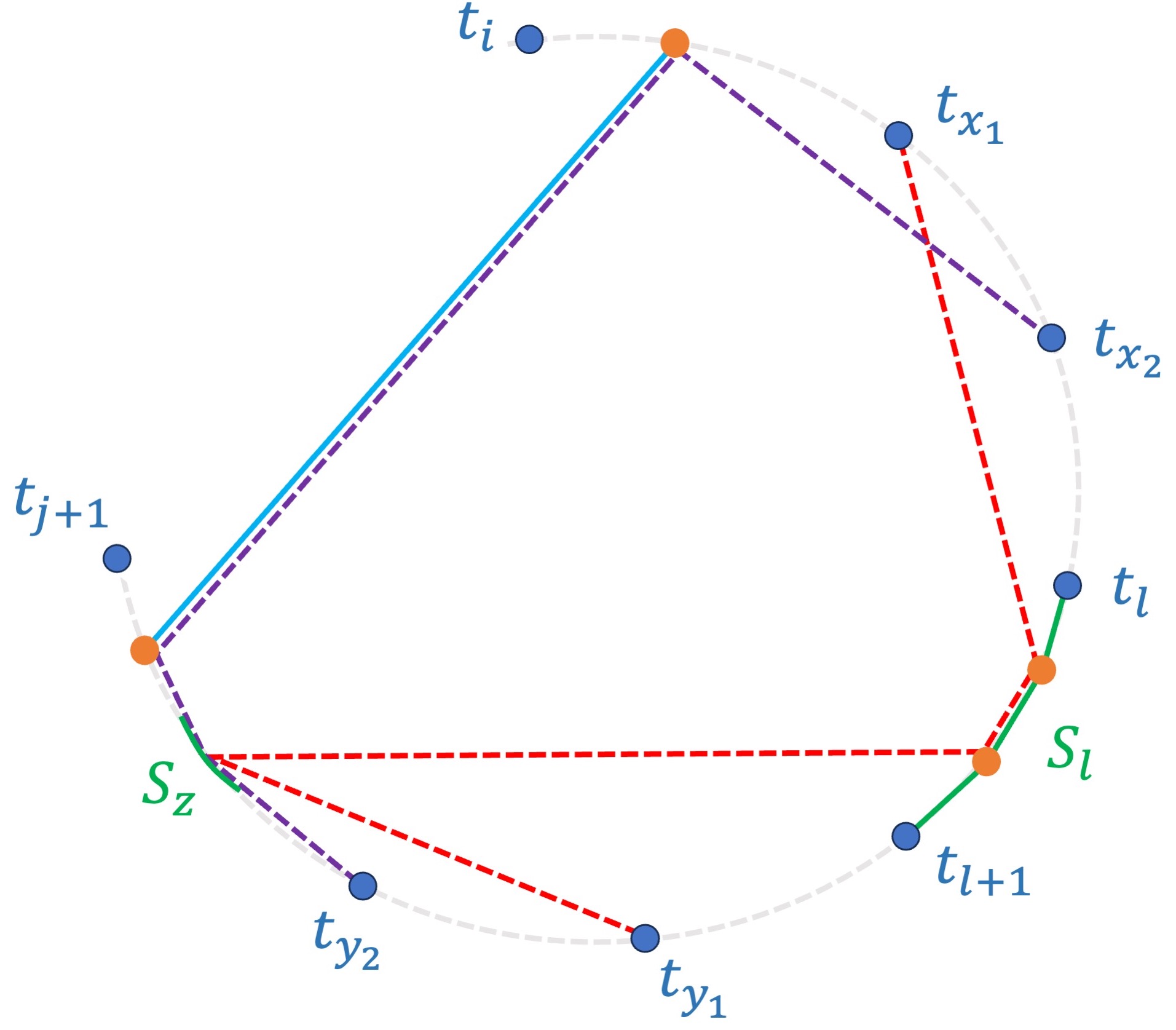}}\label{fig: case5.1_1}}
\hspace{0.5cm}
\subfigure[When $S_z$ lies between $t_{\ell+1}$ and $t_{y_1}$.]
{\scalebox{0.12}{\includegraphics{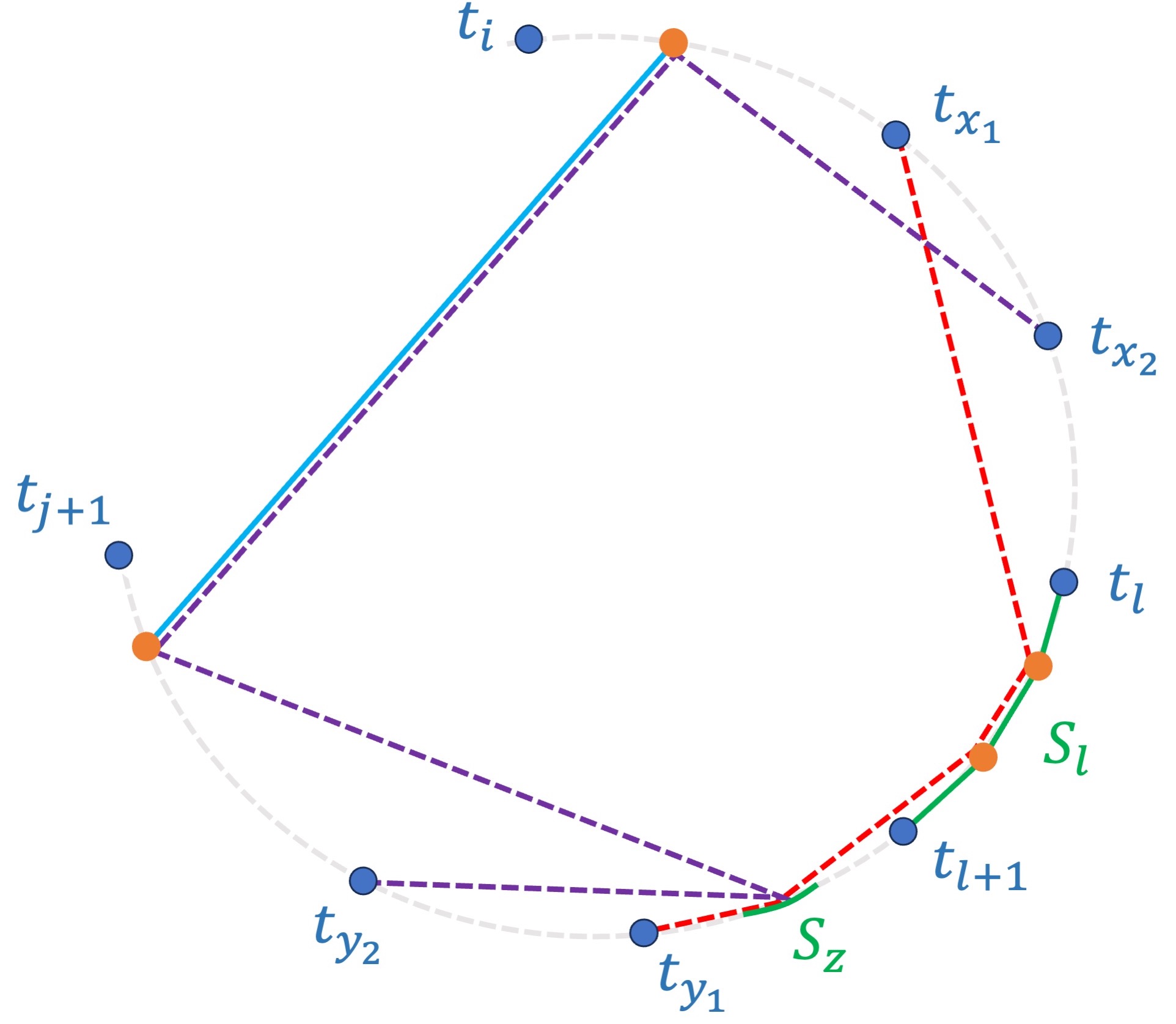}}\label{fig: case5.1_2}}
\caption{An illustration of Case 5.1.}\label{fig: case 5}
\end{figure}

\textbf{Case 5.2. $y_1 \le y_2$, the $t_{x_1}$-$t_{y_1}$ path visits $S_i,S_j$, and the $t_{x_2}$-$t_{y_2}$ path visits $S_{\ell}$}.
Same as 5.1.

\textbf{Case 5.3. $y_1 \ge y_2$.} 
In this case, the only possibility is that the $t_{x_1}$-$t_{y_1}$ path visits $S_i$ and $S_j$, while the $t_{x_2}$-$t_{y_2}$ path visits $S_{\ell}$, since otherwise we can get a contradiction by \Cref{obs:add}. 
As both paths are the concatenation of two subpaths, by induction hypothesis, the two first subpaths may intersect once, and the two second subpaths may intersect once. We show that the intersections cannot both happen. 
Assume for contradiction that the $t_{x_1}$-$t_i$ subpath and the $t_{x_2}$-$t_{\ell+1}$ subpath intersect at a segment $S_z$, and the $t_{j+1}$-$t_{y_2}$ subpath and the $t_{\ell}$-$t_{y_1}$ subpath intersect at a segment $S_w$.

\textbf{Case 5.3.1. $S_z$ lies between $t_{x_2}$ and $t_{\ell}$, and $S_w$ lies between $t_{\ell+1}$ and $t_{y_2}$.} In this case, $(t_{x_2},t_{y_2})$ does not repel $(t_{\ell},t_{\ell+1})$. By \Cref{obs:add}, $(t_z,t_{w+1})$ does not repel $(t_{\ell},t_{\ell+1})$. Since $(t_{x_1},t_{y_1})$ does not repel $(t_z,t_{z+1})$, \Cref{obs:add} implies that $(t_{x_1},t_{w+1})$ does not repel $(t_z,t_{z+1})$. Then by \Cref{obs:trans}, $(t_{x_1},t_{w+1})$ does not repel $(t_{\ell},t_{\ell+1})$. Since $(t_{x_1},t_{y_1})$ does not repel $(t_w,t_{w+1})$, \Cref{obs:trans} also implies that $(t_{x_1},t_{y_1})$ does not repel $(t_{\ell},t_{\ell+1})$. From our algorithm, we should have routed the $t_{x_1}$-$t_{y_1}$ path via $S_{\ell}$, a contradiction.

\textbf{Case 5.3.2. $S_z$ lies between $t_{x_2}$ and $t_{\ell}$, and $S_w$ lies between $t_{j+1}$ and $t_{y_1}$.} In this case, $(t_{x_2},t_{y_2})$ does not repel $(t_{\ell},t_{\ell+1})$, and $(t_{\ell},t_{y_2})$ does not repel $(t_w,t_{w+1})$. From \Cref{obs:sub}, $(t_{x_2},t_w)$ does not repel $(t_{\ell},t_{\ell+1})$, then by \Cref{obs:add}, $(t_z,t_w)$ does not repel $(t_{\ell},t_{\ell+1})$. On the other hand, since $(t_{x_1},t_i)$ does not repel $(t_z,t_{z+1})$, by \Cref{obs:add}, $(t_{x_1},t_w)$ does not repel $(t_z,t_{z+1})$, then by \Cref{obs:trans}, $(t_{x_1},t_w)$ does not repel $(t_{\ell},t_{\ell+1})$, and finally by \Cref{obs:add}, $(t_{x_1},t_{y_1})$ does not repel $(t_{\ell},t_{\ell+1})$. But this means that our algorithm should have routed the $t_{x_1}$-$t_{y_1}$ path via $S_{\ell}$, a contradiction.

\textbf{Case 5.3.3. $S_z$ lies between $t_{i+1}$ and $t_{x_1}$.} We can further consider two subcases: the case where $S_w$ lies between $t_{\ell+1}$ and $t_{y_2}$ and the case where $S_w$ lies between $t_{j+1}$ and $t_{y_1}$. A contradiction can be derived in the same way as Case 5.3.1 and Case 5.3.2. 
\end{proof}

\begin{claim} \label{clm:repel}
The constructed shortest path structure satisfies all the repelling paths conditions: if pairs $(t_1,t_2)$ and $(t'_1,t'_2)$ repel each other, then the $t_1$-$t_2$ shortest path and the $t'_1$-$t'_2$ shortest path are disjoint.
\end{claim}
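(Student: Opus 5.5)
The plan is to argue by contradiction, working at the level of segments and then at the level of the three parallel copies of each channel. Suppose $(t_1,t_2)$ and $(t'_1,t'_2)$ repel, yet $\Pi^*_{t_1,t_2}$ and $\Pi^*_{t'_1,t'_2}$ share a vertex. In the final graph every vertex is a terminal or lies on some segment $S_z$; channel edges add no interior vertices. So the common vertex lies on some segment $S_z$, and both paths visit $S_z$. By \Cref{clm:path-repel}, neither $(t_1,t_2)$ nor $(t'_1,t'_2)$ repels $(t_z,t_{z+1})$. The goal is to upgrade this to the statement that $(t_1,t_2)$ does not repel $(t'_1,t'_2)$, which is the desired contradiction.

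First I handle the positional cases using the circular order. If the two chords $(t_1,t_2)$ and $(t'_1,t'_2)$ cross, or share an endpoint, then by the Kalmanson condition one of the two alternative pairings equals the max in the four-point inequality with the crossing pairing on the larger side. Hence they do not repel, and there is nothing to prove. So assume the chords are non-crossing, with endpoints in cyclic order $t_1,t_2,t'_2,t'_1$ (relabeling as needed). The repelling condition then reduces to the single strict inequality $D(t_1,t_2)+D(t'_1,t'_2)<D(t_1,t'_2)+D(t_2,t'_1)$.

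Next I locate $S_z$ relative to these four terminals. Since both paths visit $S_z$ and the channel structure is outerplanar, $S_z$ cannot lie strictly inside the arc between $t_1$ and $t_2$ cut off from the other pair, and symmetrically for the other arc. Otherwise one path would have to cross the other's boundary chord, contradicting \Cref{clm:intersect-once} and planarity. So $S_z$ lies in one of the two "outer" arcs: between $t_2$ and $t'_2$, or between $t'_1$ and $t_1$. Say it is between $t_2$ and $t'_2$. Then the terminals appear as $t_1,t_2,t_z,t_{z+1},t'_2,t'_1$.

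I then chain the non-repelling relations with a five-point identity in the style of \Cref{lem: W-repel}. Writing the defect for a non-crossing configuration as $D(a,d)+D(b,c)-D(a,b)-D(c,d)$, it is additive when an intermediate pair is inserted. Taking the pairs $(t_1,t_2)$ against $(t_z,t_{z+1})$, and $(t_z,t_{z+1})$ against $(t'_2,t'_1)$, both of zero defect, I want to conclude zero defect for $(t_1,t_2)$ against $(t'_2,t'_1)$. I would derive this via Observations~\ref{obs:add}, \ref{obs:trans}, \ref{obs:sub}, or directly by a telescoping sum of four-point Kalmanson inequalities.

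\textbf{The main obstacle} is this chaining step. The two zero defects are not obviously additive, since they involve different outer terminals. If a direct six-point identity fails, my first fallback is to use the finer geometry of the three parallel copies. When both paths visit $S_z$ and the pairs lie on the same side of the adjacent channels, they use the side copies closest to their own arcs and therefore remain disjoint on $S_z$. The only way to actually meet is for both to use middle edges, or to traverse overlapping stretches of $S_z$. That forces one pair to separate the other's channel, which lands us back in the crossing case already handled.
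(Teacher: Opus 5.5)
Your case split is inverted relative to what actually holds, and both halves break.

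\textbf{The dismissed case is the main case.} The claim that $S_z$ cannot lie on the arc from $t_1$ to $t_2$ (the arc cut off by $(t_1,t_2)$ from the other pair) is unjustified. The path $\Pi^*_{t'_1,t'_2}$ can reach such an $S_z$ by running along the same stretch of $S_z$ that $\Pi^*_{t_1,t_2}$ uses. The two paths then merely share a subpath, which neither \Cref{clm:intersect-once} nor planarity forbids.

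Ruling this case out is the real content of the claim, and it is exactly where the metric argument works. The terminals appear as $t'_1,t_1,t_z,t_{z+1},t_2,t'_2$, so the chord $(t_1,t_2)$ separates $(t_z,t_{z+1})$ from $(t'_1,t'_2)$.
\begin{itemize}
\item Applying \Cref{clm:path-repel} to the $t'_1$-$t'_2$ path gives that $(t'_1,t'_2)$ does not repel $(t_z,t_{z+1})$.
\item Then \Cref{obs:add} shows that $(t'_1,t'_2)$ does not repel $(t_1,t_2)$, a contradiction. Here you need the form of the observation that its proof actually gives: all four bracketed defects vanish, and the first is $D(t'_1,t_2)+D(t_1,t'_2)-D(t'_1,t'_2)-D(t_1,t_2)$.
\end{itemize}
The arc from $t'_2$ to $t'_1$ is symmetric. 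This is the paper's proof. Note that only the outer pair's non-repulsion with the innermost segment is needed.

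\textbf{The chaining in the case you keep is false.} You keep the case where $S_z$ lies on a side arc, with order $t_1,t_2,t_z,t_{z+1},t'_2,t'_1$. Here no chord separates the other two, and non-repulsion does not propagate. For example, take a tree with adjacent internal nodes $p,q$, leaves $a,b,c$ hanging from $p$ and $d,e,f$ from $q$, all edges of length $1$, and boundary order $a,\dots,f$.
\begin{itemize}
\item $(a,b)$ and $(c,d)$ do not repel, since every pairing sums to $5$.
\item $(c,d)$ and $(e,f)$ do not repel, for the same reason.
\item Yet $D(a,b)+D(e,f)=4<6=D(a,e)+D(b,f)$, so $(a,b)$ and $(e,f)$ repel.
\end{itemize}
So no combination of \Cref{obs:add}, \Cref{obs:trans}, \Cref{obs:sub} or Kalmanson telescoping can close this case from non-repulsion with $(t_z,t_{z+1})$.

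The paper settles this case combinatorially instead. When the common segments lie only on the arcs between $t_2$ and $t'_2$ or between $t'_1$ and $t_1$, the rule for choosing among the three parallel copies of each channel keeps the final paths disjoint. Your fallback paragraph points vaguely in this direction, but it gives no argument. Its reduction to the crossing case is not substantiated.
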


\begin{proof}
If the two pairs of terminals cross each other (they appear on the boundary in order $t_1,t'_1,t_2,t'_2$), then from the four-point condition of Okamura-Seymour instances, the two pairs $(t_1,t_2),(t'_1,t'_2)$ cannot repel each other.
Therefore, we only need to consider the case where they appear on the boundary in order $t_1,t_2,t'_2,t'_1$.
Consider the $t_1$-$t_2$ path and the $t'_1$-$t'_2$ path given in $\Pi[1,k]$.
If they intersect only on the $t_1$-to-$t'_1$ boundary segment and the $t_2$-to-$t'_2$ boundary segment, then from the construction of the final graph and our rules of picking the edges, their final shortest paths will be disjoint. If not, we may assume without loss of generality that both paths visit the same segment $S_i$ on the $t_1$-to-$t_2$ boundary segment. From \Cref{clm:path-repel}, $(t'_1,t'_2)$ does not repel $(t_i,t_{i+1})$, and then by \Cref{obs:add}, $(t'_1,t'_2)$ does not repel $(t_1,t_2)$, a contradiction.
\end{proof}

\paragraph{Running time.} There are a total of $O(k^2)$ dynamic programming entries $\Pi[i,j]$ to compute. For each entry, we enumerate all $\ell$ within interval $[i,j]$, and for every pair $a,b$, we check whether the pair $(t_a,t_b)$ repels $(t_{\ell},t_{\ell+1})$ or $(t_{i},t_{i+1})$ and $(t_{j},t_{j+1})$, which takes $O(k^3)$ time. So this simple analysis shows that the running time of the dynamic programming algorithm is $O(k^5)$. We remark that the tasks of repelling pairs testing can be better organized globally so that in total they only take time $O(k^4)$.

If the algorithm returns a graph and a shortest path structure that satisfies all repelling paths conditions, we will solve the LP in \Cref{sec: repelling paths} for finding proper edge weights, whose existence is guaranteed by Theorem~\ref{lem: 2-path condition}. The LP has $O(k)$ variables (as the constructed outerplanar graph has $O(k)$ edges) and $2^{\Omega(k)}$ constraints, but can be solved by the Ellipsoid algorithm with a separation oracle that computes all-pairs shortest paths (taking $O(k^2)$ time for an outerplanar graph). Therefore, the running time of solving the LP is $O(k^2)\cdot O(k)\cdot O(k^2)=O(k^5)$. Together with the discussion at the end of \Cref{sec: ordering}, the running time of our algorithm is $O(k^5)$.

\section{Minimum Realization of OS Instances: Proof of Theorem 4}\label{sec:minimum-proof}

\subsection{High level overview}

We now prove Theorem~\ref{thm: minimum main}. Throughout this overview, we assume that Theorem~\ref{lem: 2-path condition} has been proved. Thus, for a candidate OS graph, the numerical problem of assigning edge lengths is settled once we can construct a $D$-good shortest path structure. The proof has two main ingredients: a cut characterization for the existence of such a shortest path structure, and a medial-graph construction that satisfies all of the cut requirements with the fewest crossings.

\paragraph{Step 1. Characterize good shortest path structures by chain capacities.}
Consider a boundary cut $x$--$y$ and a repelling set $M$ of terminal pairs crossing it. In a $D$-good shortest path structure, the paths for the pairs in $M$ are mutually vertex-disjoint. Every one of these paths must meet every chain joining the two sides of the cut, and distinct paths must use distinct vertices of the chain. Therefore, a necessary condition is
\[
|M|\le |A|
\]
for every repelling set $M$ and every chain $A$ crossed by all pairs in $M$.

Section~\ref{sec: repelling path theorem} proves that this necessary condition is also sufficient. The proof recursively constructs a path $\Pi[i,j]$ for every cyclic interval of terminals. If $(i,j)$ repels $(i+1,j-1)$, the path $\Pi[i,j]$ is routed immediately outside $\Pi[i+1,j-1]$; otherwise it is formed by splicing $\Pi[i,j-1]$ and $\Pi[i+1,j]$ at an intersection. Every vertex of every constructed path is accompanied by a tight certificate: a chain ending at that vertex and a repelling set of the same cardinality crossing the chain. If the clockwise and counterclockwise constructions for some terminal pair overlap in the wrong way, two such certificates combine to produce a chain crossed by more repelling pairs than its length. This contradicts the assumed capacity inequalities. The remaining paths form a $D$-good shortest path structure, and hence Theorem~\ref{lem: 2-path condition} assigns realizing edge lengths.

\paragraph{Step 2. Translate the capacities to a medial template.}
For each boundary cut $x$--$y$, let
\[
a_{x,y}=\max\{|M|:M\text{ is a repelling set crossing the $x$--$y$ cut}\}.
\]
The chain characterization requires every $x$--$y$ chain to contain at least $a_{x,y}$ primal vertices. A regular neighborhood of a chain of length $r$ is crossed by $2r$, $2r-1$, or $2r-2$ medial strands, according as neither, one, or both boundary endpoints are terminals. Accordingly, define
\[
b_{x,y}=2a_{x,y}-\mathbf 1[x\in T]-\mathbf 1[y\in T].
\]
The medial template of every OS realization of $D$ must have at least $b_{x,y}$ chords crossing the $x$--$y$ cut. We show that these lower bounds are simultaneously tight. The values $b_{x,y}$ satisfy a discrete Monge inequality on consecutive boundary reference points. Taking circular second differences recovers the indicator that one medial boundary point is matched to another. For every boundary point, exactly one second difference is equal to one and all the others are zero. Hence the cut values determine a unique perfect matching of the medial boundary endpoints, denoted by $\Phi(D)$.

\paragraph{Step 3. Prove minimality and recover the primal realizations.}
Any feasible medial template dominates $\Phi(D)$ on every circular cut. The chord-uncrossing result proved in Section~\ref{sec: repelling paths} shows that strict dominance can be reduced by uncrossing crossing chord pairs. Every nontrivial uncrossing decreases the number of crossing pairs, so $\Phi(D)$ is the unique feasible template of minimum crossing number. Since each medial crossing corresponds to one primal edge, this crossing number is the minimum number of edges in an OS realization.

Finally, take any arrangement of the chords of $\Phi(D)$ and form its primal graph. Its medial cut counts meet all the chain-capacity lower bounds, so Section~\ref{sec: repelling path theorem} constructs a $D$-good shortest path structure in the primal graph. Theorem~\ref{lem: 2-path condition} then computes nonnegative edge lengths realizing $D$. Conversely, the medial graph of every minimum realization is an arrangement of $\Phi(D)$. This gives the compact representation of all minimum graph structures asserted in Theorem~\ref{thm: minimum main}.

\subsection{Existence of Good Shortest Path Structures}
\label{sec: repelling path theorem}

The main result in this section is the following theorem.

\begin{theorem}
\label{thm:repelling-paths}
Given a metric $D$ on $T$, an OS instance $(G,T)$ admits a $D$-good shortest path structure iff for each repelling set $\mset$ and each chain $A$, the number of pairs in $\mset$ that cross $A$ is at most $|A|$. Moreover, such a $D$-good shortest path structure, if it exists, can be computed efficiently.
\end{theorem}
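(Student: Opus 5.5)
Necessity is a counting argument. Suppose $\pset$ is $D$-good, $\mset$ is a repelling set, and $A=(A_1,\ldots,A_r)$ is a chain. Every pair $(s,t)\in\mset$ crossing $A$ has its endpoints on opposite sides of the boundary cut of $A$, or on an endpoint of that cut. The chain, together with the peripheral regions at its ends, separates the disk into two parts whose boundary arcs are exactly the two arcs of the cut. A curve drawn through the regions of the chain therefore meets the drawing of $G$ only at the vertices of $A$. Planarity then forces $P_{s,t}$ to pass through a vertex of $A$, and when $s$ or $t$ is itself the terminal end of the chain, that terminal is such a vertex. The paths for distinct pairs of $\mset$ repel, so they are vertex-disjoint, and each therefore uses a distinct vertex of $A$. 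This gives $|\{(s,t)\in\mset \text{ crossing } A\}|\le |A|$.

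For sufficiency I would build the paths by interval recursion on cyclic intervals $[i,j]$ of terminals, following the overview. Start with trivial paths and the boundary-adjacent paths for $j=i+1$, choosing the path closest to the boundary. For a longer interval there are two cases. If $(t_i,t_j)$ repels $(t_{i+1},t_{j-1})$, let $\Pi[i,j]$ be the path immediately outside $\Pi[i+1,j-1]$, that is, the innermost $t_i$--$t_j$ path in the region between $\Pi[i+1,j-1]$ and the far boundary that avoids $\Pi[i+1,j-1]$. Otherwise, splice $\Pi[i,j-1]$ and $\Pi[i+1,j]$ at a common vertex. Observation~\ref{lem: W-repel} and the non-repelling equalities are what justify the splice. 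Throughout, I maintain a certificate invariant: for every vertex $v$ on $\Pi[i,j]$ there is a chain ending at $v$ and a repelling set of pairs nested inside $[i,j]$, of size equal to the chain length, crossing that chain.

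The invariant yields two things. First, in the repelling case, a path outside $\Pi[i+1,j-1]$ must exist. If it did not, some chain from $\Pi[i+1,j-1]$ to the outer boundary would be too short, and the certificate for the blocking vertex together with the pair $(t_i,t_j)$ would give a repelling set exceeding the chain length, contradicting the hypothesis. Second, in the splicing case, the two paths must actually meet; I would prove this by the same kind of contradiction. Once every $\Pi[i,j]$ is defined, I would set $P_{t_i,t_j}$ from the clockwise construction, or from the counterclockwise one. I would then check that two paths intersect in a common subpath, using the nested construction and the standard uncrossing of Claim~\ref{clm:not-repel-switch}, and that repelling pairs are disjoint: nested repelling intervals are separated by construction, and crossing pairs never repel, by the Kalmanson condition.

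The main obstacle is consistency between the clockwise and counterclockwise constructions for the same pair $(t_i,t_j)$. Both $\Pi[i,j]$ and $\Pi[j,i]$ connect the same terminals, and I must show they can be chosen to overlap correctly, or that one dominates. If they cross in the wrong way, I would concatenate the certificate chain at a vertex of one with the certificate chain at the corresponding vertex of the other, then take the union of the two repelling sets. These come from complementary intervals, so they remain pairwise repelling by Observation~\ref{obs:add}-type monotonicity. The result is a chain crossed by more repelling pairs than its length, which contradicts the capacity hypothesis.

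On efficiency: the recursion has $O(k^2)$ intervals, and each step is a shortest/innermost-path search or a splice in $G$. The capacity hypothesis is used only inside the proof, so the construction runs in polynomial time.
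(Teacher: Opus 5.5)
Your plan matches the paper's proof: the same interval recursion with the repel/splice dichotomy, the same tight-certificate invariant at every vertex of $\Pi[i,j]$, and the same gluing of two certificate chains to rule out a bad pair. The step that fails as written is that gluing. You take the whole union $\mset\cup\mset'$ and claim that pairs from complementary intervals remain pairwise repelling, but this is false in general.

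\begin{itemize}
\item Two pairs sharing a terminal never repel, since the triangle inequality makes the maximum equal to $D(s,t)+D(s,t')$. Nothing prevents $\mset$ (on $\{i,\dots,j\}$) and $\mset'$ (on $\{j,\dots,i\}$) from both using $i$, e.g.\ $(i,j-1)\in\mset$ and $(j+1,i)\in\mset'$.
\item Even disjoint pairs $(a,b)$ and $(c,d)$ on opposite arcs can satisfy $D(a,b)+D(c,d)=D(a,c)+D(b,d)$, as in a star metric $D(u,v)=w_u+w_v$.
\end{itemize}

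The missing idea is to give up one pair and use the slack in the chain length. A repelling family is non-crossing, and all of $\mset$ crosses $A$, so the pairs of $\mset$ are nested. Delete the outermost one, $(\bar i,\bar j)$. Each remaining $(a,b)\in\mset$ is strictly inside $(\bar i,\bar j)$ and repelled by it. That means the nonnegative Kalmanson second differences over the rectangle between the arcs $[\bar i,a]$ and $[b,\bar j]$ have positive sum. This rectangle sits inside the one between $[b,c]$ and $[d,a]$, which decides whether $(a,b)$ repels a pair $(c,d)$ on $\partial[j,i]$. Hence $\mset^*=(\mset\setminus\{(\bar i,\bar j)\})\cup\mset'$ is repelling. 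The only pair that could lie in both sets is $(i,j)$, which would be the removed outermost pair, so $|\mset^*|=|A|+|A'|-1$. The glued chain $A^*$ (that is, $A'$ followed by the part of $A$ from $x'$ to $\partial[i,j]$) drops $x$ and counts $x'$ once. So $|A^*|\le|A|+|A'|-2<|\mset^*|$, which is the contradiction.

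Three smaller points.
\begin{itemize}
\item You cannot use Claim~\ref{clm:not-repel-switch} to check the subpath-intersection property. That claim presupposes $\pset$ is already a shortest path structure, since its contradictions come from paths meeting twice. The paper instead shows directly that $\Pi[i,j]\cap\Pi[i,j-1]$ is an initial subpath and $\Pi[i,j]\cap\Pi[i+1,j]$ is a final subpath. This follows from the outer-strip choice in the repel case, and from a splice point whose tail is internally disjoint from $\Pi[i,j-1]$ in the other case. It then inducts over nested intervals.
\item The choice between $\Pi[i,j]$ and $\Pi[j,i]$ must follow a global rule. The paper fixes a nonterminal reference point, or takes the shorter side when $n$ is odd. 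This makes any two non-crossing selected pairs nested on the same side, which both the disjointness induction (using Observation~\ref{lem: W-repel} in the splice case) and the subpath induction rely on.
\item In the splice case, $\Pi[i,j-1]$ and $\Pi[i+1,j]$ meet by planarity alone, because $i,i+1,j-1,j$ alternate on the boundary. No capacity argument is needed there.
\end{itemize}
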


For the ``only if'' direction: Consider a metric $D$ on $T$ and an OS instance $(G,T)$ that admits a $D$-good shortest path structure. Consider any repelling set $\mset$ and any chain $A$. If the number of pairs in $\mset$ that cross $A$ is greater than $|A|$, then (i) from the definition of a $D$-good shortest path structure, the shortest paths connecting these pairs are mutually vertex-disjoint; and (ii) since these pairs cross $A$, each of their shortest paths contains a distinct vertex from $A$, a contradiction.

The remainder of this section is dedicated to the proof of the ``if'' direction. Assume that we are given a metric $D$ on $T$, an OS instance $(G,T)$, such that for each repelling set $\mset$ and each chain $A$, the number of pairs in $\mset$ that cross $A$ is at most $|A|$. We will construct a $D$-good shortest path structure.

For simplicity, we denote the terminals by $1,\ldots,n$, that appear on the disc boundary clockwise in this order. For a pair $i,j$ of terminals, we denote by $\partial[i,j]$ the boundary segment from $i$ clockwise to $j$ (so $i,j$ partition the boundary into two segments: $\partial[i,j]$ and $\partial[j,i]$). 

We now construct, inductively, for each pair $i,j$ of terminals, a path $\Pi[i,j]$ connecting $i$ to $j$ as follows. The base case is when $j=i+1$, and we simply let $\Pi[i,j]$ be the path characterizing the boundary of the graph between $i,j$, so $\Pi[i,j]$ and $\partial[i,j]$ surround a region with no vertices or edges in its interior. Consider now a general pair $i,j$ with $j\ge i+2$. We distinguish between the following two cases.

\textbf{Case 1: the pairs $(i,j)$ and $(i+1,j-1)$ repel each other.}
In this case, we let $\Pi[i,j]$ be the path ``right next to'' the path $\Pi[i+1,j-1]$. Specifically, consider all regions that (i) lie outside the area enclosed by $\Pi[i+1,j-1]$ and $\partial[i+1,j-1]$; and (ii) are incident with a vertex of $\Pi[i+1,j-1]$. 
The union of all such regions forms a strip between boundary segments $\partial[i,i+1]$ and $\partial[j-1,j]$, with one side being $\Pi[i+1,j-1]$ and the other side being a walk starting at $i$ and ending at $j$. 
We simply remove all the cycles in this walk and let $\Pi[i,j]$ be the resulting simple path from $i$ to $j$. See \Cref{fig: path} for an illustration.

\textbf{Case 2: the pairs $(i,j)$ and $(i+1,j-1)$ do not repel each other.} We have already constructed paths $\Pi[i,j-1]$ and $\Pi[i+1,j]$ inductively. Since terminals $i,i+1,j-1,j$ appear on the boundary in this order, paths $\Pi[i,j-1]$ and $\Pi[i+1,j]$ must intersect, say at a vertex $p$. We simply let $\Pi[i,j]$ be the concatenation of the subpath of $\Pi[i,j-1]$ between $i$ and $p$ and the subpath of $\Pi[i+1,j]$ between $p$ and $j$. 

From the description of the algorithm, it is easy to verify that in both cases, we can compute the path $\Pi[i,j]$ efficiently. Since the number of terminal pairs $(i,j)$ is $O(k^2)$, our algorithm for constructing all paths $\set{\Pi[i,j]}_{i,j}$ is efficient.

\begin{figure}[h]
\centering
\subfigure
{\scalebox{0.11}{\includegraphics{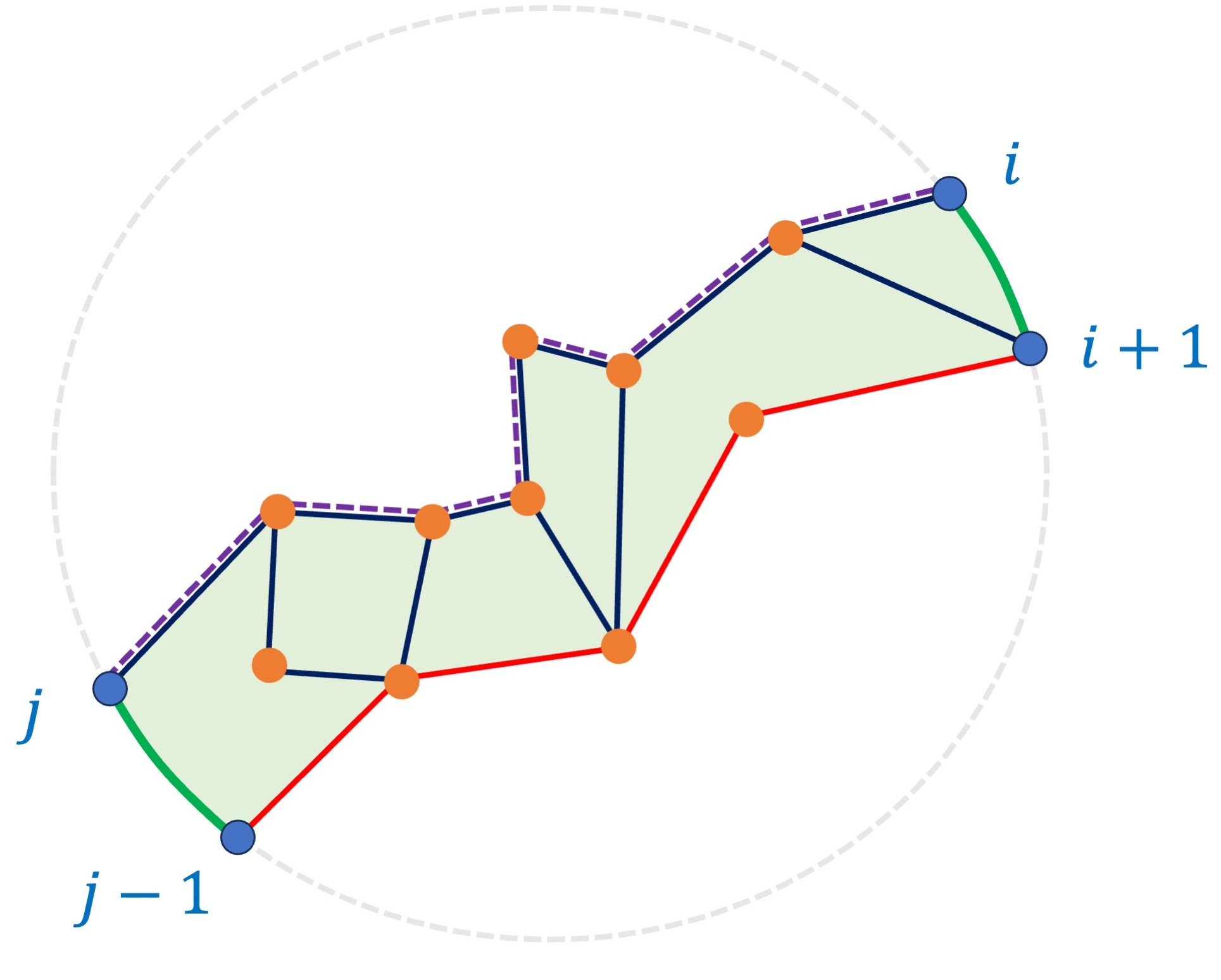}}}
\hspace{1.0cm}
\subfigure
{\scalebox{0.11}{\includegraphics{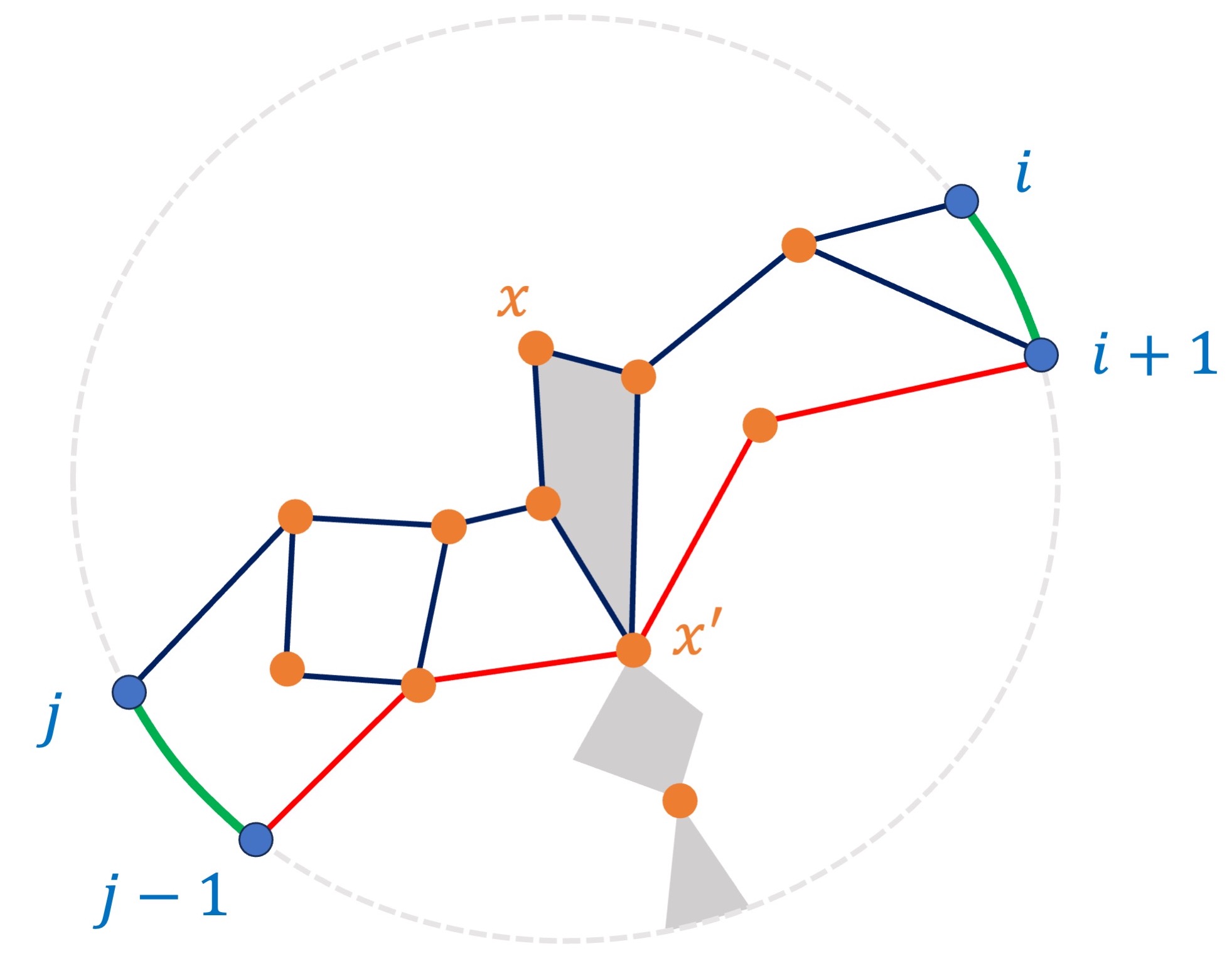}}}
\caption{Path $\Pi[i,j]$ in Case 1 (purple dashed line on the left): the other side of the the strip (light green area) supported by path $\Pi[i+1,j+1]$ (red), and the chain from vertex $x$ built inductively (gray sequence on the right).}\label{fig: path}
\end{figure}

Denote by $G[i,j]$ the subgraph of $G$ inside the area enclosed by $\partial[i,j]$ and $\Pi[i,j]$ (boundary included), so $G[i,j]$ is also an OS instance with terminals $i,i+1,\ldots,j-1,j$ appearing on the boundary in this order.
We prove the following crucial claim.

\begin{claim}
\label{clm: tight cut}
For each pair $i,j$ of terminals and each vertex $x$ on path $\Pi[i,j]$, there is (i) a repelling set $\mset$ of pairs among terminals in $\set{i,\ldots,j}$; and (ii) a chain $A$ from $x$ to $\partial[i,j]$ in $G[i,j]$, such that all pairs in $\mset$ cross $A$, and $|\mset|=|A|$.
\end{claim}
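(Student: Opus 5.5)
We argue by induction on $j-i$, mirroring the recursive construction of $\Pi[i,j]$.

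\textbf{Base case.} For $j=i+1$, the path $\Pi[i,i+1]$ is the boundary path between $i$ and $i+1$. For any vertex $x$ on it, take the chain $A=(x)$ together with the peripheral regions on either side. Since $x$ lies on $\partial[i,i+1]$, this degenerates to a chain of length $1$ (or length $0$ if $x$ is a terminal endpoint, where we use the convention that endpoint terminals count). Take $\mset=\set{(i,i+1)}$, or the trivial pair $(x,x)$ when $x$ is a terminal. This pair crosses the chain, so $|\mset|=|A|$.

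\textbf{Case 2 of the construction.} Here $x$ lies on the $i$--$p$ part of $\Pi[i,j-1]$ or on the $p$--$j$ part of $\Pi[i+1,j]$. By induction we obtain a chain $A$ in $G[i,j-1]$ (resp.\ $G[i+1,j]$), which is contained in $G[i,j]$ since the spliced path bounds a region containing these subregions near $x$. We also obtain a repelling set $\mset$ with $|\mset|=|A|$ whose pairs all cross $A$. Since the terminals of $\mset$ lie in $\set{i,\dots,j}$, this pair $(\mset,A)$ directly serves for $\Pi[i,j]$. One must check that the chain still ends on $\partial[i,j]$ and stays inside $G[i,j]$. This holds because the chain lies on the side of $\Pi[i,j-1]$ facing $\partial[i,j-1]\subseteq\partial[i,j]$, and because of the subpath property of the portion of $\Pi[i,j-1]$ before $p$.

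\textbf{Case 1 of the construction.} Each vertex $x$ of $\Pi[i,j]$ lies on the outer side of the strip, so it is incident to a region $R$ of the strip that is also incident to some vertex $y$ of $\Pi[i+1,j-1]$. By induction, $y$ has a chain $A_y$ to $\partial[i+1,j-1]$ and a repelling set $\mset_y$ on $\set{i+1,\dots,j-1}$ with $|\mset_y|=|A_y|$. Set $A=(x,R)$ followed by $A_y$, so $|A|=|A_y|+1$, and set $\mset=\mset_y\cup\set{(i,j)}$. The pair $(i,j)$ crosses $A$ because $A$ ends on $\partial[i+1,j-1]$, which lies strictly between $i$ and $j$. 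It then remains to show that $(i,j)$ repels every pair in $\mset_y$. We know $(i,j)$ repels $(i+1,j-1)$ by the Case 1 hypothesis. For a pair $(a,b)$ with $i+1\le a<b\le j-1$, we want the analogue of Observation~\ref{obs:add}: non-repelling of $(i,j)$ with $(a,b)$ should force non-repelling of $(i+1,j-1)$ with $(a,b)$, and conversely. I would prove the contrapositive using the Kalmanson inequalities, in the style of Observation~\ref{lem: W-repel}. Nested pairs satisfy the quantity $D(i,a)+D(b,j)-D(i,j)-D(a,b)\ge 0$ automatically under the four-point condition, and non-repelling means equality. So I would show that equality for $(i,j),(a,b)$ together with the Monge inequalities forces equality for $(i+1,j-1),(i,j)$, contradicting Case 1.

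\textbf{Main obstacle.} The difficulty is this last step, because the implication does not hold for arbitrary nested pairs. Applying the repelling transfer directly to $(i+1,j-1)$ is not justified, and pairs of $\mset_y$ that share an endpoint with $i+1$ or $j-1$ need care. If the direct metric argument fails, I would strengthen the induction hypothesis to require that $\mset$ always contains the pair $(i,j)$ itself. This holds in Case 1 by construction, and in the base case. Repulsion with $(i,j)$ is then needed only against pairs nested inside $(i+1,j-1)$, where the additive identity of Observation~\ref{lem: W-repel} can be chained. A second technical issue is making sure the appended region $R$ is interior, not peripheral, except at the ends, so that $A$ is a valid chain. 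I would handle this by removing cycles and shortcutting in $A_y$.
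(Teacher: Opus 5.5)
Your skeleton is the paper's: induction along the construction of $\Pi[i,j]$, a length-one chain with $\mset=\set{(i,i+1)}$ in the base case, inheritance in Case~2, and in Case~1 appending $R$ and $x$ to the chain of $x'$ while adding $(i,j)$ to $\mset'$.

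In the base case, use the chain consisting of the peripheral region between $\Pi[i,i+1]$ and $\partial[i,i+1]$, followed by $x$, for \emph{every} $x$, terminals included. Your length-zero variant with a trivial pair has $|\mset|=1\ne|A|$.

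The gap is the one step with metric content: that $(i,j)$ repels every pair of $\mset'$. The paper asserts this directly from the Case~1 hypothesis. You state the right target but do not prove it, and you suggest it may fail. The analogy you invoke, Observation~\ref{obs:add}, is the wrong tool: it transfers non-repulsion with the inner pair fixed, whereas here the outer pair $(i,j)$ is fixed and the inner pair varies. Your fallback does not close the induction either. The invariant ``$(i,j)\in\mset$'' is not preserved in Case~2, where the inherited set contains $(i+1,j)$ or $(i,j-1)$ rather than $(i,j)$. Restoring it by an exchange would again need a repelling-monotonicity lemma.

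The direct argument works, with no exceptions for pairs touching $i+1$ or $j-1$ or for trivial pairs. For $s,u,v,t$ in weak clockwise order, put $\delta(s,t;u,v)=D(s,v)+D(u,t)-D(s,t)-D(u,v)$. The Kalmanson inequality gives $\delta\ge 0$; when points coincide, use the triangle inequality or a trivial identity instead. Since by Kalmanson the crossing sum $D(s,v)+D(u,t)$ is the larger of the two switched sums, the nested pairs $(s,t),(u,v)$ repel iff $\delta(s,t;u,v)>0$. For $i<i+1\le a\le b\le j-1<j$, direct cancellation gives
\[
\delta(i,j;a,b)=\delta(i,j;i+1,j-1)+\delta(i,j-1;a,b)+\delta(i+1,j;a,j-1).
\]
The last two terms are $\ge 0$ (orders $i,a,b,j-1$ and $i+1,a,j-1,j$), so $\delta(i,j;a,b)\ge\delta(i,j;i+1,j-1)>0$.

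The same identity, with $(i+1,j-1)$ replaced by any nested $(s',t')$, shows that an outer pair repelling a nested pair repels everything nested inside it. Only the converse fails, and it is not needed. With this lemma, your Case~1 is exactly the paper's.

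Your worry that $R$ might be peripheral is fair, and the paper is silent on it, but shortcutting $A_y$ does not address it. If $R$ is peripheral, its reference point lies on $\partial[i,j]$ strictly between $i$ and $j$. The length-one chain $(R,x)$ with $\mset=\set{(i,j)}$ then already works.
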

\begin{proof}
We prove by induction. The base case is when $j=i+1$, and the claim trivially holds since for each vertex $x$ on $\Pi[i,j]$, the (unique) region incident to the boundary segment $\partial[i,j]$ together with $x$ gives a chain, and the singleton set $\mset=\set{(i,j)}$ satisfies the conditions in the claim with this chain.
Consider now a general pair $i,j$ with $j\ge i+2$. We distinguish between the following two cases.

Case 1: the pairs $(i,j)$ and $(i+1,j-1)$ repel each other. 
Consider a vertex $x$ on path $\Pi[i,j]$. From our construction of $\Pi[i,j]$, there must be a vertex $x'$ on $\Pi[i+1,j-1]$ that is co-facial with $x$ (say that they are both incident with face $R$).
By induction hypothesis, there is a chain $A'$ from $x'$ to $\partial[i+1,j-1]$ and a set $\mset'$ of pairs among terminals in $\set{i+1,\ldots,j-1}$, such $\mset'$ cross $A'$ and $|\mset'|=|A'|$. Since $(i,j)$ repels $(i+1,j-1)$, $(i,j)$ also repels every pair in $\mset'$. We simply let $\mset=\mset'\cup \set{(i,j)}$, and augment $A'$ with region $R$ and vertex $x$ at its end (increasing its length by $1$). It is easy to verify that the new $A$ and $\mset$ satisfy the conditions. See \Cref{fig: path} for an illustration.

Case 2: the pairs $(i,j)$ and $(i+1,j-1)$ do not repel each other. From our construction, every vertex $x$ in $\Pi[i,j]$ belongs to either $\Pi[i+1,j]$ or $\Pi[i,j-1]$. Assume without loss of generality that $x\in \Pi[i+1,j]$. By induction hypothesis, there is a chain $A$ from $x$ to $\partial[i+1,j]$ and a set $\mset$ of pairs among terminals in $\set{i+1,\ldots,j}$, such that $\mset$ cross $A$ and $|\mset|=|A|$. Clearly, $\mset$ and $A$ also satisfy the conditions for the terminal pair $i,j$.
\end{proof}

Consider now a pair $i,j$. We have constructed two paths connecting them: $\Pi[i,j]$ and $\Pi[j,i]$. We say that the pair $i,j$ is \emph{bad}, iff the area enclosed by $\Pi[i,j]$ and $\partial[i,j]$ and the area enclosed by $\Pi[j,i]$ and $\partial[j,i]$ intersect in their interiors.

\begin{claim}
If there is a bad pair, then there is a repelling set $\mset$ and a chain $A$ such that all pairs in $\mset$ cross $A$ and $|\mset|>|A|$.
\end{claim}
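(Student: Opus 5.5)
If the pair $i,j$ is bad, I will combine the certificates from \Cref{clm: tight cut} for the two paths $\Pi[i,j]$ and $\Pi[j,i]$. The regions $G[i,j]$ and $G[j,i]$ lie on opposite sides of their respective paths, and their interiors overlap. So some vertex $x$ of $\Pi[i,j]$ lies strictly inside the region bounded by $\Pi[j,i]$ and $\partial[j,i]$, or symmetrically. More usefully, I will pick a face $R$ in the common interior and two vertices on its boundary: a vertex $x$ of $\Pi[i,j]$ and a vertex $y$ of $\Pi[j,i]$. Since the interiors genuinely overlap (not merely touch), I expect to choose $x$ and $y$ so that one of them lies strictly on the far side of the other path. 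Both paths join $i$ to $j$, so they cross or touch in a way that produces such a configuration.

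\Cref{clm: tight cut} gives a chain $A_1$ from $x$ to $\partial[i,j]$ inside $G[i,j]$ and a repelling set $\mset_1$ of pairs among $\set{i,\ldots,j}$ crossing $A_1$, with $|\mset_1|=|A_1|$. Symmetrically it gives $A_2$ from $y$ to $\partial[j,i]$ inside $G[j,i]$ and $\mset_2$ among $\set{j,\ldots,i}$, with $|\mset_2|=|A_2|$. I will form the chain $A$ by reversing $A_1$, appending the face $R$, and then following $A_2$. If $x=y$ is a common vertex interior to both regions, I will instead glue at $x$, which counts $x$ once; this saves one unit and is exactly what produces the strict inequality. So the key choice is to locate a shared vertex (or a shared face) lying in the overlap, so that $|A|\le |A_1|+|A_2|-1$. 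Any repeated vertices or regions in the concatenation can be shortcut, which only decreases the length, and the endpoints stay peripheral.

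Next I will set $\mset=\mset_1\cup\mset_2$ and check two things. First, every pair of $\mset$ crosses $A$. The boundary cut of $A$ has one endpoint in $\partial[i,j]$ and one in $\partial[j,i]$, while pairs of $\mset_1$ already cross the cut of $A_1$, whose endpoints are $x$'s side and a point of $\partial[i,j]$; I need to verify that separation is preserved, treating terminal endpoints $i,j$ via the ``coincides'' clause in the crossing definition. Second, $\mset$ is repelling. Pairs within $\mset_1$ repel by hypothesis, and likewise within $\mset_2$. For $p\in\mset_1$ and $q\in\mset_2$, the pairs lie in complementary intervals, so they do not cross. I expect the certificates to be built from nested chains: in Case 1 each step adds an outer pair $(i,j)$ repelling the inner one. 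So the outermost pairs behave like $(i,j)$-type pairs from each side, and I plan to use the badness itself, via \Cref{lem: W-repel} and \Cref{obs:add}-type monotonicity, to argue that non-repulsion of some $p,q$ would force the two constructions to splice at a common vertex, contradicting overlap.

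The main obstacle is this cross-repulsion step together with the exact $-1$ counting. I would first try to strengthen the induction of \Cref{clm: tight cut} to record that the certificate's chain ends at $x$, with $x$ lying on all paths of the recorded pairs. I would then show that if $p\in\mset_1$ and $q\in\mset_2$ did not repel, the construction in Case 2 for the appropriate sub-interval would have routed $\Pi$ through the overlap, yielding non-overlapping regions. This gives $|\mset|=|A_1|+|A_2|>|A|$.
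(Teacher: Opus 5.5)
\textbf{There is a genuine gap exactly where your strict inequality is supposed to come from.}

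Joining $A_1$ and $A_2$ through a face $R$ gives $|A|=|A_1|+|A_2|\ge|\mset_1\cup\mset_2|$, so nothing is gained. Gluing at a common vertex of $\Pi[i,j]$ and $\Pi[j,i]$ saves exactly one unit. (Such a vertex lies on the boundary of both regions, not in their interiors.) That one unit can be lost at once. If both $\Pi[i,j]$ and $\Pi[j,i]$ are built in Case~1, every certificate from \Cref{clm: tight cut} on either path contains the pair $\{i,j\}$. Then $|\mset_1\cup\mset_2|\le|A_1|+|A_2|-1$, and your count gives no strict inequality.

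Even without a duplicate, the outermost pairs of $\mset_1$ and $\mset_2$ both have their $\Pi$-paths through the gluing vertex, and nothing forces them to repel. Moreover, common internal vertices of $\Pi[i,j]$ and $\Pi[j,i]$ also exist when the pair is not bad, e.g.\ when the two paths coincide. So this choice of gluing point does not yet use badness. Finally, the strengthening of \Cref{clm: tight cut} you propose for cross-repulsion is false. In a Case~1 step, $\Pi[i,j]$ is vertex-disjoint from $\Pi[i+1,j-1]$, so $x$ lies only on the path of the outermost recorded pair, not on the paths of all recorded pairs.

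\textbf{The missing idea is a saving of two units; this is how the paper uses badness.} You already noted that some $x\in\Pi[i,j]$ lies strictly inside the region bounded by $\Pi[j,i]$ and $\partial[j,i]$. Keep that $x$ and take its certificate $(\mset,A)$. Since $A$ ends on $\partial[i,j]$, outside that region, it must pass internally through a vertex $x'\neq x$ of $\Pi[j,i]$. Let $(\mset',A')$ be the certificate of $x'$ on $\Pi[j,i]$. Let $A^*$ be $A'$ glued at $x'$ to the part of $A$ from $x'$ to $\partial[i,j]$. Because that part omits $x$, we get $|A^*|\le|A|+|A'|-2$.

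This slack pays for discarding the outermost pair $(\bar i,\bar j)$ of $\mset$. It is the only pair of $\mset$ that can coincide with a pair of $\mset'$ (namely $\{i,j\}$). With $\mset^*=(\mset\setminus\{(\bar i,\bar j)\})\cup\mset'$ one still has $|\mset^*|=|A|+|A'|-1>|A^*|$. Crossing is automatic, since it depends only on the boundary cut of $A^*$, whose ends lie on $\partial[i,j]$ and $\partial[j,i]$. The paper then asserts that $\mset^*$ is repelling. With only a one-unit saving and $\mset=\mset_1\cup\mset_2$, your accounting has no room for such a discard.
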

\begin{proof}
Let $i,j$ be a bad pair, so $\Pi[i,j]\ne \Pi[j,i]$ (since otherwise the $\Pi[i,j]$-$\partial[i,j]$ area and the $\Pi[j,i]$-$\partial[j,i]$ area intersect only at the boundary).
Consider now a vertex $x$ in $\Pi[i,j]$ that lies in the interior of the $\Pi[j,i]$-$\partial[j,i]$ area. By \Cref{clm: tight cut}, there is a repelling set $\mset$ of pairs of terminals in $\set{i,i+1,\ldots,j}$ and a chain $A$ from $x$ to $\partial[i,j]$, such that $\mset$ cross $A$ and $|\mset|=|A|$. Since the $\Pi[i,j]$-$\partial[i,j]$ area and the $\Pi[j,i]$-$\partial[j,i]$ area intersect in their interiors, chain $A$ must contain internally a vertex of path $\Pi[j,i]$, which we denote by $x'$.
By \Cref{clm: tight cut}, there is a repelling set $\mset'$ of pairs of terminals $\set{j,j+1,\ldots,i}$ and a chain $A'$ from $x'$ to $\partial[j,i]$, such that $\mset'$ cross $A'$ and $|\mset'|=|A'|$. See \Cref{fig: violating_cut} for an illustration.

Consider now a chain $A^*$ defined as the concatenation of chain $A'$ and the subchain of $A$ from $x'$ to $\partial[i,j]$ (glued at $x'$). Since $x\ne x'$, $|A^*|\le |A|+|A'|-2$. On the other hand, consider a set $\mset^*$ defined as follows: let $(\bar i,\bar j)$ be the pair in $\mset$ that is closest to $(i,j)$ (since all pairs in $\mset$ cross $A$, such a pair is well-defined), and we let $\mset^*=\big(\mset\setminus \set{(\bar i, \bar j)}\big)\cup \mset'$. It is easy to verify that the pairs in $\mset^*$ repel each other, and they all cross $A^*$. Moreover, 
\[
|\mset^*|= |\mset|+|\mset'|-1= |A|+|A'|-1 > |A|+|A'|-2 = |A^*|. 
\]
This completes the proof of the claim.
\end{proof}

\begin{figure}[h]
\centering
\includegraphics[width=0.45\linewidth]{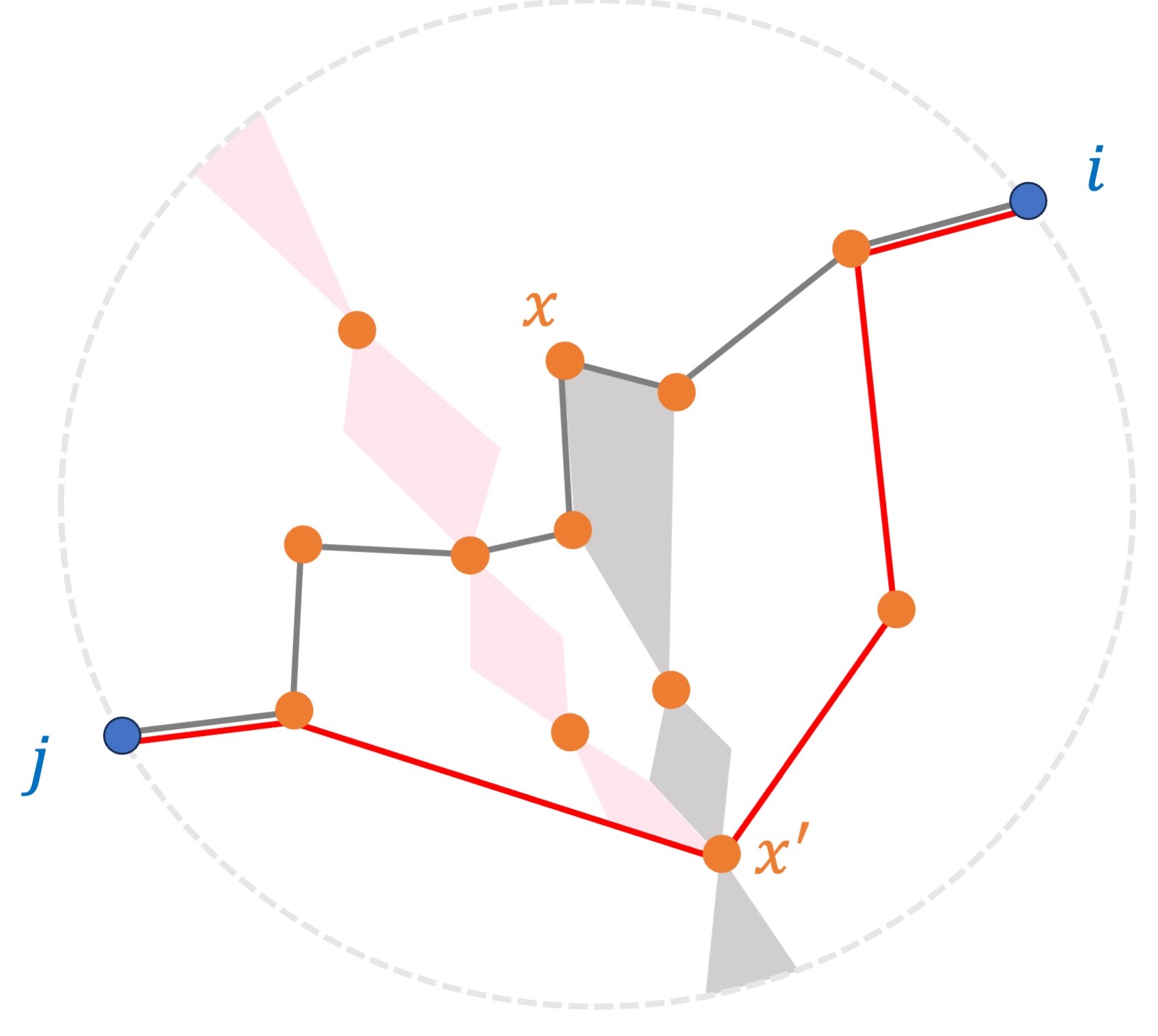}
\caption{Path $\Pi[i,j]$ (black), chain $A$ (gray), path $\Pi[j,i]$ (red) and chain $A'$ (pink).\label{fig: violating_cut}}
\end{figure}

In the case where there is no bad pair, we construct a good shortest path structure as follows. If $n$ is odd, then for every pair $i,j$ of terminals, one of the two segments $\partial[i,j],\partial[j,i]$ contains strictly fewer terminals than the other, say $\partial[i,j]$, and we let $P_{i,j}=\Pi[i,j]$.
If $n$ is even, then we arbitrarily pick a non-terminal point $x$ lying on the boundary (so $x\in X\setminus T$). So for each pair $i,j$, exactly one of the two segments $\partial[i,j],\partial[j,i]$ contains $x$, say $\partial[i,j]$, and we let $P_{i,j}=\Pi[i,j]$.
It remains to show that the constructed shortest path structure is good. 

On the one hand, we show that for all repelling pairs $(i,j), (i',j')$, the paths $P_{i,j}$ and $P_{i',j'}$ are vertex-disjoint. 
We only need to consider the case where $P_{i,j}=\Pi[i,j]$, $P_{i',j'}=\Pi[i',j']$, and terminals  $i',j'$ both lie on $\partial[i,j]$. 
We prove by induction on $|i-i'|+|j-j'|$. The base case is when $|i-i'|=|j-j'|=1$, that is, $i'=i+1$ and $j'=j-1$. Since $(i,j)$ repels $(i',j')$, from our construction, $\Pi[i,j]$ is vertex-disjoint from $\Pi[i+1,j-1]$.
Consider now a general pair $i',j'$.
Assume first that $(i,j)$ repels $(i+1,j-1)$. By our construction, $\Pi[i,j]$ is vertex-disjoint from $\Pi[i+1,j-1]$, and since $\Pi[i',j']$ lies in the area enclosed by $\Pi[i+1,j-1]$ and $\partial[i+1,j-1]$, $\Pi[i,j]$ is vertex-disjoint from $\Pi[i',j']$. Assume next that $(i,j)$ does not repel $(i+1,j-1)$. From \Cref{lem: W-repel}, since $(i,j)$ repels $(i',j')$, either $(i+1,j)$ repels $(i',j')$, in which case we can derive that $\Pi[i,j],\Pi[i',j']$ are vertex-disjoint from the induction hypothesis that $\Pi[i+1,j],\Pi[i',j']$ are disjoint; 
or $(i,j)$ repels $(i+1,j')$, in which case we can derive that 
$\Pi[i,j],\Pi[i',j']$ are vertex-disjoint from the induction hypothesis that $\Pi[i,j],\Pi[i+1,j']$ are disjoint and the fact that $\Pi[i',j']$ lies in the area enclosed by $\Pi[i+1,j']$ and $\partial[i+1,j']$.

On the other hand, we show that the intersection between every pair $P_{i,j}$ and $P_{i',j'}$ of paths is either empty or a subpath of both. We first assume that $P_{i,j}=\Pi[i,j]$, $P_{i',j'}=\Pi[i',j']$, and $i',j'$ both lie on $\partial[i,j]$. Under this assumption, it suffices to consider the case where $i=i'$ and $j'=j-1$ and show that the intersection between $\Pi[i,j]$ and $\Pi[i,j-1]$ is a path.

First we prove the following observation.
\begin{observation}
\label{obs: subpath}
For every $i<j$,
$\Pi[i,j]\cap \Pi[i,j-1]$
is a starting subpath of both $\Pi[i,j]$ and $\Pi[i,j-1]$, and symmetrically
$\Pi[i,j]\cap \Pi[i+1,j]$
is an ending subpath of both.    
\end{observation}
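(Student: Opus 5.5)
We argue by induction on $j-i$ (the length of the clockwise interval), proving both halves of the observation simultaneously; by the clockwise/counterclockwise symmetry of the construction it suffices to explain the starting-subpath statement $\Pi[i,j]\cap\Pi[i,j-1]$. The underlying geometric fact we use throughout is that, for $i\le i'$ and $j'\le j$ with $\{i',j'\}\subseteq\partial[i,j]$, the path $\Pi[i',j']$ lies in $G[i,j]$. This nesting holds by construction: in Case 1, $\Pi[i,j]$ is drawn outside $\Pi[i+1,j-1]$. In Case 2, $\Pi[i,j]$ is spliced from pieces of $\Pi[i,j-1]$ and $\Pi[i+1,j]$, and the spliced path cannot dip below either piece. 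We establish this nesting first, within the same induction. It follows that $\Pi[i,j-1]$ is enclosed by $\Pi[i,j]$ together with $\partial[i,j]$.

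For the base case $j=i+1$, $\Pi[i,i]$ is the trivial path at $i$, so the claim is immediate. For the inductive step we split according to how $\Pi[i,j]$ was built.

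\textbf{Case 2} ($(i,j)$ does not repel $(i+1,j-1)$). Here $\Pi[i,j]$ is the prefix of $\Pi[i,j-1]$ up to $p$, followed by the suffix of $\Pi[i+1,j]$ from $p$. We must check that the suffix of $\Pi[i+1,j]$ after $p$ meets $\Pi[i,j-1]$ nowhere beyond $p$. To guarantee this, we will choose $p$ as the last vertex of $\Pi[i+1,j]$ (traversed toward $j$) that lies on $\Pi[i,j-1]$. The construction allows any intersection vertex, so this specification is legitimate, and I would state it explicitly. With this choice, $\Pi[i,j]\cap\Pi[i,j-1]$ is the prefix of $\Pi[i,j-1]$ up to $p$, which is a starting subpath of both paths. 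For the ending-subpath statement we choose $p$ symmetrically, so that $\Pi[i,j]\cap\Pi[i+1,j]$ is the suffix from $p$. The two requirements conflict only if the chosen vertices differ. We reconcile them by using planarity: both $\Pi[i,j-1]$ and $\Pi[i+1,j]$ run between the crossing terminal pairs $(i,j-1)$ and $(i+1,j)$, and we can take their intersection to be a single common subpath by applying induction to the relevant subpaths. Then $p$ can be any vertex of that common subpath, and both statements hold at once.

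\textbf{Case 1} ($(i,j)$ repels $(i+1,j-1)$). Here $\Pi[i,j]$ is the outer side of the strip of faces incident to $\Pi[i+1,j-1]$. The path $\Pi[i,j-1]$ starts at $i$, lies inside the region of $\Pi[i,j]$, and encloses $\Pi[i+1,j-1]$ except near its end. Moving from $i$ toward $j-1$, it must travel through that strip. I would show that the two paths coincide from $i$ up to some vertex, where $\Pi[i,j-1]$ enters the strip interior or attaches to $\Pi[i+1,j-1]$. After that point, $\Pi[i,j-1]$ stays strictly inside the strip side and never returns to the outer walk, since a return would enclose a face of the strip whose vertices are all cofacial with $\Pi[i+1,j-1]$, contradicting how $\Pi[i,j-1]$ itself was constructed (induction via Claim~\ref{clm: tight cut}-style minimality). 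This Case 1 argument, which excludes re-touching, is the main obstacle. I would first try to handle it by a Jordan-curve argument that uses the cycle-removal step in the definition of the strip walk.
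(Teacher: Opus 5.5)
Your skeleton matches the paper's: a simultaneous induction on both halves, a split into the strip case and the splice case, and a lens argument against re-entry. However, neither case is actually closed.

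\textbf{Case 1.} You leave the no-return step open. The contradiction you point to (the construction of $\Pi[i,j-1]$, minimality in the style of the tight-cut claim) is not a mechanism: that claim is about chains and repelling sets and says nothing about where $\Pi[i,j-1]$ runs. The missing step is as follows.

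Suppose $\Pi[i,j-1]$ leaves $\Pi[i,j]$ at $u$ and next meets it at $v$. Granting your nesting $\Pi[i,j-1]\subseteq G[i,j]$, the disc bounded by $\Pi[i,j-1][u,v]$ and $\Pi[i,j][u,v]$ lies on the inner side of $\Pi[i,j]$. So the face just inside an edge of $\Pi[i,j][u,v]$ is a strip face contained in that disc, and it is therefore incident to a vertex of $\Pi[i+1,j-1]$.

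In Case 1, $\Pi[i,j]$ is disjoint from $\Pi[i+1,j-1]$, and $\Pi[i+1,j-1]$ starts at $i+1$, outside the disc's interior. Hence $\Pi[i+1,j-1]$ meets $\Pi[i,j-1]$ at some vertex $w$ strictly between $u$ and $v$. Now use the \emph{second} half of the observation for the shorter interval: $\Pi[i,j-1]\cap\Pi[i+1,j-1]$ is an ending subpath of $\Pi[i,j-1]$. So $\Pi[i,j-1][w,j-1]\subseteq\Pi[i+1,j-1]$, which gives $v\in\Pi[i+1,j-1]\cap\Pi[i,j]=\emptyset$, a contradiction.

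This is exactly where proving both halves together is needed. The paper phrases the same step as an outermost-choice argument on the lens.

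\textbf{Case 2.} You rightly notice that taking $p$ to be the last vertex of $\Pi[i+1,j]$ on $\Pi[i,j-1]$ secures the first half but can break the second. The paper's ``symmetrically'' hides the same tension.

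Your reconciliation, however, assumes ``by induction'' that $\Pi[i,j-1]\cap\Pi[i+1,j]$ is a single common subpath, and that is not among your hypotheses. They only cover pairs $\Pi[a,b],\Pi[a,b-1]$ and $\Pi[a,b],\Pi[a+1,b]$ that share an endpoint, whereas $(i,j-1)$ and $(i+1,j)$ are a crossing pair. In the paper, the crossing-pair intersection property is proved only after this observation and by means of it. Invoking it here is therefore circular, unless you add a crossing-pair statement for shorter intervals to the induction and prove it at every level.

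This is not cosmetic. If the two paths met at three separated crossing vertices, no choice of $p$ would yield both halves.

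Your Case 2 nesting claim (``the spliced path cannot dip below either piece'') has the same dependence. Whether $\Pi[i+1,j][i+1,p]$ stays inside the region bounded by $\Pi[i,j]$ and $\partial[i,j]$ depends on how it meets $\Pi[i,j-1][i,p]$.
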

\begin{proof}
We only prove the first half, and the second half can be proved symmetrically.

If the path 
$\Pi[i,j]$ was constructed in Case~2, from the algorithm,
$\Pi[i,j]=\Pi[i,j-1][i,p]\cup \Pi[i+1,j][p,j]$
where $\Pi[i+1,j][p,j]$ is internally disjoint from $\Pi[i,j-1]$.
Hence
$\Pi[i,j]\cap \Pi[i,j-1]=\Pi[i,j-1][i,p]$,
which is a starting subpath.

It remains to consider Case~1.  Then $\Pi[i,j]$ is the outer side of
the strip supported by $\Pi[i+1,j-1]$.  The path
$\Pi[i,j-1]$ starts at the endpoint $i$ of this outer side.  Suppose
that, after leaving $\Pi[i,j]$, the path $\Pi[i,j-1]$ meets
$\Pi[i,j]$ again.  Let $u,v$ be two consecutive common vertices with
this property, ordered along $\Pi[i,j-1]$.  Then the two internally
disjoint subpaths
$\Pi[i,j-1][u,v]$ and $\Pi[i,j][u,v]$
bound a disc whose interior contains no boundary point of $X$.  This
disc lies on the outer side of the strip.  Replacing the segment
$\Pi[i,j][u,v]$ of the outer boundary of the strip by
$\Pi[i,j-1][u,v]$ would therefore give a boundary point of the same
union of regions lying strictly outside the chosen outer side, a
contradiction to the definition of $\Pi[i,j]$ as the outer side of the
full strip.  Thus $\Pi[i,j-1]$ cannot leave and later re-enter
$\Pi[i,j]$, and so $\Pi[i,j]\cap \Pi[i,j-1]$ is an initial subpath.
\end{proof}

We now prove the stated intersection property for nested boundary
intervals.  Suppose that
$i\le i'\le j'\le j$
in the clockwise order on $\partial[i,j]$.  We prove by induction on
$(i'-i)+(j-j')$
that $\Pi[i,j]\cap \Pi[i',j']$ is empty or a subpath of both.  The base
case $(i',j')=(i,j)$ is trivial.  If $j'<j$, then by construction
$\Pi[i',j']$ is contained in the closed region bounded by
$\Pi[i,j-1]\cup \partial[i,j-1]$. By \Cref{obs: subpath}, the part of
$\Pi[i,j]$ outside this region is disjoint from $\Pi[i,j-1]$, and
hence is disjoint from $\Pi[i',j']$.  Therefore
\[
    \Pi[i,j]\cap \Pi[i',j']
    =
    \bigl(\Pi[i,j]\cap \Pi[i,j-1]\bigr)\cap \Pi[i',j'] .
\]
The first term on the RHS is a subpath of $\Pi[i,j-1]$,
and by the induction hypothesis
$\Pi[i,j-1]\cap \Pi[i',j']$ is a subpath of $\Pi[i,j-1]$.  The
intersection of two subpaths of a path is again empty or a subpath.
Hence $\Pi[i,j]\cap \Pi[i',j']$ is empty or a subpath.  The remaining
case $j'=j$ and $i'>i$ is identical using \Cref{obs: subpath}.

Consider now the remaining case where the terminals $i,i',j,j'$ appear on the boundary in this order.
By planarity, the two paths must meet.  Let $u$ and $v$ be the first
and last common vertices encountered along $\Pi[i,j]$.  If the common
vertices did not form the subpath $\Pi[i,j][u,v]$, then there exists a pair of
consecutive common vertices that, together with the corresponding
subpaths of $\Pi[i,j]$ and $\Pi[i',j']$ between them, surrounds an area whose interior contains
no boundary point of $X$.  From \Cref{obs: subpath}, this contradicts the outermost choice in
the construction of the relevant $\Pi$-path.  Therefore
$\Pi[i,j]\cap \Pi[i',j']$
is exactly the common subpath from $u$ to $v$.

Thus the intersection between any two paths in the constructed family is
either empty or a subpath of both paths.  Together with the vertex-disjointness
for repelling pairs proved above, this shows that the constructed family
$\mathcal P$ is a $D$-good shortest path structure.  This completes
the proof of \Cref{thm:repelling-paths}.
\subsection{Computing the Minimum Realization}
\label{sec: min_realization}

In this section, we complete the proof of Theorem~\ref{thm: minimum main}.

\subsubsection{Medial graphs and Y-Delta transformations}

Let $G$ be a plane graph. We call $G$ the \emph{primal graph}, and its \emph{medial graph} $M(G)$ is obtained by placing a vertex on each edge of $G$ and joining two such vertices whenever the corresponding primal edges occur consecutively around a face. See \Cref{fig: medial} for an example. In a disc-embedded OS instance, the medial graph can equivalently be viewed as a collection of chords in the disc. Intersections of chords correspond exactly to edges of $G$; hence
\[
   |E(G)|=\text{number of chords intersections in }M(G).
\]

The medial graph is determined by two kinds of data.  The first is the pairing of its boundary endpoints; this is the \emph{template}.  The second is the intersection pattern of curves with that boundary pairing.  We formalize the template as follows.

\begin{figure}[h]
\centering
\subfigure
{\scalebox{0.4}{\includegraphics{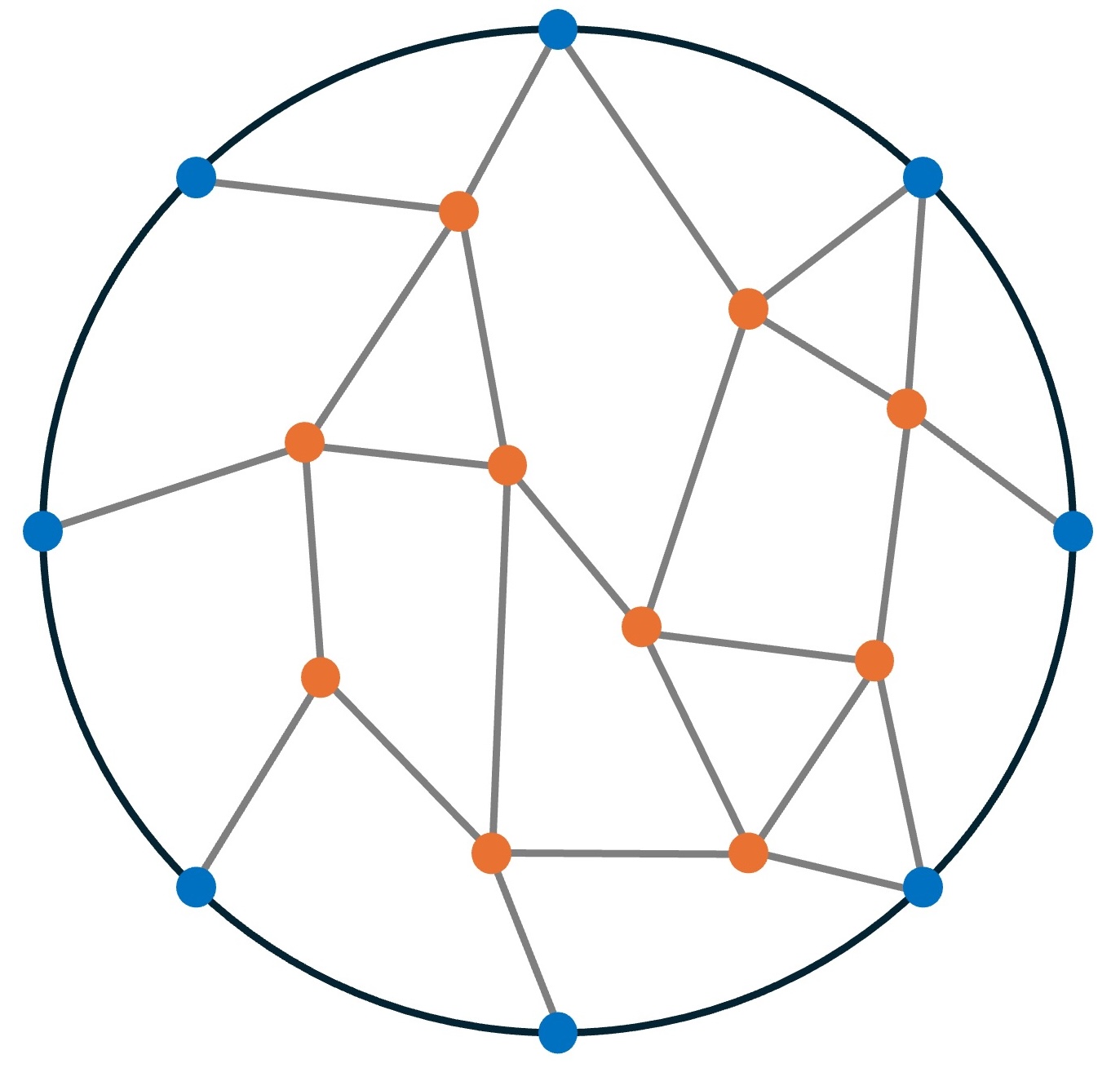}}}
\hspace{2.0cm}
\subfigure
{\scalebox{0.4}{\includegraphics{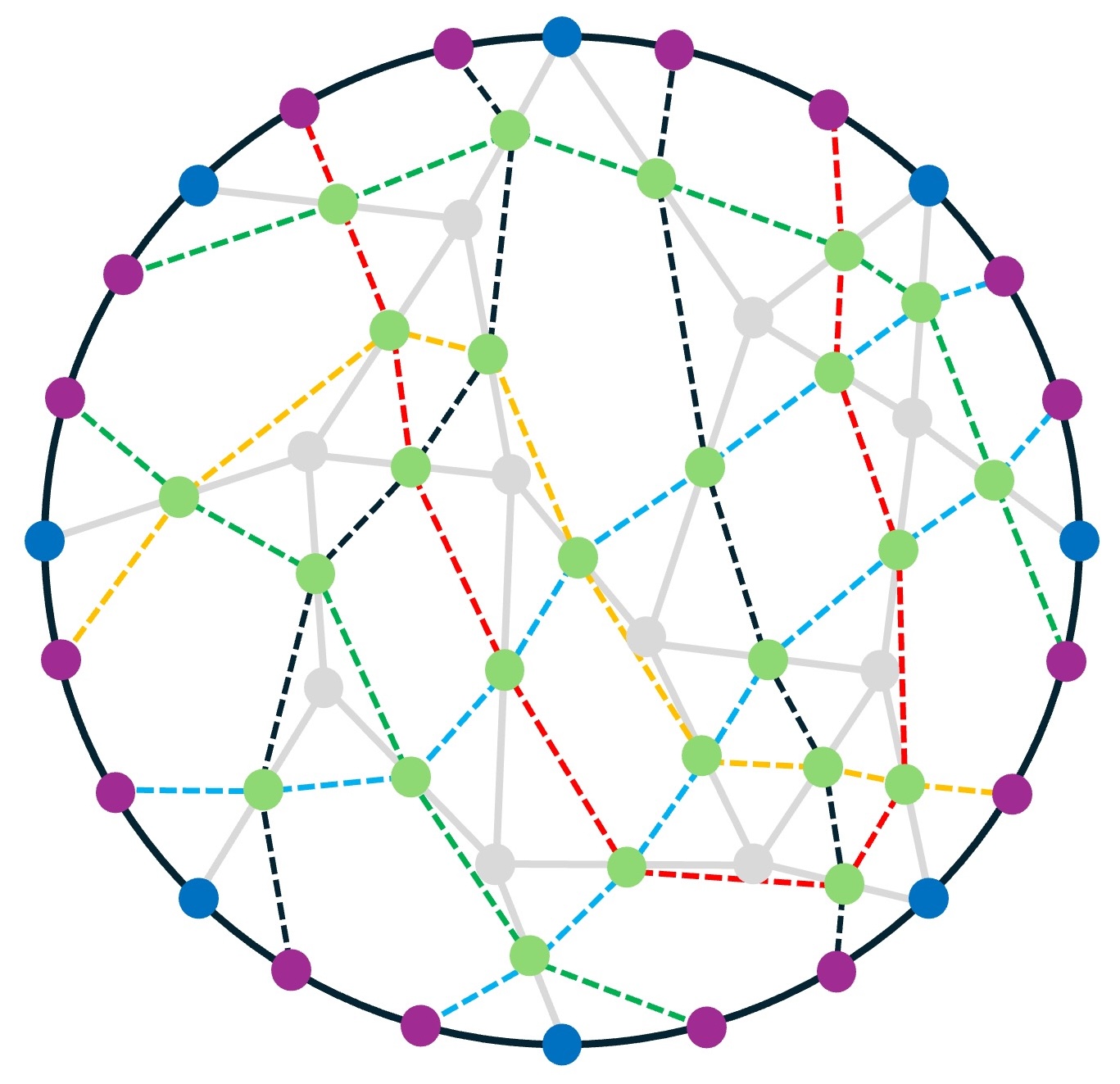}}}
\caption{Left: the primal graph. Terminals are shown in blue (the other reference points in $X$ are not shown). Right: the medial graph. Vertices of $Z$ are shown in purple, and medial chords are shown in dashed lines.}\label{fig: medial}
\end{figure}

Let $X=\{x_1,\ldots,x_m\}$, $m=2|T|$, be the boundary reference points from Section~\ref{sec:prelim}, in clockwise order.  Let
$Z=\{z_1,\ldots,z_m\}$
be another set of boundary points so that for each $i$, $z_i$ lies between $x_{i-1}$ and $x_i$.  Thus points of $X$ and $Z$ alternate around the boundary.  A \emph{template} is a perfect matching $\Phi$ on $Z$, drawn as chords in the disc.  For $x,y\in X$, let $\dist_\Phi(x,y)$ be the number of chords of $\Phi$ crossing the $x$-$y$ boundary cut.  Let $\cro(\Phi)$ be the number of crossing pairs of chords.

An \emph{arrangement} of $\Phi$ is a set of simple curves realizing the matched
pairs, with no self-intersections, no triple crossings, and with any two curves crossing at most once.
In an arrangement, two chords cross if and only if the corresponding chords cross, so every
arrangement of $\Phi$ has exactly $\cro(\Phi)$ intersections.

\paragraph{The Y-Delta transformation.}
The \(Y\)--\(\Delta\) transformation is the local replacement of a
degree-three non-terminal \(v\), adjacent to three vertices
\(a,b,c\), by a triangle on \(a,b,c\), or conversely (which we may call the Delta-Y transformation).  If the three
edges of the \(Y\) have lengths \(\alpha,\beta,\gamma\), then the
distance-preserving triangle has side lengths
\[
    \ell_{ab}=\alpha+\beta,\qquad
    \ell_{bc}=\beta+\gamma,\qquad
    \ell_{ca}=\gamma+\alpha .
\]
Conversely, a triangle with side lengths
\(\ell_{ab},\ell_{bc},\ell_{ca}\) can be replaced by a \(Y\) whenever
the values
\[
    \alpha=\frac{\ell_{ab}+\ell_{ca}-\ell_{bc}}{2},\qquad
    \beta =\frac{\ell_{ab}+\ell_{bc}-\ell_{ca}}{2},\qquad
    \gamma=\frac{\ell_{bc}+\ell_{ca}-\ell_{ab}}{2}
\]
are nonnegative.  In the medial picture, this move has a particularly
concrete interpretation: the three primal edges incident to the center
of the \(Y\), and the three primal edges of the corresponding triangle,
give the same three local medial crossings arranged around a small
triangular cell.  The move changes only whether that local cell is read
as a primal degree-three vertex or as a primal triangular face; the
boundary pairing of the medial chords is unchanged.


\subsubsection{Repelling numbers}

For $x,y\in X$, define
\[
   a_{x,y}=\max\{|\mset|:\mset\text{ is a repelling set and every pair in }\mset\text{ crosses the }x\text{-}y\text{ cut}\}.
\]
By \Cref{thm:repelling-paths}, the minimum $x$-$y$ chain has length at least $a_{x,y}$. The corresponding number of medial chords crossing has a small endpoint correction.  Define
\[
 b_{x,y}= 
 \begin{cases}
  2a_{x,y}-2, & x,y\in T,\\
  2a_{x,y}-1, & |\{x,y\}\cap T|=1,\\
  2a_{x,y},   & x,y\notin T.
 \end{cases}
\]
Indeed, if a chain $A$ has boundary endpoints $x,y$, then a regular neighborhood of $A$ is crossed by $2|A|$, $2|A|-1$, or $2|A|-2$ medial chords according as neither, one, or both endpoints are terminals.  Thus, the template $\Phi$ of any OS instance that realizes $D$ must satisfy that
\begin{equation}\label{eq:template-feasible}
   \dist_\Phi(x,y)\ge b_{x,y}\qquad \text{for all }x,y\in X.
\end{equation}

\begin{observation}\label{obs:neighbor}
Let $x,x'$ be consecutive points of $X$ with $x\in T$, and let $y\in X$.  Then
\[
   a_{x',y}\le a_{x,y}\le a_{x',y}+1.
\]
\end{observation}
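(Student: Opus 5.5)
The plan is to compare, pair by pair, which terminal pairs cross the $x$-$y$ cut and which cross the $x'$-$y$ cut, and then move repelling sets between the two cuts. The point is that the two cuts differ only at the single terminal $x$. Since $x$ and $x'$ are consecutive in $X$, no point of $X$ lies strictly between them on the boundary. So the two boundary arcs determined by $\{x',y\}$ and by $\{x,y\}$ contain exactly the same terminals, with one exception: the terminal $x$ itself, which is an endpoint of the $x$-$y$ cut and lies on one arc of the $x'$-$y$ cut. The status of $y$ is the same for both cuts.

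\emph{Lower bound $a_{x',y}\le a_{x,y}$.} Take a pair $(s,t)$ that crosses the $x'$-$y$ cut. If neither endpoint is $x$, then its endpoints sit on the same arcs (or coincide with $y$) for both cuts, so it crosses the $x$-$y$ cut. If an endpoint equals $x$, it crosses the $x$-$y$ cut by the convention that a pair crosses whenever an endpoint coincides with a cut endpoint. Hence every pair crossing the $x'$-$y$ cut also crosses the $x$-$y$ cut. A maximum repelling set witnessing $a_{x',y}$ therefore also witnesses $a_{x,y}\ge a_{x',y}$.

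\emph{Upper bound $a_{x,y}\le a_{x',y}+1$.} Let $\mset$ be a maximum repelling set crossing the $x$-$y$ cut. By the same case analysis, a pair of $\mset$ that fails to cross the $x'$-$y$ cut must have $x$ as an endpoint. The key step is to show that a repelling set contains at most one pair incident to $x$. Take $(x,u)$ and $(x,v)$, allowing $u$ or $v$ to equal $x$. They repel iff
\[
D(x,u)+D(x,v)<\max\{D(x,x)+D(u,v),\,D(x,v)+D(u,x)\}.
\]
The right-hand side equals $\max\{D(u,v),\,D(x,u)+D(x,v)\}$, which is exactly $D(x,u)+D(x,v)$ by the triangle inequality, so the strict inequality fails. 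Hence at most one pair of $\mset$ is incident to $x$. Removing it, if present, leaves a repelling set of size at least $|\mset|-1$, all of whose pairs cross the $x'$-$y$ cut. This gives $a_{x',y}\ge a_{x,y}-1$.

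The only delicate point is the bookkeeping of the crossing convention at endpoints: pairs containing $x$ or $y$, and the degenerate case $y=x$ or $y=x'$. I expect this to be routine once one notes that only membership of $x$ changes between the two cuts.
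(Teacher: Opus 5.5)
Your proof is correct and follows the paper's argument. Monotonicity holds because every pair crossing the $x'$-$y$ cut also crosses the $x$-$y$ cut, and the $+1$ bound holds because only pairs containing $x$ can be lost, while a repelling set contains at most one such pair. Your triangle-inequality check that $(x,u)$ and $(x,v)$ never repel is a more explicit justification of that last step than the paper's brief remark about endpoint blocks.
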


\begin{proof}
Every pair crossing the $x'$-$y$ cut also crosses the $x$-$y$ cut.  Conversely, if a pair crosses the $x$-$y$ cut but not the $x'$-$y$ cut, then it must contain the terminal $x$.  In a repelling set, at most one pair can contain $x$: two such pairs fail to repel because their endpoint blocks cannot be separated by a cyclic interval split.  Hence the maximum can increase by at most one.
\end{proof}

\begin{lemma}\label{lem:a-monge}
For consecutive pairs $x,x'$ and $y,y'$ in $X$,
\begin{equation}\label{eq:a-monge}
   a_{x,y}+a_{x',y'}\ge a_{x,y'}+a_{x',y}.
\end{equation}
\end{lemma}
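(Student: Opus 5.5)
The plan is to prove \eqref{eq:a-monge} by a combinatorial exchange argument on repelling sets, in the spirit of the uncrossing arguments of Section~\ref{sec: repelling paths}. Fix consecutive $x,x'$ and $y,y'$ in $X$, oriented so that the cuts $x$-$y'$ and $x'$-$y$ are the ``crossing'' cuts. Take a maximum repelling set $\mset_1$ for the $x$-$y'$ cut and a maximum repelling set $\mset_2$ for the $x'$-$y$ cut. The goal is to redistribute the pairs of $\mset_1\cup\mset_2$ (as a multiset) into a repelling set $\mset_3$ crossing the $x$-$y$ cut and a repelling set $\mset_4$ crossing the $x'$-$y'$ cut with $|\mset_3|+|\mset_4|\ge|\mset_1|+|\mset_2|$.

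First, I will use the structure of a repelling set crossing a fixed cut. By the Kalmanson condition, crossing pairs never repel. Hence the pairs in a repelling set crossing a cut are pairwise non-crossing, and they are linearly ordered from one endpoint of the cut to the other. This is a nested family $(u_1,v_1),(u_2,v_2),\dots$, with the $u$'s on one arc and the $v$'s on the other, each pair lying ``outside'' the previous one.

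Second, I will compare the two families. A pair crossing $x$-$y'$ but not $x'$-$y$ must have an endpoint in the tiny arc near $x$ or near $y'$. If $x\notin T$ that arc contains no terminal; if $x\in T$ it contains only $x$, and by Observation~\ref{obs:neighbor} at most one pair uses $x$. Thus both families consist mostly of pairs crossing all four cuts, plus at most one exceptional pair at each end. Pairs crossing all four cuts can go to either side. So the real content is to handle the end pairs: a pair of $\mset_1$ containing $x$ and a pair of $\mset_2$ containing $x'$ (both terminals cannot occur, since consecutive points of $X$ alternate between $T$ and $X\setminus T$), and similarly at $y,y'$.

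Third, I will carry out the exchange. I split each nested chain at a suitable ``meeting level'' and take the outer part of $\mset_1$ with the inner part of $\mset_2$, and vice versa. This is the standard cut-and-paste of two chains, as in the proof of the uncrossing lemma. Pairwise repulsion across the spliced parts would follow from transitivity-type statements such as Observation~\ref{lem: W-repel}, Observation~\ref{obs:add} and Observation~\ref{obs:trans}. These let me infer that a pair from $\mset_1$ repels a pair from $\mset_2$ whenever both repel a common intermediate pair in the nesting.

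I expect the main obstacle to be this splice step: showing that after the exchange the mixed sets are still repelling, since repulsion is not transitive in general. As a fallback, I will first reduce to the case of exactly one exceptional pair by applying Observation~\ref{obs:neighbor} at each of the two ends separately. The inequality then becomes a case check that $a_{x,y}-a_{x',y}$ is monotone in $y$. Concretely, I will show that if adding the pair at terminal $x$ is possible for the cut ending at $y'$, it is possible for the cut ending at $y$. I would do this by replacing the innermost conflicting pair using Observation~\ref{obs:add}.
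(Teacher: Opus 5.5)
Your strategy of splicing a maximum repelling set $\mset_1$ for the $x$-$y'$ cut with a maximum repelling set $\mset_2$ for the $x'$-$y$ cut is the paper's strategy, and your preliminaries are fine. A repelling set crossing a cut is a nested chain. Only pairs at $x$ or $y'$ (resp.\ $x'$ or $y$) can fail to cross all four cuts. In a nested chain, repulsion of consecutive pairs already gives pairwise repulsion by Observation~\ref{obs:add}.

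The gap is that you never say where to splice, or why the two pairs that become adjacent at the junction repel; that is the entire content of the lemma. Your transitivity tool cannot supply this. The junction pairs come from different chains, no pair of either chain lies between them, and a pair of $\mset_1$ may even cross a pair of $\mset_2$ (crossing pairs never repel). Without further constraints on $\mset_1,\mset_2$ you have no handle on the junction.

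The fallback is circular. Monotonicity of $a_{x,y}-a_{x',y}$ in $y$ is just \eqref{eq:a-monge} rearranged. The hypothesis $a_{x,y'}>a_{x',y'}$ only says that every maximum set for $x$-$y'$ uses $x$; it does not say that a maximum set for another cut extends by a pair at $x$. ``Replacing the innermost conflicting pair'' is not specified. Nor can you reduce to a single exceptional pair when $x,y'\in T$. In the critical case $a_{x,y'}=a_{x',y}+1=k+1$, every maximum set for $x$-$y'$ contains a pair at $x$ not crossing $x'$-$y'$ and a pair at $y'$ not crossing $x$-$y$. Otherwise it would cross $x'$-$y'$ or $x$-$y$, and \eqref{eq:a-monge} would follow at once.

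The missing idea is an extremal choice that locates the junction. The paper first reduces via Observation~\ref{obs:neighbor} to the critical case. When $x,y'\in T$, $\mset_2$ then already crosses all four cuts, so the paper needs only \emph{one} spliced set of size $k+1$, crossing $x$-$y$ or $x'$-$y'$. Your two-sided redistribution would instead force you to control two junctions. The paper proceeds as follows:
\begin{itemize}
\item Among maximum sets, it takes $\mset_1,\mset_2$ lexicographically closest to $x,x'$.
\item It interleaves them as $(s_1,t_1),(u_1,v_1),(s_2,t_2),\ldots,(u_k,v_k)$ and splices at the first pair that overtakes its predecessor in this list. Such a pair exists because $v_k=y'$.
\item If this pair is $(s_i,t_i)$, extremality gives that $(u_{i-1},t_i)$ does not repel $(s_{i-1},t_{i-1})$, since otherwise it could replace $(s_i,t_i)$.
\item Observation~\ref{lem: W-repel}, applied to the repelling pairs $(s_i,t_i)$ and $(s_{i-1},t_{i-1})$ with $u_{i-1}$ lying between $s_i$ and $s_{i-1}$, then forces $(s_i,t_i)$ to repel $(u_{i-1},t_{i-1})$, and hence $(u_{i-1},v_{i-1})$.
\end{itemize}
The case where the first overtaking pair is $(u_i,v_i)$ is symmetric, and the other parity case is handled by the same exchange.
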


\begin{proof}
We distinguish between the following two cases up to symmetry.

First assume that $x$ and $y'$ are terminals and $x'$ and $y$ are non-terminals. By Observation~\ref{obs:neighbor}, all four quantities in \eqref{eq:a-monge} differ from each other by at most $1$ in each adjacent step. The inequality is immediate unless $a_{x,y'}=a_{x',y}+1$.
Denote $k=a_{x',y}$. Choose a maximum repelling set $\mset_1$ of size $k+1$ crossing the $x$-$y'$ cut and a maximum repelling set $\mset_2$ of size $k$ crossing the $x'$-$y$ cut.  Subject to maximum cardinality, choose them lexicographically closest to the boundary points $x$ and $x'$.

If $\mset_1$ contains no pair using $x$, or if the other endpoint of the pair in $\mset_1$ using $x$ lies on the boundary segment from $x'$ to $y$, then all pairs of $\mset_1$ also cross the $x'$-$y'$ cut, giving $a_{x',y'}\ge k+1$ and hence \eqref{eq:a-monge}.  The same argument applies to $y'$.  Therefore, we may assume that $\mset_1$ has a pair using $x$ and a pair using $y'$ with the other endpoints lying on the segment from $x$ to $y'$.
Order the pairs of $\mset_1$ and $\mset_2$ from left to right, with respect to the boundary ordering cut between $x$ and $x'$.  Write
\[
   \mset_1=\{(u_0,v_0),\ldots,(u_k,v_k)\},\qquad
   \mset_2=\{(s_1,t_1),\ldots,(s_k,t_k)\},
\]
where $v_0=x$, $v_k=y'$, the $u_i$ lie on the segment from $x$ to $y'$, and the $v_i$ lie on the opposite segment from $x'$ to $y$.  The lexicographic choice gives $s_1=x$.

We say that a pair $(u_i,v_i)$ is \emph{irregular} if $v_i$ is farther from $x$ than $t_i$, and call a pair $(s_i,t_i)$ irregular if $s_i$ is farther from $x$ than $u_{i-1}$.  There is a first irregular pair in the alternating list
\[
   (s_1,t_1),(u_1,v_1),(s_2,t_2),\ldots,(u_k,v_k),
\]
for the last pair is irregular because $v_k=y'$.  If the first irregular pair is $(s_i,t_i)$, then the pair $(u_{i-1},t_i)$ cannot repel $(s_{i-1},t_{i-1})$; otherwise we could replace $(s_i,t_i)$ by a closer pair, contradicting the choice of $\mset_2$. \Cref{lem: W-repel} implies that $(s_i,t_i)$ repels $(u_{i-1},t_{i-1})$.  Since no earlier pair is irregular, $t_{i-1}$ is farther from $x$ than $v_{i-1}$, and \Cref{lem: W-repel} implies that $(s_i,t_i)$ repels $(u_{i-1},v_{i-1})$.  Hence
\[
   (u_0,v_0),\ldots,(u_{i-1},v_{i-1}),(s_i,t_i),\ldots,(s_k,t_k)
\]
forms a repelling set of size $k+1$ crossing the $x$-$y$ cut.  Thus $a_{x,y}\ge k+1$, proving \eqref{eq:a-monge}.

If the first irregular pair is $(u_i,v_i)$, the symmetric exchange shows that
\[
   (s_1,t_1),\ldots,(s_i,t_i),(u_i,v_i),\ldots,(u_k,v_k)
\]
is a repelling set of size $k+1$ crossing the $x'$-$y'$ cut.  Thus $a_{x',y'}\ge k+1$, again proving \eqref{eq:a-monge}.

It remains to consider the second parity case, where $x$ and $y$ are terminals and $x'$ and $y'$ are nonterminal points.  If $a_{x,y'}\ne a_{x',y}$, then Observation~\ref{obs:neighbor} immediately implies \eqref{eq:a-monge}.  The only critical case is $a_{x,y'}=a_{x',y}=k$.  The same exchange argument as above applies, with $\mset_1$ a size-$k$ repelling set crossing $x$-$y'$ and $\mset_2$ a size-$k$ repelling set crossing $x'$-$y$.  If the first irregular pair comes from $\mset_2$, the exchange produces a size-$(k+1)$ repelling set crossing $x$-$y$; if it comes from $\mset_1$, it produces a size-$k$ repelling set crossing $x'$-$y'$.  In both subcases \eqref{eq:a-monge} follows, using Observation~\ref{obs:neighbor} for the remaining adjacent term.
\end{proof}

\begin{corollary}\label{cor:b-monge}
For consecutive pairs $x,x'$ and $y,y'$ in $X$,
$b_{x,y}+b_{x',y'}\ge b_{x,y'}+b_{x',y}$.
\end{corollary}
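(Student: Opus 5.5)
We derive Corollary~\ref{cor:b-monge} from Lemma~\ref{lem:a-monge} by substituting the definition of $b$ and checking that the endpoint corrections cancel. Write $b_{x,y}=2a_{x,y}-\mathbf 1[x\in T]-\mathbf 1[y\in T]$; the case-by-case definition of $b_{x,y}$ is exactly this formula. Then
\[
b_{x,y}+b_{x',y'}-b_{x,y'}-b_{x',y}=2\bigl(a_{x,y}+a_{x',y'}-a_{x,y'}-a_{x',y}\bigr)+c,
\]
where $c$ collects the indicator terms.

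Each of $x,x',y,y'$ appears exactly once with a plus sign and once with a minus sign among the four terms. On the left, $x$ appears in $b_{x,y}$ (with $+$) and in $b_{x,y'}$ (with $-$). Hence the indicator $\mathbf 1[x\in T]$ enters with total coefficient $-1+1=0$, and the same holds for $x'$, $y$, and $y'$. Therefore $c=0$, and the difference equals twice the left side of \eqref{eq:a-monge} minus its right side, which is nonnegative by Lemma~\ref{lem:a-monge}.

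One degenerate situation needs a remark: a cut with coinciding endpoints, which can occur if, say, $y=x'$. Here we adopt the convention that a cut $x$-$x$ is crossed by no pairs and by no medial chords. We then check that Lemma~\ref{lem:a-monge} is still being applied within its stated hypotheses, namely consecutive pairs in $X$. Since the corollary is stated under exactly the same hypotheses as the lemma, no extra case analysis is required, and the statement is just a linear consequence of the lemma. I expect no real obstacle. The only care needed is confirming that the indicator form of $b$ matches the three cases in its definition, and that the cancellation is independent of which of the four points are terminals.
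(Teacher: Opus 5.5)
Your proof is correct and is the same as the paper's: writing $b_{x,y}=2a_{x,y}-\mathbf 1[x\in T]-\mathbf 1[y\in T]$, each of $x,x',y,y'$ occurs once with each sign, so the corrections cancel and the inequality is twice Lemma~\ref{lem:a-monge}. One small caveat is that your convention for a degenerate cut $x$-$x$ is unnecessary and disagrees with the paper's definition, under which a pair crosses the cut if an endpoint coincides with $x$, giving $a_{x,x}=1$ and $b_{x,x}=0$ for terminal $x$; this is harmless here because the reduction is purely formal, but your version would make the telescoping sum in Lemma~\ref{lem:unit-diff} come out to $4$ instead of $2$.
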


\begin{proof}
The endpoint correction in the definition of $b_{x,y}$ terms is additive in the two endpoints. Therefore, the correction cancels from on both sides, and the corollary follows from \Cref{lem:a-monge}.
\end{proof}

\begin{lemma}\label{lem:unit-diff}
Fix a pair $x,x'$ of consecutive points in $X$.  Among all consecutive pairs $y,y'\in X$ with $x,x',y,y'$ appearing on the boundary in this order,
\begin{itemize}
    \item there is a unique pair for which
$b_{x,y}+b_{x',y'}=b_{x,y'}+b_{x',y}+2$ holds; and
\item for every other such pair, the equality $b_{x,y}+b_{x',y'}=b_{x,y'}+b_{x',y}$ holds.
\end{itemize}
\end{lemma}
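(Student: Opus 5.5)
The plan is a telescoping argument over second differences, combined with parity and boundary values. Fix consecutive $x,x'$ and list the points of $X$ strictly between $x'$ and $x$ (going clockwise from $x'$) as $w_0=x', w_1,\ldots,w_N=x$. For each consecutive pair $(y,y')=(w_r,w_{r+1})$ define the second difference
\[
\delta_r = b_{x,w_r}+b_{x',w_{r+1}}-b_{x,w_{r+1}}-b_{x',w_r}.
\]
Corollary~\ref{cor:b-monge} gives $\delta_r\ge 0$ for every $r$. The statement therefore reduces to two claims: the sum of all $\delta_r$ equals $2$, and each $\delta_r$ is even. Together these force exactly one $\delta_r$ to equal $2$ and all others to vanish.

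\textbf{Telescoping.} Summing $\delta_r$ over $r$ telescopes to
\[
\bigl(b_{x,w_0}-b_{x',w_0}\bigr)-\bigl(b_{x,w_N}-b_{x',w_N}\bigr),
\]
that is, to $(b_{x,x'}-b_{x',x'})-(b_{x,x}-b_{x',x})$. The degenerate cuts need conventions consistent with the definition of $b$:
\begin{itemize}
\item A cut from a point to itself is crossed by no chord, so $b_{x,x}=b_{x',x'}=0$. I would check that this agrees with $2a-\mathbf 1-\mathbf 1$: for $x\in T$, $a_{x,x}=1$ via the trivial pair $(x,x)$, giving $2-2=0$.
\item For the adjacent cut $x$--$x'$, exactly one of $x,x'$ is a terminal. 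If $x\in T$, the only pairs crossing are those using $x$, and Observation~\ref{obs:neighbor} (at most one such pair in a repelling set) gives $a_{x,x'}=1$, hence $b_{x,x'}=b_{x',x}=2\cdot1-1=1$.
\end{itemize}
The telescoped sum is then $(1-0)-(0-1)=2$.

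\textbf{Parity.} The endpoint correction in $b$ is additive and cancels in $\delta_r$, as in Corollary~\ref{cor:b-monge}. Hence
\[
\delta_r=2\bigl(a_{x,w_r}+a_{x',w_{r+1}}-a_{x,w_{r+1}}-a_{x',w_r}\bigr),
\]
which is even. Combined with $\delta_r\ge0$ and $\sum_r\delta_r=2$, exactly one $r$ has $\delta_r=2$ and the rest have $\delta_r=0$.

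\textbf{Main obstacle.} The delicate step is pinning down the boundary terms, namely which consecutive pairs $(y,y')$ are allowed in "$x,x',y,y'$ in this order," and the conventions for $a$ and $b$ on degenerate cuts.
\begin{itemize}
\item If the pairs with $y=x'$ or $y'=x$ are excluded, the telescoping runs over $w_1,\ldots,w_{N-1}$ instead. Then I must compute $b_{x,w_1}-b_{x',w_1}$ and $b_{x,w_{N-1}}-b_{x',w_{N-1}}$ directly. With $w_1$ adjacent to $x'$, I would use Observation~\ref{obs:neighbor} together with the fact that a repelling set crossing $x'$--$w_1$ consists of the single pair through the terminal between them (if any). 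The boundary values should again give total $2$, but the casework depends on which of $x,x'$ is the terminal.
\item I would also need to justify that trivial pairs $(t,t)$ count in repelling sets, so that $a_{x,x}=1$ makes $b_{x,x}=0$. Otherwise I would set $b_{x,x}:=0$ as the convention matching $\dist_\Phi$.
\end{itemize}
I would first try the full range including the degenerate endpoints, since it gives the cleanest telescoping, and then verify that those extreme terms either belong to the statement or vanish.
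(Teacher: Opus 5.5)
Your proposal is correct and is essentially the paper's proof: each second difference is nonnegative by Corollary~\ref{cor:b-monge} and even because the endpoint corrections cancel. The sum over the range from $y=x'$ to $y'=x$ telescopes to $b_{x,x'}+b_{x',x}-b_{x,x}-b_{x',x'}=2$, so exactly one term is $2$ and the rest vanish. The paper uses this same range including the degenerate end pairs, so your first option is the right one. Your checks $b_{x,x}=b_{x',x'}=0$ and $b_{x,x'}=1$ spell out what the paper leaves implicit, and they need no trivial pairs, since any pair using the terminal endpoint already crosses the degenerate cut.
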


\begin{proof}
Since $x,x'$ are consecutive, $a_{x,x'}=1$ and $b_{x,x'}=1$.  For each consecutive pair $y,y'$, define
\[
   \Delta(y)=b_{x,y}+b_{x',y'}-b_{x,y'}-b_{x',y}.
\]
By Corollary~\ref{cor:b-monge}, $\Delta(y)\ge 0$.  Also $\Delta(y)$ is even since the endpoint correction cancels and the remaining expression is twice an integer.  Summing around the boundary telescopes:
\[
\sum_y \Delta(y)
 =\sum_y \bigl((b_{x,y}-b_{x',y})-(b_{x,y'}-b_{x',y'})\bigr)
 =b_{x,x'}+b_{x',x}-b_{x,x}-b_{x',x'}=2.
\]
Thus exactly one summand is $2$ and all others are $0$.
\end{proof}

\subsubsection{Medial template}
\label{sec: medial template}

In this subsection we prove the following lemma.

\begin{lemma}\label{lem:exact-template}
There is a unique template $\Phi$ such that
$\dist_{\Phi}(x,y)=b_{x,y}$ for all $x,y\in X$.
\end{lemma}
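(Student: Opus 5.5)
Both uniqueness and existence come from reading the matching off second differences of the cut function.

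For any template $\Phi$ and consecutive pairs $x,x'$ and $y,y'$ of $X$, let $z$ be the unique point of $Z$ between $x$ and $x'$, and $w$ the one between $y$ and $y'$. A chord of $\Phi$ contributes to
\[
\dist_\Phi(x,y)+\dist_\Phi(x',y')-\dist_\Phi(x,y')-\dist_\Phi(x',y)
\]
only if it has an endpoint in $\{z,w\}$. Checking the cases, the contribution is $2$ exactly when the chord is $(z,w)$, and $0$ otherwise; a chord with one endpoint at $z$ and the other elsewhere cancels. So this second difference is twice the indicator of $(z,w)\in\Phi$. Uniqueness follows: two templates with the same cut function have the same second differences, hence the same chords.

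For existence, define $\Phi$ by matching $z$ with $w$ exactly when $\Delta(y)=2$ in \Cref{lem:unit-diff}, for the $x,x'$ surrounding $z$ and the $y,y'$ surrounding $w$.
\begin{itemize}
\item \Cref{lem:unit-diff} gives each $z$ exactly one partner.
\item The relation is symmetric, because the quantity $b_{x,y}+b_{x',y'}-b_{x,y'}-b_{x',y}$ is symmetric under swapping the roles of $(x,x')$ and $(y,y')$, using $b_{u,v}=b_{v,u}$.
\item The partner is never $z$ itself. The lemma only ranges over $y,y'$ with $x,x',y,y'$ in cyclic order, i.e. $y\neq x'$. I need to confirm the degenerate term $y=x'$ is either excluded or equals $0$, which can be checked directly from $b_{x,x}=0$ and $b_{x,x'}=1$.
\end{itemize}
Hence $\Phi$ is a perfect matching on $Z$.

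It remains to show $\dist_\Phi=b$. Put $f=\dist_\Phi-b$, with $b_{x,x}=0$. By construction $f$ has all second differences zero, $f(x,x)=0$, and $f$ is symmetric. Fix a base point $x_1$. Zero second differences mean $f(x,y)=g(x)+h(y)$ along adjacent steps. Symmetry together with the diagonal gives $2g(x)=0$ up to a constant, so $f\equiv 0$.

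The delicate point is that second differences are defined only for non-overlapping consecutive pairs. The diagonal values $b_{x,x}=0$ and $b_{x,x'}=1$, matching $\dist_\Phi(x,x')=1$ (exactly one chord has the endpoint between them), must serve as boundary conditions for the telescoping. I'll propagate row by row, starting at $y=x$ and using $f(x,x')=0$ as the second initial value, so the recursion in $y$ is pinned.

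I expect this bookkeeping to be the main obstacle. It is the same telescoping as in \Cref{lem:unit-diff}, and it needs $b_{x,x'}=1$, which holds because any single pair containing the terminal among $x,x'$ crosses that cut and two such pairs never repel (\Cref{obs:neighbor}).
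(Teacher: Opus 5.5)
Your proposal is correct and follows the paper's route. The second-difference identity (the paper's inverse formula \eqref{eq:finite-diff}) gives uniqueness, Lemma~\ref{lem:unit-diff} together with the symmetry of $b$ yields a perfect matching, and your extra argument that $f=\dist_\Phi-b\equiv 0$ spells out the existence direction, which the paper leaves implicit in the invertibility of the circular cut-incidence map. One small correction: the degenerate term $y=x'$ equals $2-b_{x,x''}$ (with $x''$ the successor of $x'$), so checking it would need $b_{x,x''}=2$, not only $b_{x,x}=0$ and $b_{x,x'}=1$; in fact no separate check is required, because that term already lies in the telescoping range of Lemma~\ref{lem:unit-diff} and it concerns a chord to the neighbouring point of $Z$, not to $z$ itself.
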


\begin{proof}
Let $m=|X|=|Z|$ and use cyclic indices modulo $m$.  
For each pair $i,j$, let $c_{i,j}$ be the indicator that the chord $(z_i,z_j)$ is present.  The vector $c$ is determined by its cut counts. Indeed, if $B_{p,q}=b_{x_p,x_q}$, then the standard inverse for the circular cut-incidence matrix gives
\begin{equation}\label{eq:finite-diff}
   c_{i,j}=\frac12\bigl(B_{i,j}+B_{i-1,j-1}-B_{i-1,j}-B_{i,j-1}\bigr).
\end{equation}

By Lemma~\ref{lem:unit-diff}, for each $i$ there is a unique $j$ for which the right-hand side of \eqref{eq:finite-diff} is $1$, and it is $0$ for every other $j$.  Symmetry gives $c_{i,j}=c_{j,i}$.  Therefore the nonzero entries of $c$ form a perfect matching on $Z$.  The construction is forced by the cut counts, so the matching is unique.
\end{proof}

We denote by $\Phi(D)$ the template given by \Cref{lem:exact-template}.

If two chords $(z_1,z_3)$ and $(z_2,z_4)$ cross, with $z_1,z_2,z_3,z_4$ appearing in this order, then replacing them by $(z_1,z_2)$ and $(z_3,z_4)$, or by $(z_1,z_4)$ and $(z_2,z_3)$, is called an \emph{uncrossing}.  For templates $\Phi,\Phi'$, write $\Phi\to \Phi'$ if $\Phi'$ can be obtained from $\Phi$ by a sequence of uncrossings.  Write $\Phi\succeq \Phi'$ if $\dist_\Phi(x,y)\ge \dist_{\Phi'}(x,y)$ for all $x,y\in X$.
Recall that \Cref{thm:main-uncross} has shown that
If $\Phi\succeq \Phi'$, then $\Phi\to \Phi'$.

\begin{theorem}\label{thm:unique-template}
For every OS metric $D$, $\Phi(D)$ is the unique template with minimum crossing number among all templates satisfying $\dist_\Phi(x,y)\ge b_{x,y}$ for all $x,y\in X$.
That is, every other feasible template $\Phi\ne \Phi(D)$ satisfies that $\cro(\Phi)>\cro(\Phi(D))$.
\end{theorem}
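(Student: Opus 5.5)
The plan is to combine the exact-cut characterization of $\Phi(D)$ from Lemma~\ref{lem:exact-template} with the uncrossing result Lemma~\ref{thm:main-uncross}, together with a strict monotonicity of the crossing number under uncrossing. First, $\Phi(D)$ is itself feasible, since $\dist_{\Phi(D)}(x,y)=b_{x,y}$ for all $x,y\in X$. Next, take any feasible template $\Phi$. Feasibility gives $\dist_\Phi(x,y)\ge b_{x,y}=\dist_{\Phi(D)}(x,y)$ for all $x,y$, so $\Phi\succeq\Phi(D)$. If equality holds on every cut, then both templates have the same cut counts. By the finite-difference inversion \eqref{eq:finite-diff}, the chord indicators $c_{i,j}$ are determined by the cut counts, so $\Phi=\Phi(D)$. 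Hence for $\Phi\ne\Phi(D)$ we have $\Phi\succ\Phi(D)$, and Lemma~\ref{thm:main-uncross}, which applies since templates are perfect matchings on $Z$, gives a sequence of uncrossings $\Phi=\Phi_0\to\Phi_1\to\cdots\to\Phi_t=\Phi(D)$ with $t\ge1$.

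It then suffices to show that a single uncrossing strictly decreases $\cro$. Suppose chords $(z_1,z_3)$ and $(z_2,z_4)$ cross, with $z_1,z_2,z_3,z_4$ in cyclic order, and we replace them by $(z_1,z_2),(z_3,z_4)$; the other choice is symmetric. The crossing between the two chords themselves disappears, and the two new chords do not cross each other. For any third chord $e=(p,q)$, I will count its crossings with the pair before and after. The four points split the circle into four arcs $I_{12},I_{23},I_{34},I_{41}$. A chord crosses a given chord exactly when its endpoints separate that chord's endpoints, so the count depends only on which arcs contain $p$ and $q$. If $p$ and $q$ lie in the same arc, the count is $0$ before and after. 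If they lie in opposite arcs, the count is $2$ before and at most $2$ after. If they lie in adjacent arcs, the count is $1$ before and $0$ or $2$ after. The last possibility is the problem: for example, with $p\in I_{12}$ and $q\in I_{41}$, the chord $e$ separates $z_1$ from everything else, so it crosses $(z_1,z_2)$ but not $(z_3,z_4)$, giving $1$. I will run this check over all arc positions (a short table). The expected result is that each third chord crosses the new pair at most as often as the old pair. Together with the removed mutual crossing, this gives $\cro(\Phi_{s+1})\le\cro(\Phi_s)-1$. Chaining over the $t\ge1$ steps yields $\cro(\Phi)\ge\cro(\Phi(D))+t>\cro(\Phi(D))$.

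The main obstacle I expect is the bookkeeping in the single-uncrossing monotonicity step, specifically the adjacent-arc cases and the degenerate cases where $p$ or $q$ coincides with one of $z_1,\dots,z_4$. Coincidence cannot happen for a perfect matching on $Z$, which avoids the degenerate cases. I will handle the adjacent-arc cases by the parity argument above: for each third chord, the number of crossings with a pair of chords that together cover the four endpoints has parity fixed by how many of $z_1,\dots,z_4$ lie on one side of $e$. So the count can only change by an even amount, and I will check directly that it never increases.
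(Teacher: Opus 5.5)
Your proposal is correct and follows the paper's argument: feasibility gives $\Phi\succeq\Phi(D)$, \Cref{thm:main-uncross} gives $\Phi\to\Phi(D)$, and each uncrossing strictly lowers $\cro$; you handle the equality case explicitly via \eqref{eq:finite-diff} and actually verify the strict monotonicity of $\cro$, which the paper only asserts. One slip to fix: in the adjacent-arc case the count after uncrossing is always exactly $1$, not ``$0$ or $2$''. By your own parity observation, a chord separating one of $z_1,\dots,z_4$ from the other three crosses an odd number of chords of any matching on those four points, so no third chord gains crossings and the removed mutual crossing gives the strict decrease.
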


\begin{proof}
If $\Phi$ is feasible, then $\dist_\Phi(x,y)\ge b_{x,y}=\dist_{\Phi(D)}(x,y)$ for all $x,y$, so $\Phi\succeq \Phi(D)$.  By \Cref{thm:main-uncross}, $\Phi\to \Phi(D)$. Since each nontrivial uncrossing strictly decreases the crossing number, $\cro(\Phi)\ge \cro(\Phi(D))$, with equality only if no uncrossing is needed, i.e., only if $\Phi=\Phi(D)$.
\end{proof}

\subsubsection{Minimum realizations}

We are now ready to prove the following theorem on minimum realizations of OS metrics.

\begin{lemma}\label{thm:minimum-realization}
For any OS metric $D$, the graph structures of all minimum realizations of $D$ is exactly the set of primal graphs of all arrangements of template $\Phi(D)$.
The number of edges in these minimum realizations is $\cro(\Phi(D))$.
\end{lemma}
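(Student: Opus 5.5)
The proof will establish two inclusions and read off the edge count from the crossing number. Throughout, we use the standing assumption of this part that the only boundary vertices are terminals. We also use the correspondence, from the medial-graph paragraph, between primal edges and crossings of medial chords, so $|E(G)|$ equals the number of chord intersections in $M(G)$.

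\emph{Lower bound.} Let $(G,T)$ be any OS realization of $D$, and let $\Phi_G$ be the template of its medial graph $M(G)$.

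1. By the ``only if'' direction of Theorem~\ref{lem: 2-path condition}, the paths realizing $D$ in $G$ form a $D$-good shortest path structure. By the ``only if'' direction of Theorem~\ref{thm:repelling-paths}, every $x$-$y$ chain in $G$ therefore has length at least $a_{x,y}$.
2. A shortest $x$-$y$ chain is crossed by exactly $\dist_{\Phi_G}(x,y)$ chords, up to the endpoint correction. This is the regular-neighborhood count stated before \eqref{eq:template-feasible}. Hence $\Phi_G$ satisfies \eqref{eq:template-feasible}.
3. By Theorem~\ref{thm:unique-template}, $\cro(\Phi_G)\ge\cro(\Phi(D))$, with equality only if $\Phi_G=\Phi(D)$.
4. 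The medial curves of $G$ may cross each other more than once or self-intersect. In that case the number of intersections in $M(G)$ is at least $\cro(\Phi_G)$, since each crossing pair of chords must intersect an odd number of times. So $|E(G)|\ge\cro(\Phi(D))$ in all cases.

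\emph{Upper bound.} Take any arrangement of $\Phi(D)$ and let $G$ be its primal graph, an OS instance with terminals at the appropriate boundary faces. We need $G$ to admit a $D$-good structure. By the ``if'' direction of Theorem~\ref{thm:repelling-paths}, it suffices to show that every chain $A$ crossed by a repelling set $\mset$ satisfies $|A|\ge|\mset|$.

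1. Let the boundary cut of $A$ be $x$-$y$. Every pair of $\mset$ crosses this cut, so $|\mset|\le a_{x,y}$.
2. Every chord crossing the $x$-$y$ cut must meet the regular neighborhood of $A$. In an arrangement, each primal vertex of $A$ accounts for at most two chord passages, since the region steps account for the rest. This gives $\dist_{\Phi(D)}(x,y)\le 2|A|-\mathbf 1[x\in T]-\mathbf 1[y\in T]$.
3. Since $\dist_{\Phi(D)}(x,y)=b_{x,y}$ by Lemma~\ref{lem:exact-template}, we get $|A|\ge a_{x,y}\ge|\mset|$.

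Theorem~\ref{lem: 2-path condition} then yields edge lengths realizing $D$. The number of edges is $\cro(\Phi(D))$, because in an arrangement each crossing pair of curves meets exactly once. So every such primal graph is a realization attaining the lower bound.

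\emph{Combining the bounds.} The minimum edge count is $\cro(\Phi(D))$. Every minimum realization must have equality throughout the lower-bound chain. This forces $\Phi_G=\Phi(D)$, and it forces the medial curves to have no self-intersections or repeated crossings. Therefore $M(G)$ is an arrangement of $\Phi(D)$, which gives the converse inclusion.

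\emph{Main obstacle.} The delicate step is the exact correspondence between chain lengths and chord-crossing counts in the upper bound. Step 2 there needs the inequality $\dist_\Phi(x,y)\le 2|A|-\text{correction}$ for every chain, not just for minimal ones. I would prove it by following the boundary of the chain's regular neighborhood. At each primal vertex, the neighborhood boundary passes through exactly two medial edges on the two sides. At a region step, it crosses no chord. A chord crossing the cut must exit across this boundary an odd number of times. At terminal endpoints, one of the two passages is absorbed by the boundary, which accounts for the correction.
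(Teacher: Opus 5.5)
Your upper-bound half is correct and matches the paper. For any chain $A$ in any primal graph, a curve threading $A$ crosses medial edges $2|A|-\mathbf 1[x\in T]-\mathbf 1[y\in T]$ times and must cross every cut-crossing chord, which gives $\dist_{\Phi}(x,y)\le 2|A|-\mathbf 1[x\in T]-\mathbf 1[y\in T]$.

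\textbf{The gap is step 2 of your lower bound.} Feasibility of $\Phi_G$ needs the \emph{opposite} inequality: some $x$-$y$ chain must satisfy $2|A|-\mathbf 1[x\in T]-\mathbf 1[y\in T]\le \dist_{\Phi_G}(x,y)$. The regular-neighborhood count you cite gives only the direction you use in the upper bound. The opposite inequality is false for general realizations.

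Take $T=\{s,t\}$, $D(s,t)=1$, and let $G$ be two parallel $s$-$t$ edges of length $1$.
\begin{itemize}
\item The two medial strands cross twice, forming a lens. So $\Phi_G$ joins the two points of $Z$ flanking $s$, and the two flanking $t$.
\item For the reference points $x,y$ separating $s$ from $t$, $\dist_{\Phi_G}(x,y)=0$. But the shortest $x$-$y$ chain has length $1$, and $b_{x,y}=2$.
\item Moreover $\cro(\Phi_G)=0<1=\cro(\Phi(D))$, so your step 3 is false for this realization.
\end{itemize}
Thus Theorem~\ref{thm:unique-template} cannot be applied to an arbitrary realization, and your ``in all cases'' bound is unsupported by this route. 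Your converse inclusion inherits the problem. You deduce that $M(G)$ is an arrangement from equality in a chain whose middle inequality $\cro(\Phi_G)\ge\cro(\Phi(D))$ already needs that reduced structure.

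\textbf{The missing idea is to use minimality before comparing templates,} as the paper does.
\begin{enumerate}
\item In a minimum realization, an innermost lens between two medial curves can be emptied without changing the edge count.
\item An empty lens is either a degree-$2$ nonterminal or a pair of parallel edges. Suppressing the former, or deleting the longer parallel edge, preserves all terminal distances and lowers the edge count. Hence chords with non-alternating endpoints never meet, alternating ones meet exactly once, and $M(G)$ is an arrangement of $\Phi_G$.
\item In particular no $L$-$L$ chord meets an $R$-$R$ chord. So there is an arc from $x$ to $y$ meeting exactly the cut-crossing chords, each once.
\item The cells this arc traverses form a chain $A$ with $2|A|-\mathbf 1[x\in T]-\mathbf 1[y\in T]=\dist_{\Phi_G}(x,y)$. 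Together with $|A|\ge a_{x,y}$, this makes $\Phi_G$ feasible.
\end{enumerate}
Only then do Theorem~\ref{thm:unique-template} and your upper bound give $|E(G)|=\cro(\Phi_G)\ge\cro(\Phi(D))$, hence $\Phi_G=\Phi(D)$, for every minimum realization.
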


\begin{proof}

Let $(G,T)$ be a minimum OS realization of $D$. Let $\Gamma$ be its
medial graph with medial template $\Phi$.  We first show a basic property of $\Gamma$.
\begin{claim}
If two medial chords
$\alpha$ and $\beta$ in $\Phi$ have non-alternating endpoints on the boundary, then they do not cross each other.    
\end{claim}
\begin{proof}
Suppose towards contradiction that they meet. Since their endpoints are non-alternating, they
meet an even positive number of times.  Hence two consecutive
intersection points of $\alpha$ and $\beta$ bound a lens.  Choose an
innermost such lens, its interior contains no medial vertex. After moving the crossing strands out of its interior (this operation does not change the number of edges in the primal graph) and obtaining an empty medial lens, in the
primal graph, such an empty medial lens can be simplified: either a degree-$2$ nonterminal, or a two-sided
face bounded by two parallel edges.  In the first case we replace the
two incident edges by one whose length is their sum; in the second
case we delete the longer of the two parallel edges.  Both operations
preserve all terminal distances and strictly decrease the number of
edges, contradicting the minimality of $G$.  Thus two medial chords
with non-alternating endpoints do not meet. 
\end{proof}  

The same lens argument
shows that two chords with alternating endpoints meet exactly once.
Consequently, $\Gamma$ is an arrangement of its template $\Phi$.
Now fix a boundary cut $x$-$y$, and let $L$ and $R$ be the two boundary
arcs determined by $x,y$.  A medial chord is called crossing if
it has one endpoint on $L$ and the other endpoint on $R$.  We prove the following claim. 
\begin{claim}
There is a simple arc $\gamma$ from $x$ to $y$ which meets precisely the
crossing chords, and meets each of them exactly once.    
\end{claim}
\begin{proof}
Indeed, if no such arc existed, then the chords whose two endpoints
lie on the same side of the $x$-$y$ cut would contain a topological
barrier between $x$ and $y$.  Equivalently, some connected component of
their union would meet both $L$ and $R$.  But every chord in this union
has both endpoints on $L$ or both endpoints on $R$.  Therefore any
component meeting both sides must contain an intersection between an
$L$-$L$ chord and an $R$-$R$ chord.  These two chords have
non-alternating endpoints, contradicting the reducedness proved above.
Hence such an arc $\gamma$ exists.  Since $\Gamma$ is an arrangement,
each crossing chord meets $\gamma$ exactly once, and no same-side
chord meets $\gamma$.
\end{proof}

Reading the primal and dual cells crossed by $\gamma$ gives an
$x$-$y$ chain $A$.  The primal cells in this sequence are precisely the
vertices of the chain.  Since $\gamma$ meets exactly the medial chords
crossing the $x$-$y$ cut, we obtain
$\operatorname{dist}_{\Phi}(x,y)=
    2|A|-\mathbf 1[x\in T]-\mathbf 1[y\in T]$.
The endpoint correction comes from the fact that a nonterminal boundary
point lies in a peripheral region, whereas a terminal endpoint lies at a
primal vertex.

Since $(G,T)$ realizes $D$, Theorem~\ref{lem: 2-path condition} implies that $G$ admits a
$D$-good shortest path structure.  Therefore Theorem~\ref{thm:repelling-paths} applies.  In
particular, every $x$-$y$ chain has length at least $a_{x,y}$, so the
chain $A$ above satisfies $|A|\ge a_{x,y}$.  Hence
$\operatorname{dist}_{\Phi}(x,y)
    \ge
    2a_{x,y}-\mathbf 1[x\in T]-\mathbf 1[y\in T]
    =
    b_{x,y}$.
Thus the medial template of every minimum realization is feasible.
By Theorem~\ref{thm:unique-template},
$\operatorname{cr}(\Phi)\ge \operatorname{cr}(\Phi(D)).
$
Thus every minimum realization has at least
$\operatorname{cr}(\Phi(D))$ edges.

On the other hand, take any arrangement of template $\Phi(D)$ and form the primal graph by the standard medial-to-primal construction. 
Since $\dist_{\Phi(D)}(x,y)=b_{x,y}$, the minimum chain lengths in this primal graph are exactly the lower bounds $a_{x,y}$, with the endpoint correction.  Hence all chain inequalities of Theorem~\ref{thm:repelling-paths} hold, so the resulting graph admits a $D$-good shortest-path structure, and by Theorem~\ref{lem: 2-path condition} it can be assigned nonnegative edge lengths realizing $D$. The number of edges is $\cro(\Phi(D))$.

The uniqueness of the template follows from Theorem~\ref{thm:unique-template}.  The only freedom in the graph structure is the intersection pattern of chords with this fixed boundary pairing.
\end{proof}

\begin{theorem}
Let $D$ be an OS metric and let $G$ be the primal graph of any arrangement
of template $\Phi(D)$. Then one can efficiently compute
nonnegative edge lengths on $G$ so that the resulting weighted graph
realizes $D$. Consequently, such graphs are the graph structures for minimum realizations of
$D$. The edge lengths need not be unique.
\end{theorem}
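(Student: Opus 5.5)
The plan is to chain together three earlier results: Lemma~\ref{thm:minimum-realization}, Theorem~\ref{thm:repelling-paths} and Theorem~\ref{lem: 2-path condition}. The argument has a combinatorial step, a numerical step, and a short remark on non-uniqueness.

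\textbf{Combinatorial step.} Given $D$, I first compute the values $a_{x,y}$ for all $x,y\in X$. This is polynomial: a repelling set crossing a fixed cut is a family of pairwise repelling chords. By \Cref{lem: W-repel} and the non-crossing forced by the Kalmanson condition, such a family is nested. So $a_{x,y}$ is a longest-chain computation in a poset on $O(k^2)$ pairs, which takes polynomial time. From the $a_{x,y}$ I form $b_{x,y}$ and read off $\Phi(D)$ via the finite-difference formula \eqref{eq:finite-diff}, which is valid by Lemma~\ref{lem:exact-template}. Given any arrangement of $\Phi(D)$, I build its primal graph $G$ by the standard checkerboard construction. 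The argument in the proof of Lemma~\ref{thm:minimum-realization} shows that for every cut $x$-$y$ the minimum chain length satisfies $2|A|-\mathbf 1[x\in T]-\mathbf 1[y\in T]\ge \dist_{\Phi(D)}(x,y)=b_{x,y}$, hence $|A|\ge a_{x,y}$. A chain crossed by every pair of a repelling set $\mset$ has $|\mset|\le a_{x,y}\le|A|$, so the hypothesis of Theorem~\ref{thm:repelling-paths} holds. That theorem's inductive construction ($\Pi[i,j]$ through Cases 1 and 2, then choosing $\Pi[i,j]$ or $\Pi[j,i]$) then efficiently yields a $D$-good shortest path structure $\pset$.

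\textbf{Numerical step.} I apply Theorem~\ref{lem: 2-path condition} to $(G,T)$, $\pset$ and $D$. It says the LP in Section~\ref{sec: repelling paths} is feasible. The LP has $|E(G)|=\cro(\Phi(D))=O(k^2)$ variables and exponentially many path constraints. I would solve it with the ellipsoid method, using the following separation oracle: given $\ell$, check the equalities on the designated paths, and run Dijkstra from every terminal to find any pair whose shortest path is shorter than $D$. A violated path constraint gives a separating hyperplane. Since all data are rational of polynomial size, this runs in polynomial time. Alternatively, one can write a compact LP with potential variables $d_t(v)$ constrained by $d_t(v)\le d_t(u)+\ell_{uv}$, $d_t(t)=0$, and $d_t(t')\ge D(t,t')$; this gives polynomially many constraints and avoids the ellipsoid method. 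I expect to use this compact form.

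\textbf{Consequence and non-uniqueness.} The second sentence of the statement follows by combining the above with Lemma~\ref{thm:minimum-realization}. Every such $G$ has exactly $\cro(\Phi(D))$ edges, and that is the minimum. For non-uniqueness it suffices to exhibit one example, namely the one deferred to Appendix~\ref{apd: example}. A simple instance is a $Y$ with three terminals: its lengths are forced. A four-terminal Kalmanson metric whose template contains a crossing square is better, since there a one-parameter family of lengths preserves all terminal distances.

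\textbf{Main obstacle.} The main obstacle is the polynomial computability of $a_{x,y}$, in particular justifying that maximum repelling sets crossing a cut are nested so that a DP or longest-chain computation suffices. A secondary point is making sure the primal graph built from an arbitrary arrangement meets the chain bounds exactly. I would settle the latter by reusing the arc $\gamma$ argument from Lemma~\ref{thm:minimum-realization}: any chain reads off crossings with medial chords, and every chord crossing the cut must be met.
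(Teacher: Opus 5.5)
Your proposal is correct and follows the paper's proof. You verify the hypothesis of Theorem~\ref{thm:repelling-paths} via $|\mset|\le a_{x,y}\le |A|$, where the second inequality comes from $\dist_{\Phi(D)}(x,y)=b_{x,y}$ and the fact that every medial chord crossing the cut must meet a regular neighbourhood of $A$; you then build the $D$-good structure, invoke Theorem~\ref{lem: 2-path condition} (your compact potential LP is a nice explicit way to do this step), and get minimality from $|E(G)|=\cro(\Phi(D))$ and Theorem~\ref{thm:unique-template}.

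Two minor remarks. First, computing $a_{x,y}$ is not needed here, since $G$ is given and building the paths $\Pi[i,j]$ only uses pairwise repelling tests. Second, your four-terminal non-uniqueness aside is unsupported: with four terminals $\Phi(D)$ has only four chords, so a minimum realization has at most $\binom42=6$ edges against six distance constraints, which leaves no evident room for a one-parameter family. You should rely on the example in Appendix~\ref{apd: example}, as the paper does.
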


\begin{proof}
Let $G$ be the primal graph of an arrangement of $\Phi(D)$. We argue
that $G$ satisfies the hypotheses of Theorem \ref{thm:repelling-paths}. Indeed, fix a boundary cut $x$-$y$, let $A$ be any chain in $G$ with boundary
cut $x$-$y$, and let $M$ be any repelling set whose pairs cross $A$. 
Since every pair in $M$ crosses the boundary cut $x$-$y$, the definition of
$a_{x,y}$ gives $|M|\leq a_{x,y}$. On the other hand, by Lemma \ref{lem:exact-template},
$\operatorname{dist}_{\Phi(D)}(x,y)=b_{x,y}$. Since every medial chord
crossing the $x$-$y$ cut crosses a regular neighborhood of the chain $A$,
the endpoint correction in the definition of $b_{x,y}$ gives
$a_{x,y}\leq |A|$. Thus
$
|M|\leq a_{x,y}\leq |A|,
$
as desired.

Then the proof of Theorem \ref{thm:repelling-paths} constructs a $D$-good shortest path structure in $G$
efficiently. Applying the constructive part of Theorem \ref{lem: 2-path condition}, we obtain
efficiently a nonnegative edge-length function on $G$ whose shortest-path
metric on the terminals is $D$.

The graph $G$ has $|E(G)|=\operatorname{cr}(\Phi(D))$ edges. On the
other hand, every OS realization of $D$ has a feasible medial template,
and by Theorem~\ref{thm:unique-template} no feasible medial template for
$D$ has fewer crossings than $\Phi(D)$. Hence no OS realization of $D$
has fewer edges than $G$, and therefore the realization constructed above
is a minimum realization.

The example in Appendix \ref{apd: example} shows that the edge lengths need not be unique.
\end{proof}


\appendix
\section{Refuting $O(1)$-Point Conditions: Proof of \Cref{thm: no O(1) point}}
\label{apd: no O(1)-point}

In this section, we show that outerplanar metrics do not admit an ``$O(1)$-point condition'', by providing the proof of \Cref{thm: no O(1) point}. Specifically, we will show that, for every positive integer $k\ge 20$, there exists a metric $D$ on $k$ terminals $\set{1,\ldots,k}$, such that
\begin{itemize}
    \item $D$ is not outerplanar (that is, $D$ is not realizable by any outerplanar graph); and
    \item the restriction of $D$ onto any proper subset $I\subsetneq \set{1,\ldots,k}$ is outerplanar.
\end{itemize} 

Consider now $k$ as any fixed integer. Metric $D$ is defined as follows.
\begin{itemize}
    \item For each $1\le i\le k$, $D(i,i+1)=1$ (convention: $k+1=1$).
    \item For each pair $i,j$ with $|i-j|\ne 1$, $D(i,j)=2$.
\end{itemize} 
It is easy to verify that $D$ satisfies all triangle inequalities, and is therefore a well-defined metric.

On the one hand, we show that the restriction of $D$ onto any proper subset of terminals is outerplanar.
Consider the set $I=\set{1,\ldots,i-1,i+1,\ldots, k}$. Denote the restricted metric of $D$ by $D_{I}$. It is easy to verify that the following outerplanar graph realizes $D_{I}$.

\begin{figure}[h]
\centering
\includegraphics[width=0.5\linewidth]{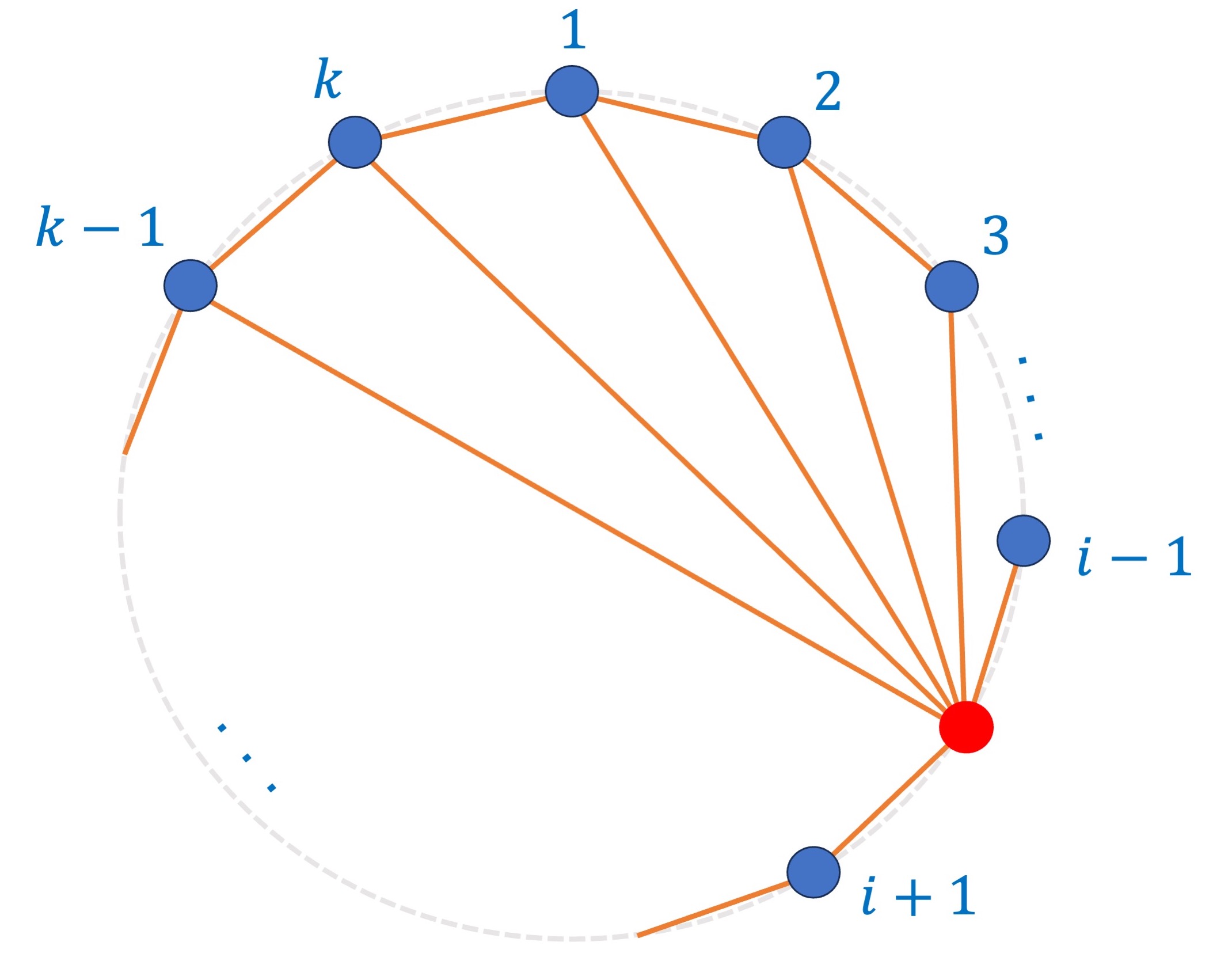}
\caption{The outerplanar graph realizing the metric $D$ restricted onto  $I=\set{1,\ldots,i-1,i+1,\ldots, k}$: the only Steiner node is in red, and every orange edge has length $1$. Every pair $i,i+1$ is connected by a direct orange edge, and every other pair $i,j$ has no direct edge but a length-$2$ path via the red node.}
\end{figure} 

On the other hand, we show that any graph realizing $D$ must contain $K_4$ as a minor. Since outerplanar graphs are $K_4$-free, $D$ cannot be realized by any outerplanar graph.

Let $G$ be a graph with $T=\set{1,\ldots, k}\subseteq V(G)$ realizing metric $D$. For each pair $i,j\in T$, denote by $P_{i,j}$ the shortest path in $G$ connecting $i,j$. We prove the following observation.

\begin{observation}
\label{obs: disjoint}
For distinct indices $i,j$, paths $P_{i,i+1}$ and $P_{j,j+1}$ are internally disjoint.
\end{observation}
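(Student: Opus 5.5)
The plan is to show that any common vertex of $P_{i,i+1}$ and $P_{j,j+1}$ other than a shared endpoint would force a violation of the metric $D$, using the switching inequalities of \Cref{lem: intersecting shortest paths}.

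Suppose $v$ is a common vertex that is internal to at least one of the two paths. The two paths have length $D(i,i+1)=D(j,j+1)=1$.

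First consider the case where $\{i,i+1\}$ and $\{j,j+1\}$ are disjoint as index sets. \Cref{lem: intersecting shortest paths} gives
\[
2=D(i,i+1)+D(j,j+1)\ge \max\{D(i,j)+D(i+1,j+1),\,D(i,j+1)+D(i+1,j)\}.
\]
Since $k\ge 20$, at most one of the four cross pairs $(i,j),(i+1,j+1),(i,j+1),(i+1,j)$ can be cyclically adjacent. Indeed, adjacency of $(i,j+1)$ or $(i+1,j)$ means $j=i+2$ or $i=j+2$, and these two cannot hold simultaneously since $k\ge 5$. Also $(i,j)$ and $(i+1,j+1)$ are non-adjacent, because the index sets are disjoint and their positions differ by at least $2$. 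So one of the two sums above consists of two distances each equal to $2$, giving a right-hand side of at least $4>2$. This is a contradiction, so the paths are disjoint in this case.

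Next consider the case where the pairs share an endpoint, say $j=i+1$, so the paths are $P_{i,i+1}$ and $P_{i+1,i+2}$. They share the vertex $i+1$, and we must rule out any further common vertex $v\ne i+1$. Write $a=\dist(i,v)$ and $b=\dist(v,i+1)$, so that $a+b=1$ since $v$ lies on $P_{i,i+1}$. Similarly write $c=\dist(v,i+2)$ with $b+c=1$ since $v$ lies on $P_{i+1,i+2}$. Then
\[
D(i,i+2)\le a+c=2-2b,
\]
while $D(i,i+2)=2$ because $k\ge 4$ makes $i,i+2$ non-adjacent. This forces $b=0$, i.e.\ $v$ is at distance zero from $i+1$.

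The main obstacle is exactly this zero-length degeneracy: with nonnegative weights a vertex at distance $0$ from $i+1$ is not literally $i+1$. I would handle it either by contracting zero-length edges first, which preserves $D$ and can only make the $K_4$-minor conclusion easier, or by interpreting ``internally disjoint'' up to such contraction. Once that is settled, the observation is proved, and the $2$-path switching argument is the whole content.
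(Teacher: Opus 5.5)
Your proof is correct and follows essentially the paper's route: a common vertex lets you switch the two unit-length paths into a path between a cross pair at distance $2$, which contradicts $D$. Your explicit handling of the zero-length degeneracy is a genuine improvement. The paper's claim that the switched path is ``strictly smaller than $2$'' tacitly assumes positive edge lengths, and contracting zero-length edges first (this preserves $D$ and only passes to a minor of $G$) is the right way to make the observation hold where it is used in the $K_4$-minor argument.
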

\begin{proof}
Since $i,j$ are distinct, by definition of $D$, either $D(i,j+1)=2$ or $D(i+1,j)=2$ holds. Assume without loss of generality that $D(i,j+1)=2$. Assume for contradiction that paths $P_{i,i+1}, P_{j,j+1}$ are not internally disjoint, then we can use a proper subset of their edges (like the proof of \Cref{lem: intersecting shortest paths}) to construct an $i$-to-$(j+1)$ path, whose total length is strictly smaller than $2$, a contradiction.
\end{proof}

By similar arguments, we can prove the following observation.

\begin{observation}
\label{obs: disjoint2}
For $i,j,j'$ where $|i-j|,|i-j'|\ge 2$, paths $P_{i,i+1}$ and $P_{j,j'}$ are internally disjoint.
\end{observation}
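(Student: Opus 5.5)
We need to prove Observation obs: disjoint2. Suppose $|i-j|\ge 2$ and $|i-j'|\ge 2$; note also $i+1$ could equal $j$ or $j'$, so we must be careful about which distances equal $2$. The plan is to mimic the proof of Observation~\ref{obs: disjoint} and \Cref{lem: intersecting shortest paths}.

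Assume for contradiction that $P_{i,i+1}$ and $P_{j,j'}$ share an internal vertex $v$. By that lemma's path-switching argument, both re-pairings of the endpoints give walks of total length $D(i,i+1)+D(j,j')=1+D(j,j')\le 3$. More precisely, splitting at $v$ gives lengths $\alpha+\beta=1$ along $P_{i,i+1}$ (with $\alpha=\dist(i,v)$, $\beta=\dist(v,i+1)$) and $\gamma+\delta=D(j,j')$ along $P_{j,j'}$. From the re-pairings,
\[
D(i,j)+D(i+1,j') \le 1+D(j,j') \quad\text{and}\quad D(i,j')+D(i+1,j)\le 1+D(j,j').
\]
Since $|i-j|,|i-j'|\ge 2$, we have $D(i,j)=D(i,j')=2$. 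Also $D(j,j')\le 2$. Then the second inequality gives $D(i+1,j)\le 1$, and the first gives $D(i+1,j')\le 1$. Hence both $j$ and $j'$ are equal or adjacent to $i+1$, while being at distance at least $2$ from $i$. So $j,j'\in\{i+1,i+2\}$.

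If $\{j,j'\}=\{i+1,i+2\}$, the pair is an adjacent pair and the claim reduces to \Cref{obs: disjoint}, using distinct indices $i$ and $i+1$. The degenerate case $j=j'$ gives a trivial path, and then $v$ cannot be an internal vertex of $P_{i,i+1}$ unless $v=j\in\{i+1,i+2\}$; the case $v=i+2$ lying internally on $P_{i,i+1}$ would force $D(i,i+2)<1$, a contradiction.

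The one subtle step is the strictness issue: the inequalities above are not strict. To get strictness, I would use that $v$ is internal with positive-length pieces, or else simply exhaust the endpoint coincidences as done here. The main obstacle is making sure the coincidence $j$ or $j'$ equal to $i+1$ (so that paths share an endpoint, which is allowed) is handled correctly. That is exactly the sub-case reduced above to \Cref{obs: disjoint}.
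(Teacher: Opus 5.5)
Your argument is correct and is essentially the paper's route: the exchange argument of Lemma~\ref{lem: intersecting shortest paths} at a shared vertex, as in Observation~\ref{obs: disjoint}. Using both re-pairings to get $D(i+1,j),D(i+1,j')\le D(j,j')-1$ needs no strict inequality at all, so your closing worry is unnecessary, as is the appeal to positive-length pieces (zero-length edges would not supply them); this also makes your version sturdier than the paper's one-sided ``strictly shorter than $2$'' sketch.

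One correction: $|i-j|\ge 2$ must be read as cyclic distance, so $j,j'\neq i+1$. If $j=i+1$ were allowed, $D(i,j)=1$ and your key step $D(i,j)=2$ would fail. With the correct reading your inequalities force $j=j'=i+2$, where $D(i+1,j)\le D(j,j')-1=-1$ is absurd, so the $\{i+1,i+2\}$ case and its appeal to Observation~\ref{obs: disjoint} are vacuous.
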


As a corollary of \Cref{obs: disjoint}, in graph $G$, paths $P_{1,2},P_{2,3},\ldots,P_{k,1}$ are internally disjoint.
Therefore, by sequentially concatenating them, we obtain a simple cycle $C$.

Consider $4$ terminals $x,y,z,w$ that appear on $C$ in this order, such that every pair of them differs by at least $5$ (this is possible since $k\ge 20$). Consider the $x$-$z$ shortest path $P_{x,z}$ and the $y$-$w$ shortest path $P_{y,w}$. We distinguish between the following cases.

\paragraph{Case 1. $P_{x,z}$ and $P_{y,w}$ are vertex-disjoint.}
In this case, we construct a $K_4$-minor in $G$ as follows. 
\begin{itemize}
    \item The four components are $X=\set{x-1,x,x+1}$, $Y=\set{y-1,y,y+1}$, $Z=\set{z-1,z,z+1}$, and $W=\set{w-1,w,w+1}$; clearly they are disjoint and each induces a connected subgraph of $G$ disjoint from others (e.g., vertices of $X$ are connected by $P_{x-1,x}$ and $P_{x,x+1}$). 
    \item The six connections are:
    \begin{itemize}
        \item the $X$-$Y$ connection is the segment of the cycle $C$ between $x+1$ and $y-1$;
        \item the $Y$-$Z$ connection is the segment of the cycle $C$ between $y+1$ and $z-1$;
        \item the $Z$-$W$ connection is the segment of the cycle $C$ between $z+1$ and $w-1$;
        \item the $W$-$X$ connection is the segment of the cycle $C$ between $w+1$ and $x-1$;
        \item the $X$-$Z$ connection is the path $P_{x,z}$; and
        \item the $Y$-$W$ connection is the path $P_{y,w}$.
    \end{itemize}
    From \Cref{obs: disjoint2} and the assumption that $P_{x,z}$ and $P_{y,w}$ are vertex-disjoint, all six connections are vertex-disjoint, and they are disjoint from all component subgraphs.
\end{itemize}

\paragraph{Case 2. $P_{x,z}$ and $P_{y,w}$ intersect.}

Let $a$ be the middle-point of path $P_{x,z}$ (that is, the unique point on path $P_{x,z}$ that is at distance $1$ from both $x$ and $z$), and let $b$ be the middle-point of path $P_{y,w}$. The only possibility for paths $P_{x,z}$ and $P_{y,w}$ to intersect is when $a$ and $b$ coincide, as otherwise it will lead to either $\dist_G(x,y)<2$ or $\dist_G(z,y)<2$ or $\dist_G(x,w)<2$ or $\dist_G(z,w)<2$, a contradiction. Therefore, we can construct a $K_4$-minor in $G$ as follows. 
\begin{itemize}
    \item The four components are $X=\set{x-1,x,x+1}$, $Y=\set{y-1,y,y+1}$, $W=\set{w-1,w,w+1}$, and $A=\set{a}$. Each of them occupies a connected territory of $G$ disjoint from others.
    \item The six connections are:
    \begin{itemize}
        \item the $X$-$Y$ connection is the segment of the cycle $C$ between $x+1$ and $y-1$;
        \item the $Y$-$W$ connection is the segment of the cycle $C$ between $y+1$ and $z-1$;
        \item the $X$-$W$ connection is the segment of the cycle $C$ between $z+1$ and $w-1$;
        \item the $X$-$A$ connection is the subpath of $P_{x,z}$ between $x$ and $a$;
        \item the $Y$-$A$ connection is the subpath of $P_{y,w}$ between $y$ and $a$; and
        \item the $W$-$A$ connection is the subpath of $P_{y,w}$ between $w$ and $a$.
    \end{itemize}
    From \Cref{obs: disjoint2} and our discussion above, all six connections are vertex-disjoint, and they are disjoint from all component subgraphs.
\end{itemize}

\section{Missing Proofs in \Cref{sec: ordering}}
\label{sec: ordering proofs}
\subsection*{Proof of \Cref{obs:add}}

Since the pairs $(t_1,t_6)$ and $(t_3,t_4)$ do not repel each other, it follows that
\[D(t_1,t_4)+D(t_3,t_6)-D(t_1,t_6)-D(t_3,t_4)=0.\]
On the other hand,
\begin{align*}
    D(t_1,t_4)+D(t_3,t_6)-D(t_1,t_6)-D(t_3,t_4)
     & = \bigl(D(t_1,t_5)+D(t_2,t_6)-D(t_1,t_6)-D(t_2,t_5)\bigr) \\
    & + \bigl(D(t_2,t_5)+D(t_3,t_6)-D(t_3,t_5)-D(t_2,t_6)\bigr) \\
    & + \bigl(D(t_1,t_4)+D(t_2,t_5)-D(t_1,t_5)-D(t_2,t_4)\bigr) \\
    & + \bigl(D(t_2,t_6)+D(t_3,t_5)-D(t_2,t_5)-D(t_3,t_4)\bigr).
\end{align*}
Each of these bracketed terms is non-negative, and their sum is zero; hence every bracketed term must itself be zero. This implies, in particular, that $(t_2,t_5)$ does not repel $(t_3,t_4)$.

\subsection*{Proof of \Cref{obs:trans}}

Since $(t_1,t_6)$ does not repel $(t_2,t_3)$, \Cref{obs:add} implies that $(t_1,t_6)$ does not repel $(t_2,t_5)$, and hence
    \[
    D(t_1,t_5)+D(t_2,t_6)-D(t_1,t_6)-D(t_2,t_5)=0.
    \]
    Similarly, because the pair $(t_2,t_6)$ does not repel $(t_4,t_5)$, we obtain
    \[
    D(t_2,t_5)+D(t_4,t_6)-D(t_2,t_6)-D(t_4,t_5)=0.
    \]
    Adding these two equations yields
    \[
    D(t_1,t_5)+D(t_4,t_6)-D(t_1,t_6)-D(t_4,t_5)=0,
    \]
    which shows that the pair $(t_1,t_6)$ does not repel $(t_4,t_5)$.

\subsection*{Proof of \Cref{obs:sub}}

Since $(t_2,t_4)$ does not repel $(t_6,t_5)$, \Cref{obs:add} implies that $(t_2,t_4)$ does not repel $(t_1,t_5)$, and hence
    \[
    D(t_2,t_5)+D(t_1,t_4)-D(t_2,t_4)-D(t_1,t_5)=0.
    \]
    Similarly, because the pair $(t_1,t_4)$ does not repel $(t_2,t_3)$, we obtain
    \[
    D(t_1,t_3,)+D(t_2,t_4)-D(t_1,t_4)-D(t_2,t_3)=0.
    \]
    Adding these two equations yields
    \[
    D(t_1,t_3)+D(t_2,t_5)-D(t_1,t_5)-D(t_2,t_3)=0
    \]
    which shows that the pair $(t_1,t_5)$ does not repel $(t_2,t_3)$.
\section{An example showing non-uniqueness of edge weights}
\label{apd: example}

In this section, we provide an example OS metric together with its minimum realization, demonstrating the fact that while all graph structures are determined by the medial template computed in \Cref{sec: medial template}, the edge weights are not necessarily unique.

\begin{figure}[h]
\centering
\includegraphics[width=0.4\linewidth]{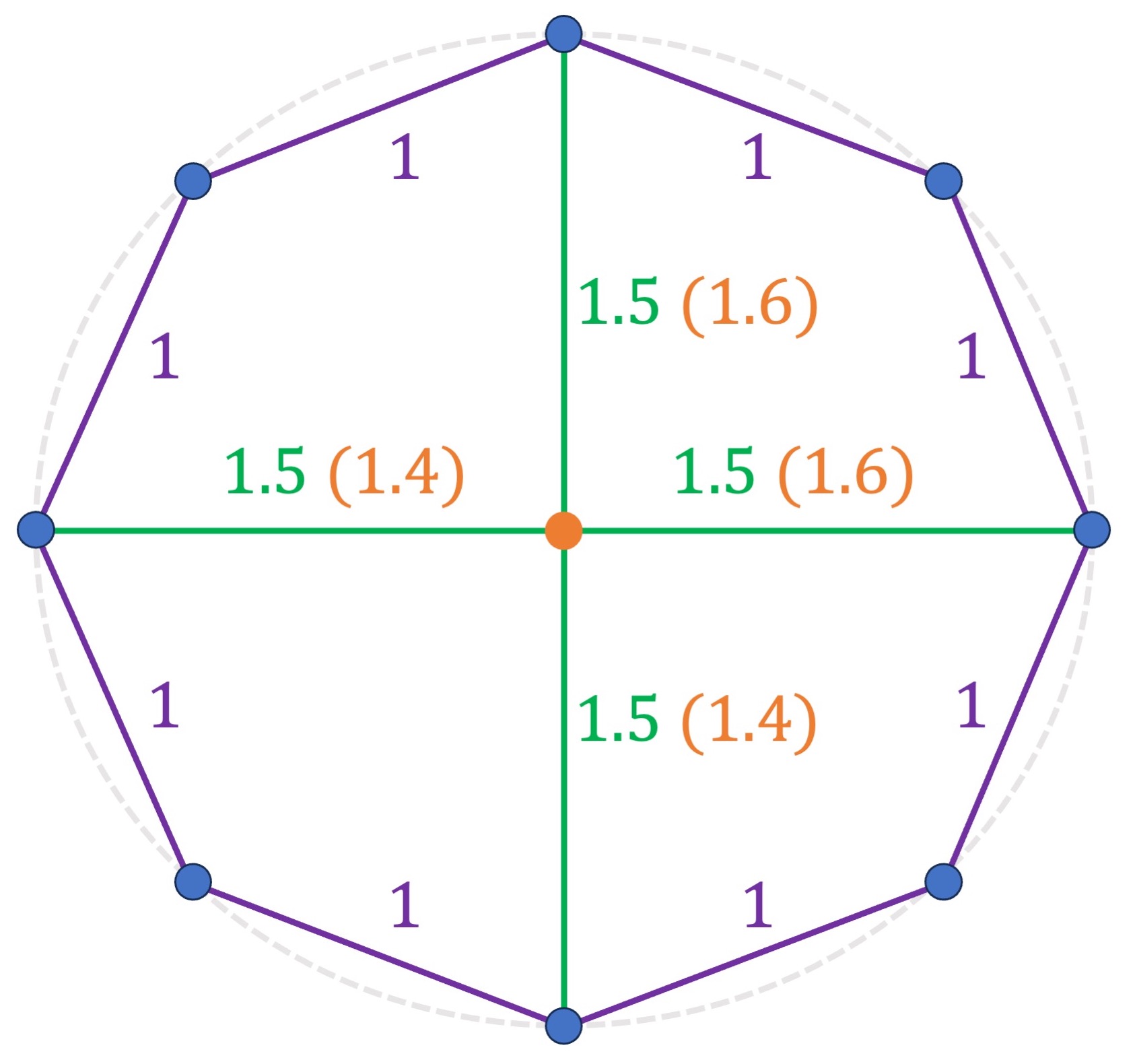}
\caption{Minimum realizations of metric $D$: green edge weights are not unique.\label{fig: example}}
\end{figure} 

Consider a metric $D$ on $8$ points $\set{t_1,\ldots,t_8}$ defined as follows (indices modulo $8$).
\begin{itemize}
\item for each $i$, $D(t_i,t_{i+1})=1$, $D(t_i,t_{i+2})=2$, and $D(t_i,t_{i+3})=3$; and
\item $D(t_1,t_{5})=D(t_3,t_7)=3$, and $D(t_2,t_{6})=D(t_4,t_8)=4$.
\end{itemize}

It is easy to verify that $D$ is an OS metric in that all Kalmanson conditions are satisfied with respect to the natural circular ordering $t_1,\ldots,t_8$ of terminals.
By our approach, the minimum realization of this has graph structure shown in \Cref{fig: example}, while the edge weights are not unique.

\printbibliography

\end{document}